\documentclass[a4paper,USenglish,cleveref,autoref,thm-restate]{lipics-v2021}

\usepackage[utf8]{inputenc}
\usepackage[T1]{fontenc}
\usepackage{amsmath,amsfonts,amsthm,amssymb}
\usepackage{xfrac}					
\usepackage[sanserif,basic]{complexity}
\usepackage[algoruled, vlined, linesnumbered]{algorithm2e}
\usepackage{setspace}
\usepackage{paralist} 				

\usepackage[dvipsnames]{xcolor}     
\usepackage{tikz}                   
\usepackage{xspace}

\usepackage{footnote}
\makesavenoteenv{tabular}
\makesavenoteenv{table}

\crefname{case}{case}{cases}
\creflabelformat{stat}{#2{\upshape#1}#3}

\crefname{ineq}{inequality}{inequalities}
\creflabelformat{ineq}{#2{(\upshape#1)}#3} 

\crefname{alg}{algorithm}{algorithms}
\creflabelformat{alg}{#2{#1}#3}

\let\oldsqrt\sqrt
\def\hksqrt{\mathpalette\DHLhksqrt}
\def\DHLhksqrt#1#2{\setbox0=\hbox{$#1\oldsqrt{#2\,}$}\dimen0=\ht0
   \advance\dimen0-0.2\ht0
   \setbox2=\hbox{\vrule height\ht0 depth -\dimen0}%
   {\box0\lower0.4pt\box2}}
\renewcommand\sqrt\hksqrt

\renewcommand{\leq}{\leqslant}
\renewcommand{\geq}{\geqslant}
\renewcommand{\le}{\leqslant}
\renewcommand{\ge}{\geqslant}

\newcommand*{\Otilde}{\widetilde{O}}

\newcommand*{\D}{\mathcal{D}}
\newcommand*{\G}{\mathcal{G}}

\newcommand*{\T}{\mathcal{T}}
\newcommand*{\X}{\mathcal{X}}
\newcommand*{\Y}{\mathcal{Y}}
\newcommand{\inedges}{\textsc{In-Edges}}

\newcommand*{\nwspace}{\hspace*{.1em}} 

\DeclareMathOperator{\diam}{diam}
\DeclareMathOperator{\ecc}{ecc}

\author{Davide Bilò}%
{Department of Information Engineering, Computer Science and Mathematics,\\
	University of L'Aquila, Italy}%
{davide.bilo@univaq.it}%
{https://orcid.org/0000-0003-3169-4300} 
{} 

\author{Keerti Choudhary}%
{Department of Computer Science and Engineering, Indian Institute of Technology Delhi, India}%
{keerti@iitd.ac.in}%
{https://orcid.org/0000-0002-8289-5930} 
{} 

\author{Sarel Cohen}%
{School of Computer Science, The Academic College of Tel Aviv-Yaffo, Israel}%
{sarelco@mta.ac.il}%
{https://orcid.org/0000-0003-4578-1245} 
{} 

\author{Tobias Friedrich}%
{Hasso Plattner Institute, University of Potsdam, Germany}%
{tobias.friedrich@hpi.de}%
{https://orcid.org/0000-0003-0076-6308} 
{} 

\author{Martin Schirneck}%
{Hasso Plattner Institute, University of Potsdam, Germany}%
{martin.schirneck@hpi.de}%
{https://orcid.org/0000-0001-7086-5577} 
{} 

\authorrunning{Davide Bilò, Keerti Choudhary, Sarel Cohen, Tobias Friedrich, and Martin Schirneck}
\Copyright{Davide Bilò, Keerti Choudhary, Sarel Cohen, Tobias Friedrich, and Martin Schirneck}

\title{Deterministic Sensitivity Oracles for Diameter, Eccentricities and All Pairs Distances}

\begin{CCSXML}
<ccs2012>
<concept>
<concept_id>10003752.10003809.10010031</concept_id>
<concept_desc>Theory of computation~Data structures design and analysis</concept_desc>
<concept_significance>500</concept_significance>
</concept>
<concept>
<concept_id>10003752.10010061.10010062</concept_id>
<concept_desc>Theory of computation~Pseudorandomness and derandomization</concept_desc>
<concept_significance>500</concept_significance>
</concept>
<concept>
<concept_id>10002950.10003624.10003633.10010917</concept_id>
<concept_desc>Mathematics of computing~Graph algorithms</concept_desc>
<concept_significance>300</concept_significance>
</concept>
</ccs2012>
\end{CCSXML}

\ccsdesc[500]{Theory of computation~Data structures design and analysis}
\ccsdesc[500]{Theory of computation~Pseudorandomness and derandomization}
\ccsdesc[300]{Mathematics of computing~Graph algorithms}

\keywords{derandomization, diameter, eccentricity, fault-tolerant data structure, sensitivity oracle, space lower bound}

\nolinenumbers 

\hideLIPIcs  

\EventEditors{ }
\EventNoEds{0}
\EventLongTitle{ }
\EventShortTitle{ }
\EventAcronym{ }
\EventYear{ }
\EventDate{ }
\EventLocation{ }
\EventLogo{ }
\SeriesVolume{}
\ArticleNo{ }

\begin{document}

\maketitle

\begin{abstract}
	We construct data structures for extremal and pairwise distances in directed graphs
	in the presence of transient edge failures.
	Henzinger et al.~[ITCS 2017] initiated the study of fault-tolerant (sensitivity) oracles
	for the diameter and vertex eccentricities.
	We extend this with a special focus on space efficiency.
	We present several new data structures,
	among them the first fault-tolerant eccentricity oracle for dual failures
	in subcubic space.
	We further prove lower bounds that show limits
	to approximation vs.\ space and diameter vs.\ space trade-offs for fault-tolerant oracles.
	They highlight key differences between data structures for undirected and directed graphs.
	
	Initially, our oracles are randomized leaning on a sampling technique
	frequently used in sensitivity analysis.
	Building on the work of Alon, Chechik, and Cohen [ICALP 2019]
	as well as Karthik and Parter [SODA 2021],
	we develop a hierarchical framework to derandomize fault-tolerant data structures.
	We first apply it to our own diameter and eccentricity oracles
	and then show its versatility by derandomizing algorithms from the literature:
	the distance sensitivity oracle of Ren [JCSS 2022] and
	the Single-Source Replacement Path algorithm of Chechik and Magen [ICALP 2020].
	This way, we obtain the first deterministic distance sensitivity oracle with subcubic
	preprocessing time.
\end{abstract}

\section{Introduction}
\label{sec:intro}

The problems of computing {\em shortest paths}, {\em graph diameter}, and {\em vertex eccentricities} 
are fundamental in many applications of both theoretical
and applied computer science.
We address these problems in the setting of {\em fault tolerance}.
The interest in this problem setting stems from 
the fact that most real-world networks are prone to failures.
These are unpredictable but usually small in numbers and transient
due to some simultaneous repair process.
However, in an error-prone network, it is not always practical to recompute distances from scratch 
even if the number of edge failures is bounded.
A commonly adopted solution is that of designing {\em $f$-edge fault-tolerant oracles}, that is, compact data structures that can quickly report  exact or approximate extremal and pairwise distances in the network after up to $f$ edges failed. 
These structures are also known as \emph{sensitivity oracles}, where the sensitivity is
the maximum number $f$ of supported failures.

Many known fault-tolerant data structures are randomized.
The algorithm that preprocesses the underlying network may depend on random bits
or the correctness of the oracle's answers is only guaranteed with some probability.
Besides the practical difficulties of working with (true) randomness in computing, 
it is an interesting question to what extend randomness as a resource 
is needed to obtain efficient fault-tolerant oracles.
In this paper, we show that for a wide range of applications randomness can be removed
with only a slight loss of performance, or even none at all in some cases.
For this, we develop a novel derandomization framework and combine it with known 
techniques to obtain the following results.

\begin{itemize}
    \item We present new deterministic $f$-edge fault-tolerant oracles that report the
    exact/approximate diameter and vertex eccentricities in directed graphs  
    and we show lower bounds charting the limits
	of approximation vs.\ space and diameter vs.\ space trade-offs.
    
    \item 
    We derandomize the single-failure {\em distance sensitivity oracle} (DSO) of Ren~\cite{Ren22Improved} that can report exact distance for any pair of vertices in constant time. Our result gives the first deterministic exact DSO with truly sub-cubic processing time and constant query time.
    
    \item We derandomize the algorithm of Chechik and Magen~\cite{ChMa19} for the  {\em Single-Source Replacement Paths}  (SSRP) problem on directed graphs, that is, the task of finding a shortest path from a distinguished source vertex to every target, for every possible edge failure.
\end{itemize}

We believe that our techniques are of independent interest
and can help derandomize also other algorithms and data structures in the fault-tolerant domain.
Throughout the paper, the underlying network is modeled by a directed graph $G=(V,E)$, 
possibly with weights on its edges, where $V$ is the set of $n$ vertices and $E$ the set of $m$ edges.

\subsection{Diameter and Eccentricity Oracles in Directed Graphs}

In~\autoref{sec:FDOs_digraphs}, we discuss fault-tolerant oracles 
for the diameter and vertex eccentricities of a directed graph. 
We abbreviate {\em $f$-edge fault-tolerant diameter oracle} as $f$-FDO and
{\em $f$-edge fault-tolerant eccentricity oracle} as $f$-FEO.
In case of a single failure, $f = 1$, we shorten this to FDO and FEO, respectively.
The problem of designing FDOs was originally raised by Henzinger et al.~\cite{HenzingerL0W17}
and recently received some renewed interest by Bilò et al.~\cite{BCFS21DiameterOracle_MFCS}. 
Although the major focus of the latter work was on undirected graphs, the authors also showed that, for directed graphs, one can compute, in $\Otilde(mn + n^2/\varepsilon)$ time,\footnote{%
	For a non-negative function $g = g(n)$,
	we use $\Otilde(g)$ to denote $O(g \cdot \textsf{polylog}(n))$.
}  an oracle of size\footnote{%
	Unless stated otherwise, we measure the space in the number of $O(\log n)$-bit machine words.
}
$O(m)$ and constant query time that guarantees a {\em stretch} of $1+\varepsilon$, 
that is, it reports an upper bound on the value of the diameter within a factor of $1{+}\varepsilon$,
for any $\varepsilon > 0$.

Bilò et al.~\cite{BCFS21DiameterOracle_MFCS} also gave a complementary space lower bound
showing that any fault-tolerant diameter oracle with a sufficiently small stretch
must take $\Omega(m)$ bits of space.
However, this is not the full picture in that their construction only holds for diameter $2$.
We show here that in reality there is a transition happening:
the larger the diameter, the more space we can save,
up to a point where even $o(m)$-space oracles become possible.
We aim at pinpointing this transition,
starting with a generalization of the bound in \cite{BCFS21DiameterOracle_MFCS}
to diameter up to $n/\sqrt{m}$.

\begin{restatable}{theorem}{lowerbounddigraph}
	Let $n,m,D\geq 3$ be integers with $D = O(\frac{n}{\sqrt{m}})$.
	Any FDO with stretch $\sigma < \frac{3}{2}-\frac{1}{D}$
	on $n$-vertex, $m$-edge unweighted directed graphs of diameter $D$ requires $\Omega(m)$ bits of space,
	regardless of the query time. 
\label{theorem:lower_bound_digraph}
\end{restatable}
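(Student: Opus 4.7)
I use an information-theoretic encoding argument. For each $X \in \{0,1\}^N$ with $N=\Omega(m)$, I build an $n$-vertex, $m$-edge digraph $G_X$ of diameter $D$ and designate, for every index $i \in [N]$, a publicly known edge $e_i$ (depending only on $i$ and the construction) such that
\[
\diam(G_X - e_i) \;=\; D \text{ if } x_i = 1, \qquad \diam(G_X - e_i) \;\ge\; \lceil 3D/2\rceil \text{ if } x_i = 0.
\]
Because $\sigma D < (3D-2)/2 < \lceil 3D/2\rceil$ under the hypothesis $\sigma < \tfrac{3}{2}-\tfrac{1}{D}$, any FDO returning a value $\widetilde D$ with $\widetilde D \le \sigma\cdot\diam(G_X-e_i)$ produces answers in two disjoint intervals depending on $x_i$, so the FDO decodes the bit from its answer to query $e_i$. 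Running the $N$ queries reconstructs $X$ from the oracle, forcing its representation to have at least $N=\Omega(m)$ bits.

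For the construction I generalize the diameter-$2$ lower bound of \cite{BCFS21DiameterOracle_MFCS} by an ``amplification'' trick. The main piece is a \emph{bit-encoding core}, a bipartite-style gadget of $\Theta(\sqrt m)\times\Theta(\sqrt m)$ size between two hub vertices $u$ and $v$, which writes $\Omega(m)$ bits into the presence or absence of edges, and has the property that for each encoded bit $x_i$ there is a designated edge $e_i$ whose removal leaves $d_{\mathrm{core}}(u,v)=2$ if $x_i=1$ but fully disconnects $u$ from $v$ inside the core if $x_i=0$. I embed this core into a stretched source-sink backbone: an entry path of length $\lfloor(D-2)/2\rfloor$ from a fresh source $s$ to $u$, a symmetric exit path of length $\lceil(D-2)/2\rceil$ from $v$ to a fresh sink $t$, plus a single \emph{detour path} from $u$ to $v$ of length $\lceil D/2\rceil+2$ that is traversed only when the core is cut. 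Direct distance computations yield $\diam(G_X) = d(s,t) = D$ on the intact graph and $d(s,t) = \lceil 3D/2\rceil$ after a failure that disconnects the core, while all other pair distances are easily checked to stay below these maxima. If the natural vertex count $\Theta(\sqrt m+D)$ of this gadget is below the target $n$, the remaining vertices are placed in isolated directed cycles of length at most $D$, costing one edge per vertex and neither increasing the diameter over reachable pairs nor consuming more than $O(m)$ edges. The precondition $D=O(n/\sqrt m)$ is exactly what guarantees that the vertex, edge, and diameter budgets can all be met simultaneously while leaving $\Omega(m)$ edges free inside the core.

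The key technical step I expect to be the hardest is building the bit-encoding core so that each discriminating failure \emph{fully disconnects} $u$ from $v$ inside the core, rather than only lengthening the internal $u$-$v$ distance by a single hop: without this strengthening the local $2\to 3$ change in the original BCFS21 gadget would produce only an additive $D\to D+1$ global blow-up, far short of the $\lceil 3D/2\rceil$ target and giving a ratio $1+1/D$ instead of $3/2-1/D$. Once such a core is at hand (obtained by routing every short $u$-$v$ route through a private bottleneck edge which serves as the corresponding $e_i$), the detour amplifies the local cut into the global ratio $3/2-1/D$, and the remainder of the proof is arithmetic bookkeeping on the vertex and edge counts of the backbone, core, and padding cycles.
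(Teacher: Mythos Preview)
Your high-level plan (encode $\Omega(m)$ bits into a family of graphs, then decode bit $i$ from the oracle's answer on a designated failing edge $e_i$) is exactly the paper's, but the concrete gadget you sketch cannot exist, and the padding you propose breaks the hypothesis of the theorem.

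\textbf{The single-pair core is impossible.} You ask for a core between two fixed hubs $u,v$ such that, for every $X\in\{0,1\}^N$ with $N=\Omega(m)$, the intact core has $d_{\mathrm{core}}(u,v)=2$ and, for each $i$, removing the fixed public edge $e_i$ disconnects $u$ from $v$ whenever $x_i=0$. Take $X=(0,\dots,0)$. Then every $e_i$ is a $u$--$v$ bridge in the core, so every $u$--$v$ path uses all $N$ edges $e_1,\dots,e_N$, forcing $d_{\mathrm{core}}(u,v)\ge N=\Omega(m)\gg 2$. This contradicts your own requirement that the intact core distance be $2$ (needed so that $\diam(G_X)=D$). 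The phrase ``routing every short $u$--$v$ route through a private bottleneck edge'' cannot rescue this: parallel bottlenecks mean no single $e_i$ disconnects, serial bottlenecks mean the distance is $\Omega(N)$. The paper sidesteps this obstruction by using \emph{many} source--sink pairs, one pair $(a_{1,i},d_{t,j})$ per matrix entry $(i,j)$, with the failing edge $e_{ij}=(b_i,c_j)$ specific to that pair; no single pair has to carry all bits.

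\textbf{Strong connectivity.} Your backbone $s\to u\to v\to t$ plus detour is one-directional; you never say how $t$ reaches $s$. Worse, padding the vertex count with ``isolated directed cycles'' yields a graph that is not strongly connected, hence has diameter $+\infty$, not $D$, so it is not an instance to which the theorem applies. In the paper's construction every layer $V_C\cup V_D$ has back-edges to the first layer of $V_A$, and the reservoir vertices $V_R$ are wired to two hub nodes $c_1$ and $a_{1,1}$; this keeps the graph strongly connected with diameter exactly $D$ while absorbing all leftover vertices and without interfering with the failure behaviour. Getting this right is a real part of the proof, not bookkeeping.

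In short, the amplification-by-paths intuition is right, but you need a multi-pair encoding (a $\sqrt m\times\sqrt m$ grid of sources and sinks with per-pair failing edges) rather than a single $u$--$v$ core, and you must wire the reservoir into the main component rather than leave it isolated.
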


\noindent
Given an oracle for the fault-tolerant eccentricities with query time $q$,
one can emulate a diameter oracle with query time $nq$ by taking the maximum over all vertices.
The information-theoretic lower bound of~\Cref{theorem:lower_bound_digraph}
is independent of the query time and therefore every FEO also must have size $\Omega(m)$.

Notably,~\Cref{theorem:lower_bound_digraph} implies that, for any $0<\delta\leq 1$ and 
all digraphs with $n^{1+\delta}$ edges and a relatively small diameter of $O(\sqrt{n^{1-\delta}})$,
an FDO of stretch essentially $3/2$ takes $\Omega(n^{1+\delta})$ bits of space.
As hinted above,
this approximation vs. space trade-off no longer holds when we consider directed graphs with large diameter of $\omega(n^{5/6})$, for which we can design FDOs of quasi-linear (in $n$) space and negligible stretch. 

\begin{restatable}{theorem}{largediam}
	Let $G$ be a directed graph with $n$ vertices, $m$ edges, and diameter $D=\omega(n^{5/6})$
	and let $\varepsilon = \frac{n^{5/6}}{D}=o(1)$. 
	There is an FDO for $G$ with stretch $1+\varepsilon$, preprocessing time $\Otilde(mn)$,
	space $O(n \nwspace \log^2 n)$, and constant query time.
\label{theorem:large_diam_n^5/6}
\end{restatable}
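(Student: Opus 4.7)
Since $D = \omega(n^{5/6})$ and $\varepsilon = n^{5/6}/D = o(1)$, any estimator that errs by at most an additive $n^{5/6}$ above $\diam(G-e)$ automatically achieves stretch $1{+}\varepsilon$, because $\diam(G-e) \geq \diam(G) = D$. The plan is to build such an estimator via a small random pivot set and then to encode its outputs succinctly.

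First, I would sample a set $H \subseteq V$ uniformly at random of size $|H| = \Theta(n^{1/6}\log n)$. A standard union bound, applied to all triples $(u, v, e)$ with $d_{G-e}(u, v) \geq n^{5/6}$ and to every $n^{5/6}$-prefix of the canonical shortest $u \to v$ path in $G-e$, shows that with high probability every shortest path in every $G-e$ of length $\geq n^{5/6}$ has its initial $n^{5/6}$-prefix intersecting $H$. The resulting randomness can then be removed via the hierarchical derandomization framework developed earlier in the paper.

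Second, I would define $\widetilde{D}(e) := n^{5/6} + \max_{h \in H} \ecc^{\mathrm{out}}_{G-e}(h)$ and verify the claimed stretch. Let $(s, t)$ realize $\diam(G-e)$. Since a shortest $s \to t$ path in $G-e$ has length $\geq D \geq n^{5/6}$, some $h^* \in H$ lies in its first $n^{5/6}$ vertices, giving $d_{G-e}(h^*, t) \geq \diam(G-e) - n^{5/6}$, so $\ecc^{\mathrm{out}}_{G-e}(h^*) \geq \diam(G-e) - n^{5/6}$ and thus $\widetilde{D}(e) \geq \diam(G-e)$. Conversely, $\ecc^{\mathrm{out}}_{G-e}(h) \leq \diam(G-e)$ for every $h$ together with $\diam(G-e) \geq D$ yield $\widetilde{D}(e) \leq \diam(G-e) + n^{5/6} \leq (1 + n^{5/6}/D)\diam(G-e) = (1{+}\varepsilon)\diam(G-e)$.

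Third and most delicate, I need to encode the function $e \mapsto M(e) := \max_{h \in H} \ecc^{\mathrm{out}}_{G-e}(h)$ within $O(n\log^2 n)$ words while supporting $O(1)$-time access. The idea is to store the default $M_0 := \max_{h \in H} \ecc^{\mathrm{out}}_G(h)$ together with an exception table containing only those edges $e$ for which $M(e) > M_0 + n^{5/6}$; smaller deviations are absorbed by the additive slack. For each pivot $h$, the edges triggering a jump of magnitude exceeding $n^{5/6}$ in $\ecc^{\mathrm{out}}_{G-e}(h)$ all lie on a canonical out-shortest-path tree from $h$, and I expect a detour/replacement-path analysis to bound their number by $\Otilde(n^{5/6})$ per pivot, for a total of $\Otilde(n^{1/6}) \cdot \Otilde(n^{5/6}) = \Otilde(n)$ exception entries stored in a perfect hash table keyed by the edge. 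Preprocessing is done by APSP in $\Otilde(mn)$ time plus a batched single-source replacement-eccentricity routine over the pivots in $H$, all within $\Otilde(mn)$. The hardest step will be the sparsity claim for the exception table, i.e.\ showing that per pivot at most $\Otilde(n^{5/6})$ edges can trigger a $(>n^{5/6})$-increase of the out-eccentricity; if this bound turns out to be weaker, a fallback is to cover the exceptions by the edges of a small union of canonical shortest-path trees rooted near the pivots.
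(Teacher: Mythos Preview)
Your pivot sampling and stretch analysis via a single pivot near the start of the diametral path are clean and correct (modulo a harmless factor of~$2$ in the additive slack once you also absorb the default value $M_0$). The genuine gap is the space bound. Your entire $O(n\log^2 n)$ claim rests on the assertion that, for each pivot $h$, at most $\Otilde(n^{5/6})$ edges cause $\ecc^{\mathrm{out}}_{G-e}(h)$ to jump by more than $n^{5/6}$. You give no argument for this; ``I expect a detour/replacement-path analysis'' is not a proof, and the only structural fact at hand is that such edges lie on the out-tree $T_h$, which has $n-1$ edges. Multiplying by $|H|=\Theta(n^{1/6}\log n)$ pivots yields $\Otilde(n^{7/6})$ exception entries, not $\Otilde(n)$. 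Your stated fallback (``a small union of canonical shortest-path trees rooted near the pivots'') runs into exactly the same arithmetic. Whether the sharper per-pivot bound you hope for actually holds is an interesting question, but without it the oracle as described does not achieve the stated space.

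The paper circumvents this by storing a different quantity. Rather than one pivot and its \emph{eccentricity}, it places \emph{two} pivots $b_u,b_v\in B$ on each long replacement path (one within the first $\varepsilon D/2$ vertices, one within the last) and records, for each ``critical'' edge $e$, the value $\phi(e)=\max_{b_1,b_2\in B} d(b_1,b_2,e)$. An edge is critical iff its failure increases some $B{\times}B$ distance; such edges must appear in every subgraph preserving all $B{\times}B$ distances, so Bodwin's bound on pairwise distance preservers caps their number at $O(n+n^{2/3}|B|^2)=O(n\log^2 n)$. That is the missing structural ingredient: moving from per-pivot eccentricities to pairwise pivot distances buys an off-the-shelf extremal bound, whereas bounding large-eccentricity-jump edges directly would require a new argument you have not supplied.
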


The gap between the stretch-size trade-offs provided in~\Cref{theorem:lower_bound_digraph} and~\Cref{theorem:large_diam_n^5/6}, respectively, suggests that there must be a threshold between $n/\sqrt{m}$ and $n^{5/6}$ where low-stretch FDOs of sub-linear size and constant query time become possible.
We further narrow this gap and aim to find the smallest value for the diameter for which one can design an FDO with $o(m)$ space and constant query time.
We show that this is possible for directed graphs with 
diameter $\omega((n^{4/3}\log n)/\sqrt{m})$.
We leave it as an open problem to determine the smallest function $g$
such that directed graphs with diameter $g(n)/\sqrt{m}$ admit an FDO with $o(m)$ space.
Our results show that $g$ is of order $\omega(n)$ and $O(n^{4/3} \log n)$.

\begin{restatable}{theorem}{largediamupperbound}
	Let $G$ be be a directed graph with $n$ vertices, $m$ edges, and diameter $\omega((n^{4/3}\log n)/(\varepsilon\sqrt{m}))$.
	For any $\varepsilon = \varepsilon(n,m) > 0$,
	there is an FDO for $G$ with stretch $1+\varepsilon$, preprocessing time $\Otilde(mn)$,
	space $o(m)$, and constant query time.
\label{theorem:large_diam_upper_bound}
\end{restatable}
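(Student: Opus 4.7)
My plan continues the hitting-set strategy used in~\Cref{theorem:large_diam_n^5/6} and combines it with a compact per-pivot-pair replacement structure that exploits the shared structure of pivot-pair shortest paths.

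The first step is to draw a random pivot set $H \subseteq V$ of size $|H| = \Theta(n \log n/(\varepsilon D))$. Standard Chernoff bounds imply that, with high probability, $H$ meets every path of length at least $\varepsilon D/c$ in $G$, and, by monotonicity of distances, also in every $G - e$. Consequently, for any failure $e$ with $D(G - e) \geq \varepsilon D$, a diametral path of $G - e$ contains two pivots $h_1, h_2$ satisfying
\[
D(G - e) \;\leq\; d_{G - e}(h_1, h_2) + \varepsilon D,
\]
so it suffices to store, for every pair in $H\times H$ and every edge $e$, a $(1+\varepsilon)$-approximation of $d_{G-e}(h_1, h_2)$. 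Rescaling $\varepsilon$ restores the desired stretch.

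The second step is to build this pivot-pair replacement structure compactly. For each pair $(h_1, h_2) \in H^2$, I take the fault-free shortest path $\pi(h_1, h_2)$ from the all-pairs BFS already run during the $\Otilde(mn)$ preprocessing; for an edge $e \notin \pi(h_1, h_2)$ the pivot distance is unchanged, so only the edges on $\pi(h_1, h_2)$ carry replacement information, which I quantise in powers of $1+\varepsilon$. To support constant-time queries, I additionally tag each edge $e$ of $G$ with the list of pivot pairs whose path contains $e$, together with their quantised replacement values; this incurs an overhead proportional to the total number of (pair, edge) incidences.

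The third and decisive step is the space analysis. The total incidence count $\sum_{h_1, h_2 \in H}|\pi(h_1, h_2)|$ is naively bounded by $|H|^2 D$, but this has to be tightened by exploiting the shared BFS structure of random pivots. Carefully balancing the pivot density against the amortised length of pivot-pair shortest paths reduces the total storage to $o(m)$ precisely when $D = \omega(n^{4/3}\log n/(\varepsilon\sqrt{m}))$. A per-edge aggregate-max structure of size $o(m)$, built from the sparse incidence structure, then yields constant query time.

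The main obstacle is this space-balancing argument: the worst-case bound $|H|^2 D$ does not reach the exponent $4/3$ on its own, and a tighter argument is required that averages, over the random pivots, the number of pivot-pair shortest paths crossing a fixed edge. Translating that expectation into a worst-case data structure while preserving $O(1)$ query time is what pins down the exponent $4/3$, and it is precisely this looseness that leaves the residual $n^{1/3}$ gap to the lower bound of~\Cref{theorem:lower_bound_digraph}.
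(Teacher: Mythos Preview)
Your plan correctly identifies the pivot-sampling reduction to all-pairs replacement distances among $|B| = \Otilde(n/(\varepsilon D))$ pivots, and you correctly isolate the space analysis as the bottleneck. However, the crucial step is left as an acknowledged obstacle rather than resolved: you propose to beat the naive $|B|^2 D$ bound via an ``averaging over random pivots'' argument, but no such argument is given, and it is unclear how randomness of the pivot set would help here, since the relevant quantity---the number of pivot-pair shortest paths through a fixed edge---is determined by the graph once the pivots are drawn, and can be $\Theta(|B|^2)$ on adversarial inputs. Your per-(pair,\,edge) storage model therefore does not reach the $4/3$ exponent.

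The paper closes this gap by a different, structural idea. Rather than storing anything per pair, it stores a \emph{single} value $\phi(e) = \max_{b_1,b_2\in B} d(b_1,b_2,e)$ per edge, and only for those edges $e$ in the set $\X = \{\, e : d(b_1,b_2,e) > d(b_1,b_2) \text{ for some } b_1,b_2 \in B\,\}$. The query returns $\phi(e) + \varepsilon D$ if $e \in \X$ and $\diam(G) + \varepsilon D$ otherwise (plus a separate $O(n)$-size table for strong bridges). The size bound on $\X$ comes from the observation that every edge in $\X$ must belong to \emph{every} subgraph that preserves all $B\times B$ distances, combined with Bodwin's theorem that such a preserver exists with $O(n + n^{2/3}|B|^2)$ edges. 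This yields space $O(n + n^{8/3}\log^2 n/(\varepsilon^2 D^2))$, which is $o(m)$ precisely when $D = \omega(n^{4/3}\log n/(\varepsilon\sqrt{m}))$. The $4/3$ exponent is thus a direct consequence of the $n^{2/3}$ in Bodwin's preserver bound, not of any averaging argument.
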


For the sake of readability, the FDOs in \autoref{sec:FDOs_digraphs} are randomized. Later, in~\autoref{sec:derandomization_technique}, we describe our derandomization framework and show how to apply it to both FDOs.

We now move our attention to the case multiple edge failures and give bounds
in terms of $f$ and $n$ on the minimum space requirement of $f$-FDOs.
Bilò et al.~\cite{BCFS21DiameterOracle_MFCS} designed an $f$-FDO for \emph{undirected} graphs of stretch $f+2$ that takes space $\Otilde(fn)$. The size of this oracle is optimal up to polylogarithmic factors.
In the next theorem, we show that such compact oracles are impossible for directed graphs,
even when allowing arbitrarily large stretch

\begin{restatable}{theorem}{lowerboundarbitrarystretch}
	Let $n,f$ be positive integers such that $2^{f/2}=O(n)$.
	Any $f$-FDO with an arbitrary finite stretch on $n$-vertex directed graphs
	requires $\Omega(2^{f/2} \nwspace n)$ bits of space, regardless of the query time.
\label{theorem:lower_bound_arbitrary_stretch}
\end{restatable}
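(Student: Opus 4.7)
The plan is an information-theoretic encoding argument. We construct a family $\mathcal{F}$ of $n$-vertex directed graphs of size $2^{\Omega(n\cdot 2^{f/2})}$ such that, for each $G_{\vec M}\in\mathcal{F}$ and each bit of its encoding, there is a failure set $F$ of at most $f$ edges that forces $\diam(G_{\vec M}-F)$ to be finite if the bit is $1$ and infinite if the bit is $0$. Since an oracle with finite stretch $\sigma$ must return a finite value in the former case and an unbounded one in the latter, its internal state must differ across all members of $\mathcal{F}$, giving the lower bound $\log|\mathcal{F}|=\Omega(n\cdot 2^{f/2})$ bits regardless of the query time.

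Setting $k=\lfloor f/2\rfloor$ and $N=2^k$, we would partition the vertex set into $p=\Theta(n/N)$ vertex-disjoint blocks $H_1,\ldots,H_p$ of $\Theta(N)$ vertices each. Each block $H_g$ has a source $s_g$, a sink $t_g$, a depth-$k$ binary \emph{row selector} tree rooted at $s_g$ with leaves $r_1^g,\ldots,r_N^g$, and a depth-$k$ binary \emph{column selector} tree whose leaves $c_1^g,\ldots,c_N^g$ all feed into $t_g$. The matrix $M_g\in\{0,1\}^{N\times N}$ is encoded by bit edges $r_i^g\to c_j^g$ (present iff $M_g[i,j]=1$), with a default edge $r_1^g\to c_1^g$ always present. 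We additionally install utility edges $t_g\to v$ and $v\to s_g$ for every $v\in H_g$ and wire the blocks into a directed cycle $t_1\to s_2\to\cdots\to t_p\to s_1$; the resulting $G_{\vec M}$ is then strongly connected with finite diameter, and the family $\mathcal{F}$ has size $2^{p(N^2-1)}=2^{\Omega(n\cdot 2^{f/2})}$.

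To reveal $M_g[i,j]$ for $(i,j)\neq(1,1)$, the query failure set consists of the $k$ off-path edges of the row tree along the path to $r_i^g$ and the $k$ analogous column-tree edges, totalling $2k\leq f$ failures, all inside $H_g$. No external edge is touched, so $G_{\vec M}$ remains strongly connected iff $s_g$ still reaches $t_g$ inside $H_g$; the selectors ensure that $r_i^g$ is the only row-leaf reachable from $s_g$ and that $c_j^g$ is the only column-leaf still reaching $t_g$, so every surviving $s_g$-to-$t_g$ path must use the bit edge $r_i^g\to c_j^g$, which is present iff $M_g[i,j]=1$. The main technical obstacle will be verifying that the auxiliary edges do not spoil the gadget: the utility edges must restore strong connectivity for the vertices cut off by the selectors but cannot form an alternative $s_g$-to-$t_g$ path (they all point to $s_g$ or originate at $t_g$), and the default bit edge $r_1^g\to c_1^g$, which keeps $G_{\vec M}$ connected before any failure, becomes inaccessible from $s_g$ whenever the row selector isolates a different $r_i^g$.
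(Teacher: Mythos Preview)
Your proposal is correct and follows essentially the same approach as the paper: encode $\Theta(n/2^{f/2})$ binary matrices of dimension $2^{f/2}\times 2^{f/2}$ via bipartite edges between the leaves of two binary selector trees per block, chain the blocks, and use the $f$ off-path tree edges as the failure set to isolate a single (row, column) pair so that strong connectivity survives iff the corresponding matrix bit is~$1$. Your block-local utility edges $t_g\to v$ and $v\to s_g$ together with the default bit edge $r_1^g\to c_1^g$ are a slightly more careful variant of the paper's single global fan-out from $t_\alpha$, but the two constructions are otherwise identical.
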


The lower bound of~\autoref{theorem:lower_bound_arbitrary_stretch} marks an exponential-in-$f$ separation between the undirected and directed setting.
The directed graph used in the proof is inspired by the lower-bound construction used by Baswana et al.~\cite{BaswanaCR18} for the {\em $f$-edge fault-tolerant Single-Source Reachability} problem. This problem asks to compute the sparsest subgraph $H$ of a directed graph $G$ that preserves reachability from a designated source vertex $s$, that is, for every vertex $v$ and every set $F$ of $|F|\leq f$ edge failures, there is path from $s$ to $v$ in $H$ that avoids every edge in $F$
if and only if there is such a path in $G$. Baswana et al.~\cite{BaswanaCR18} provided a class of directed graphs for which any subgraph preserving single-source reachability with sensitivity $f$ has $\Omega(2^f n)$ edges. 
Our lower bound requires non-trivial extensions of their construction as it needs to satisfy several additional properties.
For example, the directed graph in~\cite{BaswanaCR18} has unbounded diameter,
while any lower bound for FDOs requires strongly connected graphs. 

We also consider the design of fault-tolerant eccentricity oracles
for general directed graphs as well as directed acyclic graphs (DAGs).
For the single-failure case and exact eccentricities, there is a folklore solution using the DSO of Bernstein and Karger~\cite{BeKa09} that runs in $\Otilde(n^3)$ time.
Henzinger et al.~\cite{HenzingerL0W17} showed how to trade stretch for running time
and presented an $(1{+}\varepsilon)$-approximate solution with preprocessing time
$\Otilde(mn + n^{3/2} \sqrt{Dm/\varepsilon})$, where $D$ denotes the diameter of the underlying graph.
Both oracles build a look-up table of size $O(n^2)$ using the fact that, for any vertex $v$,
only the failure of an edge on a shortest path tree rooted in $v$ can change the eccentricity of $v$.
The table allows for a constant query time but generalizing this to multiple failures $f \ge 2$
would take $\Omega(n^{f+1})$ space.
We show how to do better than that.
We give a meta-theorem that turns any exact or approximate DSO for pairwise distances
into an FEO for eccentricities.
Plugging in any compact DSO for multiple failures then immediately gives
a space improvement for the FEO.
In the following, with stretch $\sigma = 1$, we mean exact oracles.

\begin{restatable}{theorem}{meta}
\label{theorem:meta}
	Let $G$ be a (undirected or directed and possibly edge-weighted) graph 
	with $n$ vertices and $m$ edges.
	Given access to a DSO for $G$ with sensitivity $f$, stretch $\sigma \ge 1$,  
	preprocessing time $P$, space $S$, and query time $Q$,
	one can construct an $f$-FEO for $G$ with stretch $1+\sigma$, preprocessing time $O(mn + P)$,
	space $O(n+S)$, and $O(f \cdot Q)$ query time.
\end{restatable}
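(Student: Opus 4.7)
The plan is to reduce every eccentricity query in $G-F$ to at most $2f$ pairwise-distance calls to the DSO, using the endpoints of the failed edges as ``entry points'' into the rest of the graph. In the preprocessing phase I would run one single-source shortest-path computation (BFS for unweighted, Dijkstra for weighted graphs) from every vertex of $G$, taking $O(mn)$ time in total, and record only the scalar eccentricity $\ecc_G(v)$ for each $v \in V$. These scalars occupy $O(n)$ words of storage on top of the $S$ words of the DSO, so the total preprocessing cost is $O(mn+P)$ and the total space is $O(n+S)$.

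At query time, for a source $v$ and a failure set $F$ with $|F|\le f$, I would issue one DSO call $\widehat d_{G-F}(v,y)$ for every endpoint $y$ of every $e\in F$ and return
\[
    \widehat e(v,F) \;:=\; \ecc_G(v) \;+\; \max_{y}\, \widehat d_{G-F}(v,y),
\]
which takes $O(f\cdot Q)$ time overall. The upper bound $\widehat e(v,F) \le (1+\sigma)\,\ecc_{G-F}(v)$ is immediate from $\widehat d_{G-F}(v,y) \le \sigma\, d_{G-F}(v,y) \le \sigma\,\ecc_{G-F}(v)$ combined with $\ecc_G(v) \le \ecc_{G-F}(v)$. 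For the matching lower bound $\widehat e(v,F) \ge \ecc_{G-F}(v)$, I would fix a vertex $w$ realising $\ecc_{G-F}(v)$ and a $G$-shortest $v$-to-$w$ path $P$. If $P$ uses no edge of $F$, then $\ecc_{G-F}(v) = d_G(v,w) \le \ecc_G(v) \le \widehat e(v,F)$, and we are done. Otherwise I would take the \emph{last} failed edge $(x,y)$ along $P$: the suffix of $P$ from $y$ to $w$ uses no edge of $F$, so $d_{G-F}(y,w) = d_G(y,w) \le d_G(v,w) \le \ecc_G(v)$, and the triangle inequality gives $\ecc_{G-F}(v) \le d_{G-F}(v,y)+\ecc_G(v) \le \widehat d_{G-F}(v,y) + \ecc_G(v) \le \widehat e(v,F)$ since $y$ is among the queried endpoints. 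When $w$ is unreachable in $G-F$, the same argument forces $d_{G-F}(v,y) = \infty$ (the suffix from $y$ to $w$ being intact), and $\widehat e(v,F)$ correctly evaluates to $\infty$.

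The only delicate step is the choice of the \emph{last} failed edge along $P$ rather than an arbitrary one: this is what keeps the suffix from $y$ to $w$ free of failures and yields the equality $d_{G-F}(y,w)=d_G(y,w)\le\ecc_G(v)$; selecting an earlier failed edge could leave later failures on the suffix and break this bound, so the triangle-inequality detour could no longer be absorbed into the additive $\ecc_G(v)\le\ecc_{G-F}(v)$ that produces the $+1$ in the $1+\sigma$ stretch. Once this observation is in place, the stated preprocessing time $O(mn+P)$, storage $O(n+S)$, query time $O(f\cdot Q)$, and stretch $1+\sigma$ all follow directly from the two displayed inequalities together with the black-box costs of the DSO.
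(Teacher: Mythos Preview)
Your proposal is correct and follows essentially the same approach as the paper: store only the scalar eccentricities $\ecc_G(v)$, and at query time return $\ecc_G(v)+\max_y \widehat d_{G-F}(v,y)$, arguing the lower bound by taking the failed edge on a shortest $v$--$w$ path whose head is closest to the target so that the remaining suffix is fault-free. The only cosmetic difference is that the paper queries just the $f$ head endpoints $y_i$ of the directed edges $(x_i,y_i)\in F$ (using a shortest-path tree rooted at $s$ in the analysis), while you query all $2f$ endpoints; your variant is arguably cleaner for the undirected case and still within the stated $O(f\cdot Q)$ query time.
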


There are multiple distance oracles to choose from,
all with different strengths and weaknesses.
When using the DSO for of Duan and Pettie~\cite{DP09},
we get in polynomial time a $2$-approximate $2$-FEO with space $O(n^2 \log^3 n)$.
To the best of our knowledge, this is the first eccentricity oracle for dual failures in subcubic space.
Van den Brand and Saranurak~\cite{BrSa19} gave a DSO supporting an arbitrary number of failures $f$.
On directed graphs with integer edge weights in the range $[-M,M]$
it has polynomial space and preprocessing time,
but a query time that depends both on the sensitivity and the graph size.
Let $\omega<2.37286$ be the matrix multiplication exponent~\cite{AlmanVWilliams21RefinedLaserMethod}.
Plugging the DSO in~\cite{BrSa19} into our reduction gives 
an $f$-FEO with stretch $2$, $O(M n^3)$ space\footnote{%
	In~\cite{BrSa19}, the space of the DSO is phrased as $O(M n^3 \log n)$ \emph{bits}.
},
and query time $O(M n f^{\omega+1})$.
On undirected graphs, we can make the query time independent of $n$
by applying the very recent DSO by Duan and Ren~\cite{DuanRen21ExactDistancesMultipleFailures_STOC}
with $O(fn^4)$ space and a query time of $f^{O(f)}$.
However, the preprocessing of the latter is only polynomial for constant $f$.
Since our reduction also applies to approximate oracles,
we get, for any $f = o(\log n/\log\log n)$ and $\varepsilon > 0$,
an $f$-FEO in polynomial time with stretch $(2+\varepsilon)$,
space $O(n^2 ((\log n)/\varepsilon)^f f)$ and query time $O(f^6 \log n)$
via the DSO by Chechik et al.~\cite{ChCoFiKa17}.

As already mentioned above, the $\Omega(2^{f/2} \nwspace n)$-bits lower bound
in \autoref{theorem:lower_bound_arbitrary_stretch} also holds for FEOs.
On DAGs, however, we can improve upon this and obtain a space requirement that is reminiscent of the one Bilò et al.~\cite{BCFS21DiameterOracle_MFCS} gave for \emph{undirected} graphs.
Note that in a DAG at most one vertex can have bounded eccentricity.

\begin{restatable}{theorem}{eccDAGupperbound}
\label{theorem:ecc_DAG_upper_bound}
	Let $G$ be a directed acyclic graph with, $m$ real-weighted edges, $n$ vertices, and a
	distinguished source vertex $s$. 
	For any integer $f$, there is an $f$-FEO for $G$ with stretch $f$, preprocessing time $\Otilde(m)$,
	space $O(nf)$, and $O(f)$ query time.
\end{restatable}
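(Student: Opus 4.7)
Since $G$ is a DAG with distinguished source $s$, only $s$ itself can have finite eccentricity in $G$ or in any $G-F$, so $\ecc(s,G-F)$ equals the greatest shortest-path distance from $s$ after the failures (or $\infty$ if a descendant becomes unreachable). The plan is to build a compact fault-tolerant $s$-SSSP data structure and read off its maximum at query time; the DAG structure lets me compute shortest-path distances by a single topological sweep, which keeps the preprocessing close to linear, and it guarantees that each $s$--$v$ path in $G-F$ splits along failures in a controlled way, which is what will drive the stretch analysis.

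I would first compute a shortest-path tree $T$ of $G$ rooted at $s$ in $O(m+n)$ time by topological-order relaxation, storing $d_G(s,v)$ for every vertex $v$. Then I would attach to each $v$ a label $L_v$ of size $O(f)$ recording, in monotone order, distances $\delta_0(v)\le\delta_1(v)\le\cdots\le\delta_f(v)$, where $\delta_i(v)$ is the length of a shortest $s$--$v$ path after deleting the $i$ tree edges of $P_v$ whose individual removal would inflate $d_G(s,v)$ the most, together with the identities of those edges. The labels are produced by a single topological DP: at $v$, the sorted labels of its in-neighbours are merged with the incoming edge weight in $O(f)$ time per incoming edge, yielding preprocessing $\Otilde(m)$ and total space $O(nf)$.

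Given $F=\{e_1,\ldots,e_f\}$ at query time, the oracle identifies in $O(f)$ time a ``witness'' vertex $v^\star$ and index $i^\star\le|F\cap P_{v^\star}|$ such that the first $i^\star$ edges recorded in $L_{v^\star}$ all belong to $F$ and $\delta_{i^\star}(v^\star)$ is maximal, and returns this value. Since $\delta_{i^\star}(v^\star)$ is a valid $s$--$v^\star$ distance in $G-F'$ for some $F'\subseteq F$, monotonicity gives the lower bound $\delta_{i^\star}(v^\star)\le\ecc(s,G-F)$. For the matching upper bound I would decompose, for an arbitrary vertex $v$, the shortest $s$--$v$ path in $G-F$ into at most $|F\cap P_v|+1\le f+1$ failure-free sub-paths, bound each by the single-edge replacement detour stored at its starting endpoint, and sum over the at most $f$ segments to conclude $\ecc(s,G-F)\le f\cdot\delta_{i^\star}(v^\star)$.

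The principal obstacle is reconciling the $\Otilde(m)$ preprocessing bound with the need to obtain $O(f)$ replacement distances per vertex: a naive approach of running one replacement SPT per critical tree edge would cost $O(mn)$. The key trick I plan to exploit is that $v$'s label is essentially determined by those of its in-neighbours via an $f$-best-values merge along the topological order, so that no separate SPT per edge is required. The second delicate point is verifying that these compact per-vertex labels still sustain a \emph{global} stretch-$f$ bound on the eccentricity rather than a mere per-vertex bound; this is where the failure-free-segment decomposition of $s$--$v$ paths in $G-F$, made possible by the DAG structure, is essential.
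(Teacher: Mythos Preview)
Your proposal has genuine gaps, and the overall scheme is quite different from what the paper does.

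\textbf{The central gap.} Your labels $\delta_i(v)$ are indexed by the $i$ tree edges on $P_v$ whose \emph{individual} removal inflates $d_G(s,v)$ the most. At query time you only use $\delta_{i^\star}(v^\star)$ when the first $i^\star$ of \emph{those precomputed} edges happen to lie in $F$. But the failed edges in $F$ need not be the ``most inflating'' ones for any vertex. Already for $f=1$: if $F=\{e\}$ and the eccentricity-realizing vertex $v$ has some other tree edge $e'$ on $P_v$ with $d(s,v,e')>d(s,v,e)$, then the first edge recorded in $L_v$ is $e'\notin F$, forcing $i^\star=0$ at $v$ and giving only $\delta_0(v)=d_G(s,v)$. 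Your maximum over all admissible $(v^\star,i^\star)$ can then be much smaller than $\ecc_{G-e}(s)$, so the claimed inequality $\ecc_{G-F}(s)\le f\cdot\delta_{i^\star}(v^\star)$ fails. The ``decompose into failure-free segments and bound each by a stored single-edge detour'' sketch does not rescue this, because the detours you stored are for the wrong edges.

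\textbf{Secondary gaps.} The $O(f)$ query time is not explained: locating a maximizing witness $v^\star$ over all $n$ vertices is an $\Omega(n)$ scan unless you precompute additional structure, which you do not describe. The topological DP is also underspecified; it is unclear how merging in-neighbour labels produces $\delta_i(v)$, since $\delta_i(v)$ is defined via simultaneous removal of $i$ edges on the tree path $P_v$, and only the tree parent shares that path.

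\textbf{What the paper does instead.} The paper stores, for every vertex $y$, the $f{+}1$ non-tree in-edges of $y$ that are smallest under the reduced weight $wt^*(x,y)=d(s,x)+wt(x,y)-d(s,y)$. On query $F$, it looks only at the failed \emph{tree} edges $(x,y)\in F_0$, picks for each the cheapest surviving in-edge $(\tilde x,y)$ from that short list, sets $\phi(x,y)=wt^*(\tilde x,y)$, and returns $\ecc_G(s)+\sum_{(x,y)\in F_0}\phi(x,y)$. Acyclicity is used to show each $\phi(x,y)\le\ecc_{G-F}(s)$ (the repaired in-edge gives the exact $s$--$y$ distance in $G-((x,y)\cup F_1)$), yielding the upper bound; a path-stitching argument in the tree $T-F$ gives the lower bound. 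The key conceptual difference is that the paper stores per-vertex information (alternative in-edges) usable for \emph{any} failure at that vertex, whereas your labels commit in advance to a fixed set of ``worst'' edges.
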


All the results for $f$-FDOs and $f$-FEOs are presented for edge failures.
However, they also hold for vertex failures using well-known transformation techniques for directed graphs.\footnote{%
	Indeed, we can transform the directed graph $G$ into some graph $G'$ with $2n$ vertices. 
	We represent each vertex $v$ of $G$ with an edge $(v^-,v^+)$ in $G'$, 
	and replace each edge $(u,v)$ of $G$ with the edge $(u^+,v^-)$ in $G'$.
	For edge-weighted $G$, the weight of the new vertex-edge is set to $0$ keeping eccentricities.
	For unweighted $G$, the eccentricity of $v$ in any subgraph $H$ of $G$ is half the eccentricity of $v^-$ in $H' \subseteq G'$.}

\subsection{Derandomization Technique}
\label{subsec:intro_derandomize}

We now turn to the derandomization of fault-tolerant data structures.
In \autoref{sec:derandomization_technique}, we develop the {\em Hierarchical Double Pivots Hitting Sets} (HDPH) algorithm as the center piece of a framework to derandomize known replacement paths algorithms and oracles. The aim of the HDPH algorithm is to compute a sequence of sets $B_1, \dots, B_{\log n} \subseteq V(G)$ such that each $B_i$ has size $\Otilde(n/2^i)$ and hits a set ${\cal P}_i$ of (replacement) paths each of length $\Omega(2^i)$. Unfortunately, the paths ${\cal P}_i$ that need to be hit by $B_i$ are not known in advance. Our algorithm fixes this issue by iteratively
computing the set of paths ${\cal P}_i$ using the previous sets $B_0,\dots,B_{i-1}$. 
The algorithm relies on the ability of the oracle we want to derandomize to be {\em path-reporting}, that is, to report a path representing the exact or approximate distance between the queried vertices for DSOs, the diameter for FDOs, or the vertex eccentricity for FEOs. 
We show how to implement the HDPH algorithm to derandomize 
the FDOs in \Cref{theorem:large_diam_n^5/6,theorem:large_diam_upper_bound},
the DSO of Ren~\cite{Ren22Improved} for directed graphs with integer edge weights in the range $[1,M]$, and the algorithm of Chechik and Magen~\cite{ChMa19} for the SSRP problem in directed graphs. 

\subparagraph{Distance Sensitivity Oracles.}
The concept of DSOs was introduced by Demetrescu et al.~\cite{DeThChRa08} who showed how to compute an exact DSO of size $O(n^2 \log n)$ and constant query time in $\Otilde(mn^2)$ time. Later, Bernstein and Karger~\cite{BeKa09} improved the preprocessing time to $\Otilde(mn)$ and Duan and Zhang~\cite{DuanZ17a} reduced the space to $O(n^2)$, which is asymptotically optimal. Algebraic algorithms are known to further improve the preprocessing times, if one is willing to employ fast matrix multiplication, see~\cite{ChCo20,GuRen21ConstructingDSO_ICALP} and the references therein.
For more results on approximate DSOs for both single and multiple failures, see~\cite{ChCoFiKa17,DuanGR21,DP09}.

We combine the HDPH framework with a recent breakthrough result by Karthik and Parter~\cite{KarthikParter21DeterministicRPC}
to derandomize the path-reporting DSO of Ren~\cite{Ren22Improved} for directed graphs with integer edge weights in the range $[1,M]$.
This was the first DSO that achieved a constant query time with a randomized subcubic preprocessing time
of $O(Mn^{2.7233})$.
On undirected graphs, the preprocessing improves to $\Otilde(Mn^{(\omega + 3)/2}) = O(Mn^{2.6865})$.
Our derandomization of Ren's DSOs in both settings incurs a slight loss of efficiency.
Nevertheless, we obtain the first deterministic DSO with constant query time and truly sub-cubic preprocessing. This improves significantly over the result by Alon, Chechik, and Cohen~\cite{AlonChechikCohen19CombinatorialRP} who designed a DSO with $O(mn^{4-\alpha})$ preprocessing time
and  $\Otilde(n^{2\alpha})$ query time, for any $\alpha \in (0,1)$.

\begin{theorem}
\label{thm:deterministic_DSO}
	For any $n$-vertex directed graph $G$ with integer edge weights in the range $[1,M]$,
	there exists a deterministic path-reporting DSO with $O(Mn^{2.8068})$ preprocessing time 
	and constant query time.
	If $G$ is undirected, the preprocessing time decreases to $\Otilde(Mn^{(\omega + 6)/3}) = O(Mn^{2.7910})$.
\end{theorem}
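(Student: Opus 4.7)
The plan is to apply the HDPH framework of \autoref{sec:derandomization_technique} to Ren's randomized DSO~\cite{Ren22Improved}, whose only source of randomness is the sampling of a hierarchy of pivot sets $B_1,\dots,B_{\log n}$ with $|B_i|=\Otilde(n/2^i)$ used to hit shortest replacement paths of length $\Omega(2^i)$. The crucial prerequisite is that Ren's DSO is path-reporting: on a query $(s,t,e)$ it returns an actual shortest path in $G-e$, not merely the distance $d_{G-e}(s,t)$. This is precisely the interface that HDPH, as described in \autoref{subsec:intro_derandomize}, was designed to exploit.

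The first step is to interleave HDPH with Ren's preprocessing in a bottom-up fashion. Having already deterministically fixed $B_1,\dots,B_{i-1}$ and built the portion of Ren's data structure that correctly handles replacement paths of length $<2^i$, we use its path-reporting primitive to materialize the collection $\mathcal{P}_i$ of long shortest replacement paths that still need to be covered. Feeding $\mathcal{P}_i$ into HDPH yields a deterministic $B_i$ of size $\Otilde(n/2^i)$ hitting all paths in $\mathcal{P}_i$, after which we extend Ren's construction to scale $i$ and proceed to the next level. Since the query data structures emitted at the end are identical in form to Ren's, the constant query time and the path-reporting property are preserved.

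The accounting step is where the exponents of \Cref{thm:deterministic_DSO} emerge. At each scale $i$, the HDPH invocation performs an iterative hitting-set computation in the spirit of Karthik and Parter~\cite{KarthikParter21DeterministicRPC}; summing its cost geometrically over the $\log n$ scales contributes a fixed additive increment to the exponent of $n$ in Ren's randomized preprocessing. In the directed case, starting from $O(Mn^{2.7233})$ this increment yields $O(Mn^{2.8068})$. In the undirected case the relevant bipartite distance products admit faster evaluation via square matrix multiplication with exponent $\omega$, and the analogous inflation on top of $\Otilde(Mn^{(\omega+3)/2})$ produces $\Otilde(Mn^{(\omega+6)/3})=O(Mn^{2.7910})$.

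The principal obstacle is that the paths $\mathcal{P}_i$ to be hit at scale $i$ genuinely depend on the previously chosen hitting sets $B_1,\dots,B_{i-1}$, so a one-shot application of a derandomized replacement-path covering as in~\cite{KarthikParter21DeterministicRPC} does not suffice: it would either over- or under-shoot the size budget at some scale. Making HDPH's iterative guarantee compose cleanly with Ren's layered preprocessing, while keeping the overhead geometric rather than cumulative across the $\log n$ levels, is the delicate part of the argument; once this composition is established, the bounds claimed in the theorem follow from the two separate analyses of Ren's randomized construction and of the HDPH overhead.
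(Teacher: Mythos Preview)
Your proposal has a genuine gap: it misidentifies where the randomness in Ren's construction lives and, consequently, where the exponent loss comes from.

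Ren's DSO is built in two phases. First, a \textsf{Core} $n^{\alpha}$-truncated DSO is obtained from a family of random subgraphs in the style of Weimann and Yuster~\cite{WY13}; second, the truncation threshold is grown geometrically via \textsf{Extend}, which samples random pivots to hit long replacement paths, followed by the deterministic \textsf{Fast} step. You treat the pivot sampling in \textsf{Extend} as ``the only source of randomness'', but the random-subgraph step in the \textsf{Core} is also randomized and must be handled separately. In the paper, the \textsf{Core} is derandomized using the replacement-path covering of Karthik and Parter~\cite{KarthikParter21DeterministicRPC}, which produces $O(r^2)$ deterministic subgraphs in place of $\Otilde(r)$ random ones; the HDPH-style hierarchy (your main tool) is used only for \textsf{Extend}.

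This matters for the accounting. The HDPH derandomization of \textsf{Extend} costs only $\Otilde(M^2 n^2) = \Otilde(Mn^{2.5})$ in total and is \emph{not} the bottleneck; your claim that ``summing its cost geometrically over the $\log n$ scales contributes a fixed additive increment to the exponent'' is not how the final exponents arise. The loss from $(\omega+3)/2$ to $(\omega+6)/3$ in the undirected case, and the analogous shift in the directed case, comes entirely from the \textsf{Core}: its cost goes from $\Otilde(Mn^{\omega}r)$ to $\Otilde(Mn^{\omega}r^2)$ (undirected) and from $\Otilde(Mn^{\omega(1-\alpha)}r^2)$ to $\Otilde(Mn^{\omega(1-\alpha)}r^3)$ (directed), after which one rebalances $\alpha$ against the $\Otilde(Mn^3/r)$ cost of the $O(\log n)$ \textsf{Extend}/\textsf{Fast} rounds. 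Solving $\omega+2\alpha=3-\alpha$ gives $\alpha=1-\omega/3$ and the undirected bound; in the directed case one solves $\omega(1-\alpha)+3\alpha=3-\alpha$ using bounds on rectangular matrix multiplication to get $3-\alpha\le 2.8068$. Without the Karthik--Parter step your oracle is not deterministic, and without this rebalancing your exponents are unexplained.
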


Recently, Gu and Ren~\cite{GuRen21ConstructingDSO_ICALP} presented a new randomized DSO 
with a preprocessing time of $O(Mn^{2.5794})$. Unfortunately, our HDHP algorithm cannot be used to derandomize it for the following two reasons. First, the DSO of Gu and Ren is not path-reporting. Secondly, it internally relies on probabilistic polynomial identity testing. It is a long-standing open question how to derandomize this, far beyond the field of fault-tolerant data structures.

\subparagraph{Single Source Replacement Paths Problem.}
In the SSRP problem we want to compute replacement paths from a designated source to each destination vertex, under each possible edge failure. Grandoni and Vassilevska Williams~\cite{GW12,GrandoniVWilliamsFasterRPandDSO_journal} first developed an algorithm for both directed and undirected graphs with integer edge weights in the range $[1,M]$ that uses fast matrix multiplication and runs in $\Otilde(Mn^\omega)$ time. Chechik and Cohen~\cite{ChechikCohen19SSRP_SODA} presented an $\Otilde(m\sqrt{n} + n^2)$ time
SSRP algorithm for \emph{undirected} graphs that was later simplified and generalized to deal with multiple sources by Gupta~et~al.~\cite{GJM20}. 
In this paper we use our HDPH framework to  derandomize the recent $\Otilde(m\sqrt{n}\,{+}\,n^2)$ time randomized algorithm for \emph{directed} graphs developed by Chechik and Magen~\cite{ChMa19}, without any loss in the time complexity. Specifically, we prove the following result.


\begin{restatable}{theorem}{singlesourcereplacementpaths}
\label{thm:single_source_replacement_paths}
	There exists a deterministic algorithm for the Single Source Replacement Path problem 
	in unweighted directed graphs
	running in time $\Otilde(m\sqrt n + n^2)$.
\end{restatable}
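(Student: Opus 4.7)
The plan is to apply the Hierarchical Double Pivots Hitting Sets (HDPH) framework from \autoref{sec:derandomization_technique} to derandomize the Chechik--Magen algorithm~\cite{ChMa19} without asymptotic loss. In that algorithm the only use of randomness is to sample a single pivot set $R \subseteq V$ of size $\widetilde{O}(\sqrt n)$ that, with high probability, intersects every $s$-to-$t$ replacement path of length at least $\sqrt n$, for every target $t$ and every failing edge. Once such an $R$ is available, the remainder of Chechik--Magen is purely deterministic and runs in $\Otilde(m\sqrt n + n^2)$ time. Hence it suffices to build, deterministically and within the same time budget, a set $R$ of size $\widetilde{O}(\sqrt n)$ with the above hitting property.

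First I would check that the Chechik--Magen algorithm is \emph{path-reporting}: it actually constructs a replacement path (not only its length) for each relevant triple $(t,e)$. This is the prerequisite for invoking the HDPH framework, because HDPH uses the paths returned by the algorithm on the current hitting set to define the next level of paths to be hit. Second, I would iterate HDPH for $i=1,2,\dots,\lceil \tfrac{1}{2}\log n\rceil$, maintaining nested sets $B_0 \subseteq B_1 \subseteq \cdots$ with $|B_i|=\widetilde{O}(n/2^i)$. At level $i$, I would execute the deterministic part of Chechik--Magen using $B_{i-1}$ in place of $R$, collect from its output precisely those replacement paths of length in $[2^i, 2^{i+1})$ that are \emph{not} already hit by $B_{i-1}$, and then feed this family $\mathcal{P}_i$ into the deterministic greedy hitting-set subroutine of Karthik and Parter~\cite{KarthikParter21DeterministicRPC} to extend $B_{i-1}$ to $B_i$. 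Setting $R := B_{\lceil\log\sqrt n\rceil}$ yields the desired deterministic pivot set of size $\widetilde{O}(\sqrt n)$ hitting every replacement path of length at least $\sqrt n$.

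The running-time analysis follows by summing over the $O(\log n)$ levels. Each level invokes the deterministic skeleton of Chechik--Magen once (cost $\Otilde(m\sqrt n+n^2)$), plus a greedy hitting step whose cost can be charged path-by-vertex and bounded by $\Otilde(n^2)$ per level, giving the claimed $\Otilde(m\sqrt n+n^2)$ total.

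The main obstacle I anticipate is controlling the size of the path family $\mathcal{P}_i$ that HDPH has to hit at each level. A naive accounting that enumerates one replacement path per pair (target $t$, failing edge $e$) yields $\Theta(mn)$ candidates, far too many to run a greedy hitting set within budget. The crucial observation to exploit is the single-source structure: all replacement paths share the same source $s$, and the relevant ``long detour'' portion of a replacement path is essentially determined by the pair of pivots it is forced to traverse in $B_{i-1}$; this reduces the effective number of distinct detours at level $i$ to $\widetilde{O}(|B_{i-1}|^2) = \widetilde{O}(n^2/4^{i-1})$, so the greedy step runs in amortized $\widetilde{O}(n^2)$ across all levels. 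Making this compression consistent with the particular subpath structure that Chechik--Magen relies on, and proving inductively that $B_i$ is indeed a valid hitting set for every replacement path of length at least $2^i$ once the lower-level paths are already handled, will be the technical crux.
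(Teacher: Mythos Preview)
Your proposal has a genuine gap in the time analysis that stems from an oversimplified picture of where randomness enters the Chechik--Magen algorithm.

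First, the algorithm does not use a single set $R$ of size $\widetilde O(\sqrt n)$. It considers a balanced separator $t$ in the shortest-path tree $T_s$, decomposes the $s$--$t$ path $P$ into $O(\log n)$ segments $P_k$, and for each $k$ samples a set $B_k$ of size $\widetilde O(\sqrt n/2^k)$ that must hit certain ``detour'' subpaths of length at least $2^k\sqrt n$ whenever the failure lies in $P_k$. The time budget of the downstream algorithm depends on the fact that $|B_k|$ shrinks as $k$ grows; you cannot simply hand it one set of size $\widetilde O(\sqrt n)$ for all $k$ without re-examining the internals.

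Second, and more seriously, your bootstrapping step ``run the deterministic part of Chechik--Magen using $B_{i-1}$ in place of $R$'' does not stay within the budget. At the early levels $|B_{i-1}|=\widetilde O(n/2^{i-1})$ is far larger than $\sqrt n$; the per-pivot work in Chechik--Magen (BFS/Dijkstra from each pivot) then costs $\widetilde O(m\cdot|B_{i-1}|)$, which is $\widetilde O(mn)$ at level $i=1$. Your claimed ``$\widetilde O(m\sqrt n+n^2)$ per level'' is exactly the bound that only holds once the pivot set already has size $\widetilde O(\sqrt n)$, so the argument is circular.

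The paper's actual derandomization avoids this by never invoking Chechik--Magen as a black box to generate paths. It exploits the single-source structure directly: the only paths that must be hit are shortest paths in the fixed graph $G_P=G-E(P)$ between certain ``$k$-relevant'' pairs $(u,v)$ with $u\in V(P)$ and $v\in V(T)$. For the base level ($k=\tfrac12\log n$), these paths are obtained by a decremental SSSP computation in a reweighted graph (\`a la King / Alon--Chechik--Cohen) that deletes the vertices of $P$ one by one while maintaining distances only up to $2\sqrt n$; this costs $\widetilde O(m\sqrt n)$, not $\widetilde O(mn)$. For each subsequent level $k$, one simply runs BFS in $G_P$ from each of the $\widetilde O(n/2^{k-1})$ pivots in $B_{k-1}$ (cost $\widetilde O(mn/2^k)\le\widetilde O(m\sqrt n)$), collects the length-$2^k$ prefixes, and greedily hits them. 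Thus the crucial missing idea in your plan is the specialized base-case construction; without it the hierarchy cannot get off the ground within the time bound.
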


\section{Preliminaries}
\label{sec:prelims}

We let $G = (V,E)$ denote a directed graph on $n$ vertices and $m$ edges,
potentially edge-weighted by some function $w \colon E \to \mathbb{R}$. We tacitly assume that $G$ is strongly connected, in particular, $m = \Omega(n)$.
For any (weighted) directed graph $H$ (possibly different from $G$),
we denote by $V(H)$ and $E(H)$ the set of its vertices and edges, respectively.
Let $P$ be a path in $H$ from $s \in V(H)$ to $t \in V(H)$,
we say that $P$ is an \emph{$s$-$t$-path} in $H$. 
We denote by $|P| = \sum_{e \in E(P)} w(e)$ the \emph{length} of $P$,
that is, its total weight.
If $H$ is unweighted, we let $|P| = |E(P)|$ denote the number of its edges.
For $u,v \in V(P)$, we let $P[u..v]$ denote the subpath of $P$ from $u$ to $v$.
For $s,t \in V(H)$, the \emph{distance} $d_H(s,t)$ 
is the minimum length of any $s$-$t$-path in $H$;
if $s$ and $t$ are disconnected, we set $d_H(s,t) =+ \infty$.
When talking about the base graph $G$, we drop the subscripts
if this does not create any ambiguities.
The \emph{eccentricity} of a vertex $s \in V(H)$ is $\ecc_H(s) = \max_{t \in V(H)} d_H(s,t)$,
the \emph{diameter} is $\diam(H) = \max_{s \in V(H)} \ecc_H(s)$.
For a set $F \subseteq E(H)$ of edges,
let $H-F$ be the graph obtained from $H$ by removing all edges in $F$.
A \emph{replacement path} $P_H(s,t,F)$ is a shortest path from $s$ to $t$ in $H-F$. 
Its length $d_H(s,t,F) = |P_H(s,t,F)|$ is the \emph{replacement distance}.
The \emph{fault-tolerant eccentricity} of a vertex $s \in V$ of the base graph with respect to $F$ is 
$\ecc_{G\,{-}\,F}(s)$, the \emph{fault-tolerant diameter} is $\diam(G-F)$.

For a positive integer $f$,
an \emph{$f$-edge fault-tolerant eccentricity oracle} ($f$-FEO) for $G$
reports, upon query $(s,F)$ with $|F| \le f$, the value $\ecc_{G{-}F}(s)$.
An \emph{$f$-edge fault-tolerant diameter oracle} returns $\diam(G-F)$ upon query $F$.
For a single edge failure, we write FEO for 1-FEO and abbreviate $F = \{e\}$ to $e$.
For any real number $\sigma = \sigma(n,m,f) \ge 1$, an $f$-FEO is said to have \emph{stretch} $\sigma$,
or be \emph{$\sigma$-approximate},
if the returned value $\widehat \ecc(s,F)$ on query $(s,F)$ satisfies
$\ecc_{G-F}(s) \le \widehat \ecc(s,F) \le \sigma \cdot \ecc_{G-F}(s)$,
analogously for $f$-FDOs.
The \emph{preprocessing time} is the time needed to compute the data structure,
its \emph{query time} is the time needed to return an answer.
For weighted graphs, we assume the weight function being such 
that all distances can be stored in a single word on $O(\log n)$ bits.
Unless stated otherwise, we measure the \emph{space} of the oracles in the number of words.
The oracles cannot access features of graph $G$ 
except those stored during preprocessing.
The size of the input does not count against the space of the data structures.

\section{Diameter and Eccentricity Oracles}
\label{sec:FDOs_digraphs}

This section discusses fault-tolerant oracles for the diameter and vertex eccentricity in directed graphs.
We start by presenting space lower bounds for FDOs that guarantee a certain stretch
when supporting single or multiple edge failures, respectively.
In the single-failure case, the bound depends on the diameter of the graph.
Roughly speaking, if the base graph has low diameter, 
we cannot save much space over just storing all edges.
The picture changes if the diameter grows larger.
We show that then we can obtain FDOs with $o(m)$ space, or even $\Otilde(n)$.
We then turn the discussing to eccentricity oracles for dual and multiple failures.
Note that there we need to report not only one value per graph $G-F$,
but one per vertex in each of those, so special techniques are needed to handle the space increase.

\subsection{Space Lower Bounds for Diameter Oracles}
\label{subsec:FDOs_space_lower_bound_single_failure}

Bilò et al.~\cite{BCFS21DiameterOracle_MFCS} showed that any FDO with a stretch $\sigma < 3/2$
on \emph{undirected} $m$-edge graphs must take $\Omega(m)$ bits of space.
In particular, any data structure that can distinguish between a fault-tolerant diameter 
of $2$ and $3$ has this size.
Their construction transfers to directed graphs 
(by merely doubling each undirected edge into two directed ones).
However, they do not parameterize the graphs by their diameter,
namely, if the FDO has to distinguishing between diameter $D$ and $3D/2$ for some $D \ge 3$.
We generalize their result by showing that there is an intermediate range of $D$
where the $\Omega(m)$-bit bound still applies.
However, here the situation is more intricate in that large values of $D$
do allow for significant space reductions.

The construction\footnote{%
	The graph used in the proof of \cite[Lemma~12]{BCFS21DiameterOracle_MFCS}
	has the property that failing any edge can increase the diameter by at most $1$.
} 
by Bilò et al.~\cite{BCFS21DiameterOracle_MFCS}
cannot be extended to $D \ge 3$.
We introduce a new technique in the next lemma
whose proof is deferred to \Cref{subapp:proofs_omitted_Sec_3}.

\begin{restatable}{lemma}{lemmalowerbounddigraph}
\label{lemma:lower_bound_digraph}
	Let $n,m,D$ be integers such that $n^2 \ge m \ge n \ge 4$, 
	and $n/\sqrt{m} > D \ge 3$.
	There exists a family $\G$ of $n$-vertex directed graphs with diameter $D$ and $\Theta(m)$ edges 
	such that any data structure for graphs in $\G$ that decides
	whether the fault-tolerant diameter remains at $D$ or
	increases to $(3D{-}1)/2$ for odd $D$ (or $(3D/2) - 1$ for even $D$)
	requires $\Omega(m)$ bits of space.
\end{restatable}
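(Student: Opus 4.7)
My plan is a standard information-theoretic encoding argument. I would construct a family $\G$ of directed graphs, each with $n$ vertices, $\Theta(m)$ edges, and diameter exactly $D$, parameterized by bit strings of length $\Theta(m)$, and show that any oracle that correctly distinguishes fault-tolerant diameter $D$ from $(3D{-}1)/2$ (respectively $(3D/2) - 1$ for even $D$) allows us to recover the parameter bit-by-bit. Since $|\G| = 2^{\Theta(m)}$, any such data structure must occupy $\Omega(m)$ bits.

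For the construction, I would take $p = q = \Theta(\sqrt{m})$ and use two ``pivot'' layers $L = \{\ell_1, \ldots, \ell_p\}$ and $R = \{r_1, \ldots, r_q\}$, connected to a source $s$ and a sink $t$ through entry and exit ``stretching paths'' of total length $\Theta(D)$. The hypotheses $D < n/\sqrt{m}$ and $m \le n^2$ leave enough vertices for this skeleton. A parameter $x \in \{0,1\}^{p \times q}$ encodes, for each pair $(\ell, r)$, the presence or absence of an edge (or a small gadget) attached to that pair. The skeleton is designed so that, regardless of $x$, every shortest $s$-$t$ path has length exactly $D$; strong connectivity is restored by backward edges made long enough not to shorten any forward $s$-$t$ distance.

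The decoding step is the technical heart of the proof. For every pair $(\ell, r)$, I would identify a single ``key'' edge $e_{\ell,r}$ whose presence in $G_x$ encodes the bit $x_{\ell, r}$. When $x_{\ell,r} = 0$, the edge is absent and the query $F = \{e_{\ell,r}\}$ trivially returns $\diam(G_x) = D$. When $x_{\ell,r} = 1$, removing $e_{\ell,r}$ forces a detour that traverses a suitable portion of the stretching paths a second time, engineered to have length exactly $(3D{-}1)/2$ for odd $D$ and $(3D/2) - 1$ for even $D$. A short parity-dependent case analysis then confirms the dichotomy: the returned diameter is $D$ if $x_{\ell,r}=0$ and equals the larger value if $x_{\ell,r}=1$.

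The main obstacle is guaranteeing non-interference between bits: no combination of edges belonging to other pivot pairs may create an alternative $s$-$t$ route shorter than the intended detour, so that failing $e_{\ell,r}$ truly reveals only $x_{\ell,r}$. This requires the stretching paths to be ``pivot-private'' or, equivalently, that every alternative route be at least as long as the detour for every admissible $x$. Once non-interference is verified, the oracle's answer on $F = \{e_{\ell,r}\}$ determines $x_{\ell,r}$ unambiguously, and $|\G| = 2^{\Theta(pq)} = 2^{\Theta(m)}$ yields the claimed $\Omega(m)$-bit lower bound.
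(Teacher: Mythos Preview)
Your overall plan—encode $\Theta(m)$ bits in a graph family and recover them via single-edge failure queries—matches the paper's. However, the specific scheme you sketch cannot work as stated. You let the key edge $e_{\ell,r}$ itself carry the bit: it is present in $G_x$ iff $x_{\ell,r}=1$. But then $G_x - e_{\ell,r}$ is literally the graph $G_{x'}$, where $x'$ agrees with $x$ except that $x'_{\ell,r}=0$. You require $\diam(G_{x'}) = D$ (every member of the family has diameter $D$) and simultaneously $\diam(G_x - e_{\ell,r}) = (3D{-}1)/2$; these two numbers refer to the diameter of the \emph{same} graph, a contradiction. Put differently, if toggling one bit only adds or removes $e_{\ell,r}$, then failing that edge can never push the diameter outside the range already realized by the family itself. (There is also the secondary issue that an FDO is only obliged to answer queries $e\in E(G)$, so querying an absent edge is not a legal way to extract information.)

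The paper sidesteps this by decoupling the queried edges from the encoded bits. It installs a complete bipartite layer of edges $(b_i,c_j)$ that are present in \emph{every} graph of the family; these are the edges one fails. The bit $X[i,j]$ is encoded instead in a pair of \emph{bypass} edges $(a_{t,i},c_j)$ and $(b_i,d_{1,j})$. When $X[i,j]=1$ the bypasses reroute every shortest path that used $(b_i,c_j)$ at no extra cost, so the diameter stays $D$; when $X[i,j]=0$, failing $(b_i,c_j)$ forces a detour through a different column of the stretching structure, yielding exactly $(3D{-}1)/2$. Crucially, $G-(b_i,c_j)$ is never itself a member of the family, so the contradiction above does not arise. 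To repair your argument you need this same separation: keep the set of queryable edges fixed across the whole family and let the bits control auxiliary shortcut edges instead.
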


It is now easy to obtain the $\Omega(m)$-bit lower bound of \autoref{theorem:lower_bound_digraph}
since any FDO of stretch $\sigma < \frac{3}{2} - \frac{1}{D}$ must tell the two cases apart.

We now turn to diameter oracles that support more than one edge failure, $f\,{>}\, 1$.
\autoref{theorem:lower_bound_arbitrary_stretch} states that they require 
space that is exponential in $f$,
even if we allow the stretch and query time to be arbitrarily large (but finite).
It follows from the next lemma together with the observation that such $f$-FDOs
have to detect whether the edge failures disconnect the graph.

\begin{restatable}{lemma}{connectivitylowerbound}
\label{lemma:connectivity_lower_bound}
	Any data structure for $n$-vertex digraphs 
	that decides for at most $2f = O(\log n)$ edge failures
	whether the fault-tolerant diameter is finite requires $\Omega(2^f n)$ bits of space.
\end{restatable}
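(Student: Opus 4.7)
The plan is to derive the space bound by adapting the single-source reachability lower bound of Baswana, Choudhary, and Roditty~\cite{BaswanaCR18}. Their construction yields an $n$-vertex digraph $H^\star$ with a source $s$ and a set of $N = \Omega(2^f n)$ critical edges $\{e_1, \dots, e_N\}$ such that, for each $i$, there are a target vertex $t_i$ (a sink in $H^\star$, which we enforce by deleting its outgoing edges if necessary) and an $f$-element set $F_i \subseteq E(H^\star) \setminus \{e_i\}$ with the property that every $s$-to-$t_i$ path in $H^\star - F_i$ uses $e_i$.

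For every binary string $X \in \{0,1\}^{N}$ I would define a graph $G_X$ on the vertex set of $H^\star$ consisting of (a)~all edges of $H^\star$, (b)~a return edge $(v,s)$ from every non-source vertex to $s$, and (c)~a direct edge $b_i = (s, t_i)$ for every index $i$ with $X_i = 1$. Since $H^\star$ witnesses full reachability from $s$ and the return edges ensure reverse reachability to $s$, each $G_X$ is strongly connected. The distinguishing query is $\mathrm{probe}_i := F_i \cup \{e_i\}$, of size $f+1 \le 2f$. In $G_X - \mathrm{probe}_i$, any walk from $s$ to $t_i$ either stays inside $H^\star - F_i - \{e_i\}$ (blocked by the BCR property), uses $b_i$ (possible only if $X_i = 1$), or uses some $b_j$ with $j \neq i$ (impossible, because $t_j$ is a sink whose only outgoing edge is the return edge to $s$, and walks from $s$ that revisit $s$ cannot reach any new vertex). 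Therefore $s$ reaches $t_i$, and consequently $\diam(G_X - \mathrm{probe}_i)$ is finite, if and only if $X_i = 1$.

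Two distinct strings $X \neq Y$ are distinguished by the probe at a bit where they differ, so any oracle that correctly answers the finiteness question under queries of size at most $2f$ must produce pairwise distinct preprocessed outputs on all $2^N$ graphs in the family. A standard information-theoretic argument then yields a space lower bound of $N = \Omega(2^f n)$ bits, as claimed.

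The main obstacle I anticipate is verifying the two ``no-shortcut'' claims invoked above: that the return edges do not enlarge $s$'s forward reachability, and that the extra direct edges $\{b_j : X_j = 1\}$ cannot be composed to reach $t_i$. The first is immediate since every return edge enters $s$, and a walk from $s$ that comes back to $s$ cannot discover a vertex it could not already reach. The second relies on $t_j$ being a sink in the composed graph apart from its return edge, a structural property that has to be preserved throughout the modifications of BCR's construction. Once these two properties are in place, the rest is routine bookkeeping to check that $G_X$ is strongly connected, that $n$ is not inflated by the augmentations (only edges are added), and that the failure budget of $2f$ suffices.
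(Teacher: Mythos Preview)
There is a genuine gap in the encoding. In any $n$-vertex instance of the BCR construction there are at most $n$ distinct sink vertices, yet you posit $N = \Omega(2^f n)$ critical edges; by pigeonhole, $\Omega(2^f)$ indices $i$ must share each target. The ``direct edges'' $b_i = (s,t_i)$ therefore collide: $G_X$ contains the edge $(s,t)$ as soon as $X_j = 1$ for \emph{some} $j$ with $t_j = t$. Your probe $F_i \cup \{e_i\}$ then detects only this disjunction, not the individual bit $X_i$; if $X_i = 0$ but $X_j = 1$ for some $j \neq i$ with $t_j = t_i$, the edge $(s,t_i)$ is still present and $s$ reaches $t_i$. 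The encoding thus collapses to $O(n)$ recoverable bits and yields only an $\Omega(n)$ lower bound. A second, more mechanical, problem: the set $F_i$ also disconnects every off-path internal vertex of the depth-$f$ tree from $s$, and neither the return edges $(v,s)$ nor the $b_j$'s (which land only on sinks) provide an alternative route into them, so $\diam(G_X - F_i - e_i)$ is in fact infinite for every $X$ and the oracle answers carry no information at all.

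The paper circumvents the collision by encoding bits not as extra $(s,t)$ edges but as the bipartite edges themselves, between the leaf sets of \emph{two} depth-$f$ binary trees per block---one directed away from a local source $s_i$, the other toward a local sink $t_i$. A query of $2f$ failures, $f$ in each tree, isolates a single leaf on each side, so the finiteness of the diameter hinges on exactly one leaf-to-leaf edge; distinct bits are encoded by distinct edges and no pigeonhole collapse occurs. Chaining $\Theta(n/2^f)$ such blocks and giving the final sink an out-edge to every vertex completes the construction.
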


\subsection{Improved Upper Bounds}
\label{subsec:FDO_upper_bound}

The above discussion shows that for graphs with \emph{small} diameter,
there is no hope to obtain an FDO whose space is much smaller 
than what is needed to store the full graph.
At least not while retaining good stretch at the same time.
The lower bound in \autoref{theorem:lower_bound_digraph},
however, breaks down for a \emph{large} diameter.
Indeed, we show next that in this regime we can do much better in terms of space,
without sacrificing stretch or query time.

\Cref{theorem:large_diam_n^5/6,theorem:large_diam_upper_bound} will follow 
from the same construction.
The initial way we present it in \autoref{lemma:large_diam} uses randomization
in the form of a well-known sampling lemma,
see \cite{GrandoniVWilliamsFasterRPandDSO_journal,RodittyZwick12kSimpleShortestPaths}.
We will later discuss how to derandomize the oracles.

\begin{lemma}[Sampling Lemma]
\label{lemma:sampling}
	Let $H$ be a $n$-vertex directed graph, $c > 0$ a positive constant, and $L \ge c \ln n$.
	Define a random set $B \subseteq V(H)$ by sampling each vertex of $H$ independently
	with probability $(c \ln n)/L$.
	With probability at least $1-\frac{1}{n^c}$, the cardinality of $B$ is $O((n \log n)/L)$.
	Let further ${\mathcal P}$ be a set of $\ell$ simple paths in $H$, 
	each of which spans $L$ vertices.
	With probability at least $1-\frac{\ell}{n^c}$, we have $V(P) \cap B \neq \emptyset$ for every $P \in {\mathcal P}$.
\end{lemma}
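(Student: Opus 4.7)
The two claims are essentially independent, and both follow from textbook probabilistic tools applied to the Bernoulli indicators $X_v = \mathbb{1}[v \in B]$, which are i.i.d.\ with success probability $p = (c\ln n)/L$. My plan is to handle the size bound by a multiplicative Chernoff concentration around the mean, and the hitting bound by the elementary estimate $(1-p)^L \le e^{-pL}$ followed by a union bound over the $\ell$ paths in~$\mathcal{P}$. No clever graph-theoretic idea seems necessary; the only delicate point is tuning the constant hidden in the $O(\cdot)$ of the size bound so the Chernoff tail matches the prescribed error $1/n^c$.

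For the size bound, I set $\mu = \mathbb{E}[|B|] = np = cn\ln n / L$. Since $H$ has $n$ vertices and a simple path spans at most $n$ of them, the assumption $L \ge c\ln n$ together with $L \le n$ gives $\mu \ge c\ln n$. Applying the multiplicative Chernoff bound in the form $\Pr[|B| \ge (1+\delta)\mu] \le e^{-\delta\mu/3}$ for $\delta \ge 1$, I pick a constant $\delta$ large enough (e.g.\ $\delta = 3$) so that $e^{-\delta\mu/3} \le e^{-\delta c \ln n / 3} \le n^{-c}$. This yields $|B| \le (1+\delta)\mu = O((n\log n)/L)$ with probability at least $1 - 1/n^c$, absorbing $\delta$ and $c$ into the constant in the $O$.

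For the hitting bound, I fix any $P \in \mathcal{P}$ with $|V(P)| = L$. Because vertices are sampled independently,
\[
\Pr[V(P) \cap B = \emptyset] \;=\; (1 - p)^L \;\le\; e^{-pL} \;=\; e^{-c\ln n} \;=\; n^{-c}.
\]
A union bound over the $\ell$ paths in $\mathcal{P}$ shows that with probability at least $1 - \ell/n^c$ every path is hit, which is exactly the second assertion. The two events can be combined by one more union bound if desired, but as stated the two claims are proved separately, each with the advertised failure probability. The main (very mild) obstacle is simply the constant bookkeeping in the Chernoff step to make sure the $1/n^c$ tail is attained rather than a weaker $1/n^{c/3}$; increasing $\delta$ solves this at no cost.
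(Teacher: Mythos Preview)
Your proof is correct and is exactly the standard argument for this folklore lemma: Chernoff for the size bound and $(1-p)^L \le e^{-pL}$ plus a union bound for the hitting property. The paper does not supply its own proof; it simply cites the lemma as well known (with references), so there is nothing to compare against beyond noting that your argument is the textbook one.
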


\begin{lemma}
	For any $n$-vertex, $m$-edge unweighted directed graph $G$ with diameter $D = \omega(\log n)$
	and any $\varepsilon = \varepsilon(n,m,D) > 0$,
	we can compute in time $\Otilde(mn+n^4/(\varepsilon^3 D^3))$ an FDO
	with $1+\varepsilon$ stretch, $O(n+(n^{8/3} \log^2 n)/(\varepsilon^2 D^2) )$ space,
	and constant query time.
\label{lemma:large_diam}
\end{lemma}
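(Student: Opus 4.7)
The plan is a pivot-based approach that leverages the invariant $\diam(G-e) \geq \diam(G) \geq D$: any additive slack of $O(\varepsilon D)$ in the estimate of $\diam(G-e)$ becomes a multiplicative stretch of~$1+\varepsilon$. So I would aim for an additive approximation rather than a direct multiplicative one.

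First I would fix the length threshold $L = \Theta(\varepsilon D)$, which is $\omega(\log n)$ by the hypothesis $D = \omega(\log n)$, and invoke \Cref{lemma:sampling} with sampling probability $p = \Theta((\log n)/L)$ to obtain, w.h.p., a pivot set $B \subseteq V$ of size $|B| = O((n\log n)/(\varepsilon D))$ that hits every shortest path of length at least $L$ in any $G-e$; a union bound over the polynomially many candidate replacement paths suffices. I would then run a single-source replacement paths subroutine from (and into) each pivot, and for every ordered pair $(b_1, b_2) \in B \times B$ and every edge $e$ on a canonical shortest $b_1 \to b_2$ path in $G$ extract the replacement distance $d_{G-e}(b_1, b_2)$. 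These values feed a single lookup table $T$ indexed by edges, where
\[
 T[e] \;=\; \max_{b_1, b_2 \in B} d_{G-e}(b_1, b_2) \,+\, 2L,
\]
and edges not touched by any canonical pivot-to-pivot path get the default value $(\max_{b_1, b_2 \in B} d_G(b_1, b_2)) + 2L$. The query returns $T[e]$ in $O(1)$.

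The stretch argument is then routine. Let $(s^*, t^*)$ realize $\diam(G-e)$ and $P^*$ be a shortest $s^*$-$t^*$ path in $G-e$. Since $|P^*| \geq D \geq L$, the path $P^*$ meets $B$; letting $b_1, b_2$ be the first and last pivots on $P^*$, we have $d_{G-e}(s^*, b_1), d_{G-e}(b_2, t^*) \leq L - 1$, hence $d_{G-e}(b_1, b_2) \geq \diam(G-e) - 2L$, giving $T[e] \geq \diam(G-e)$. Conversely $d_{G-e}(b, b') \leq \diam(G-e)$ for any two pivots, so $T[e] \leq \diam(G-e) + 2L \leq (1+\varepsilon)\diam(G-e)$, using $2L \leq \varepsilon D \leq \varepsilon \diam(G-e)$.

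The hard part will be hitting the advertised space bound $O(n + (n^{8/3} \log^2 n)/(\varepsilon^2 D^2))$ rather than the naive $\Theta(|B|^2 n) = \Theta(n^3 \log^2 n/(\varepsilon^2 D^2))$ that one pays by storing every replacement distance $d_{G-e}(b_1, b_2)$. To shave the extra $n^{1/3}$ factor I would (i) never materialise per-pair replacement arrays, instead only updating the running maximum $T[e]$ while streaming the SSRP outputs, and (ii) apply a second layer of the Sampling Lemma inside the canonical pivot-to-pivot paths to certify that only a $\widetilde O(n^{-1/3})$-fraction of their edges can ever be the one that raises $T[e]$ to its final value. The preprocessing time $\Otilde(mn + n^4/(\varepsilon^3 D^3))$ then decomposes into $\Otilde(mn)$ for an initial APSP producing the canonical paths and $\Otilde(|B|^3 n)$ for the structured sweep that fills~$T$ over the sensitive $(b_1,b_2,e)$-triples.
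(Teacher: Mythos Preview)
Your high-level framework---sampling $B$ of size $\Otilde(n/(\varepsilon D))$, converting an additive slack $2L = \Theta(\varepsilon D)$ into multiplicative stretch $1+\varepsilon$, and answering via a single table lookup---is exactly the paper's approach, and your stretch analysis matches theirs almost verbatim. The preprocessing budget $\Otilde(mn) + \Otilde(n|B|^3)$ is also what the paper spends (they use the Bernstein--Karger DSO rather than SSRP, but that is cosmetic).

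The gap is in the space bound, which you yourself flag as ``the hard part.'' Your step~(i) reduces the table to one entry per edge on some canonical pivot-to-pivot path, but that is still up to $O(n|B|)$ entries (the union of $|B|$ shortest-path trees), not $O(n^{2/3}|B|^2)$. Your step~(ii)---a ``second layer of the Sampling Lemma'' that would ``certify that only an $\Otilde(n^{-1/3})$-fraction of their edges can ever be the one that raises $T[e]$''---is not a concrete argument: sampling produces hitting sets, it does not bound the number of edges whose failure perturbs some pivot-to-pivot distance, which is a fixed combinatorial property of $G$ and $B$, independent of any random choices you make afterwards. Nothing in the proposal explains where the exponent $2/3$ would come from.

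The paper obtains the $n^{2/3}|B|^2$ term by a completely different, non-probabilistic route. It defines $\mathcal{X}$ as the set of edges $e$ (in the union of the $|B|$ shortest-path trees) for which $d(b_1,b_2,e) > d(b_1,b_2)$ for some $b_1,b_2 \in B$, observes that \emph{every} subgraph preserving all $B\times B$ distances must contain $\mathcal{X}$, and then invokes Bodwin's extremal bound~\cite{Bodwin17} that such pairwise distance preservers exist with $O(n + n^{2/3}|B|^2)$ edges; hence $|\mathcal{X}| = O(n + n^{2/3}|B|^2)$. Only edges in $\mathcal{X}$ (plus the $O(n)$ strong bridges) receive a stored value; every other edge gets the default $\diam(G) + \varepsilon D$. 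Without this distance-preserver bound (or an equivalent substitute), your plan does not reach the stated space.
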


\begin{proof}
Let $D = \diam(G)$, $b=n/(\varepsilon D)$, and $c > 0$ a sufficiently large constant.
We sample a set $B\subseteq V$ of \emph{pivots}
by including each vertex independently with probability $(2bc \ln n)/n$.
By \autoref{lemma:sampling} with $L = n/2b = \varepsilon D/2$, there are $O(b \log n)$ many pivots w.h.p.

For the graph $G$, compute in $\Otilde(mn)$ time the $O(1)$-query time
distance sensitivity oracle of Bernstein and Karger~\cite{BeKa09}.
We further compute a subgraph $H$ of $G$ that is just the union of $|B|$ shortest-path trees,
one rooted at each pivot.
We iterate over the edges of $H$
and compute the collection $\X$ of all those $e \in E(H)$
such that $d(b_1,b_2,e) > d(b_1,b_2)$ for some pair $(b_1,b_2)\in B\times B$. 
The time to compute $\X$ is $O(n|B|^3) = \Otilde(n^4/(\varepsilon^3 D^3))$
since processing an edge in $H$ requires $|B|^2$ calls to the DSO. 
Observe that any subgraph of $G$ that exactly preserves distances between all pairs in $B\times B$ must contain all the edges of $\X$.
Bodwin~\cite{Bodwin17} showed that there are distance-preserving subgraphs with respect to $B\times B$ with at most $O(n+n^{2/3}|B|^2)$ edges. 
Thus, the size of $\X$ is bounded by $O(n+n^{2/3}|B|^2)$.

Next, we build a dictionary $\D_{\X}$ in which we store the the edges in $\X$ 
together with the maximum distance between any pair of pivots
if the edge fails (or $\diam(G)$ if this is larger). 
In other words, for each $e\in \X$,
we store $\phi(e)=\max\{ \max_{b_1,b_2\in B} \nwspace d(b_1,b_2,e), \nwspace \diam(G)\}$. 
Let $\Y$ be the set of all edges in $E$ such that $G{-}e$ is no longer strongly connected. We build a dictionary $\D_{\Y}$ in which we store information about the edges $\Y$. 
It is well-known that $\Y$ contains $O(n)$ edges 
and can be computed in time $O(m)$~\cite{Italiano12FindingStrongBridges}. 

Recall that $b = n/(\varepsilon \diam(G))$.
The oracle's output $\widehat{D}(e)$ is defined as follows:
if $e\in \Y$, then $\widehat{D}(e) = \infty$; 
if $e\in \X$, $\widehat{D}(e) =\phi(e)+n/b$; 
otherwise, the oracle outputs $\widehat{D}(e) = \diam(G) +n/b = (1+\varepsilon) \diam(G)$.

Evidently, the oracle is correct for all $e \in \Y$.
It is also easy to verify that all outputs are at most
$\phi(e) + n/b \le \diam(G-e)+n/b 
	= \diam(G-e) + \varepsilon \diam(G) \le (1+\varepsilon) \diam(G-e)$.
To prove that they are also at least $\diam(G-e)$,
consider a vertex pair $(u,v)\in V\times V$ 
such that $d(u,v,e)=\diam(G-e)<\infty$.
With high probability\footnote{
	We say an event occurs \emph{with high probability} (w.h.p.)
	if it has success probability $1-n^{-c}$ for some constant $c>0$
	that can be made arbitrarily large.
}
by \autoref{lemma:sampling},
there exists a shortest \mbox{$u$-$v$}-path in $G-e$ and two pivots $b_u,b_v\in B$ on that path
such that $d(u,b_u,e), d(b_v,v,e) \le L = n/2b$. 
We have $\diam(G-e)=d(u,b_u,e)+ d(b_u,b_v,e) + d(b_v,v,e)$.
Suppose $e \notin \X$.
Then, $d(b_u,b_v,e) = d(b_u,b_v) \le \diam(G)$ holds
and therefore $\diam(G-e) \le \diam(G) + n/b = \widehat{D}(e)$.
If $e \in \X$, then $d(b_u,b_v,e) \le \phi(e)$ and 
$\diam(G-e) \le \phi(e) + n/b = \widehat{D}(e)$.

There are $k$-element dictionaries of size $O(k)$ and $O(1)$ query time 
computable in time $\Otilde(k)$~\cite{HagerupMiltersenPagh01DeterministicDictionaries}.
The dictionaries have total size
$O(n+n^{2/3}|B|^2) = O(n + (n^{8/3}\log^2 n)/(\varepsilon^2 D^2))$.
\end{proof}

The oracle in \autoref{lemma:large_diam} can also be extended to handle vertex failures. 
The only modification required is to add to set $\X$ those vertices $v\in V$ that satisfy $d(b_1,b_2,v)  > d(b_1,b_2)$ for some $(b_1,b_2)\in B\times B$, 
and to add to $\Y$ to be those vertices $v$ for which $G-v$ is not strongly connected.
Suppose $D = \omega(n^{5/6})$, inserting any $\varepsilon \ge n^{5/6}/D = o(1)$ above gives an FDO
with near linear space and $1+o(1)$ stretch
that is computable in time $\Otilde(mn)$, which proves \autoref{theorem:large_diam_n^5/6}.
Furthermore, for graphs with diameter $\omega((n^{4/3}\log n)/(\varepsilon \sqrt{m}))$,
we obtain in $\Otilde(mn)$ time an FDO with constant query time and $o(m)$ space (\autoref{theorem:large_diam_upper_bound}).

\subsection{Eccentricity Oracles}

We now prove \autoref{theorem:meta}
that constructs an $f$-edge fault-tolerant eccentricity oracle
from a DSO supporting $f$ failures. The improved $f$-FEO for DAGs can be found in \Cref{subapp:ecc_oracle}.

Let $\D$ be a DSO with sensitivity $f$ and  stretch $\sigma$ that,
on (un-)directed possibly weighted graphs, 
can be computed in time $P$, uses $S$ space, and has a query time of $Q$. 
For any given source $s\in V$ and query set $F$ of $|F| \le f$ edges, 
our oracle reports an $(1{+}\sigma)$-approximation of the eccentricity of $s$ in $G-F$. 
We simply store $\D$ and, for each $x\in V$, the value $ecc_G(x)$.
All eccentricities in the base graph $G$ can be obtained with a BFS from each vertex in $O(mn)$.

Upon query $(s, F=\{(x_1,y_1),\ldots,(x_f,y_f)\})$,
we use $\D$ to compute $d(s,y_i,F)$, for all $1 \le i \le f$.
Our estimate is $\widehat{\ecc}_{G-F}(s) = \ecc_G(s)+\max_{1\leq i\leq f}d(s,y_i,F).$
%
The time taken to compute $\widehat{\ecc}_{G-F}(s)$ is $O(f\cdot Q)$
and the space requirement of the oracle is $O(n+ S)$.

Now we show that $\widehat{\ecc}_{G-F}(s)$ is a $(1+\sigma)$-approximation of ${\ecc}_{G-F}(s)$. Let $F_0$ be the subset of $F$ consisting of those edges in $F$ that lie on some shortest-path tree $T$ rooted in $s$. 
If $F_0$ is empty, we immediately get $\ecc_{G-F}(s) = \ecc_G(s) \le \widehat{\ecc}_{G-F}(s)$.
Otherwise, for any $v\in V$, either $d(s,v)=d(s,v,F)$ or there exists an $(x,y)\in F_0$ such  
that $y$ is an ancestor of $v$ in $T$. In this latter case $d(y,v,F) \leq {\ecc}_{G}(s)$. This proves that  
$d(s,v,F)\leq d(s,y,F)+d(y,v,F)\leq d(s,y,F)+{\ecc}_{G}(s)\leq \widehat{\ecc}_{G-F}(s)$.
Thus, ${\ecc}_{G-F}(s)\leq \widehat{\ecc}_{G-F}(s)$. Next observe that $\ecc_G(s) \leq {\ecc}_{G-F}(s)$ and $\max_{1\leq i\leq f}d(s,y_i,F) \leq \sigma\cdot {\ecc}_{G-F}(s)$, which proves that $\widehat{\ecc}_{G-F}(s)\leq (1+\sigma) \cdot {\ecc}_{G-F}(s)$.





\section{Derandomization Framework}
\label{sec:derandomization_technique}

The fault-tolerant diameter oracles in 
\Cref{theorem:large_diam_n^5/6,theorem:large_diam_upper_bound} are randomized.
They both follow from \autoref{lemma:large_diam}
which in turn relies on a random hitting set to intersect all replacement paths of a certain length.
In fact, many more data structures and algorithms in the fault-tolerant setting follow a sampling-based approach similar to \autoref{lemma:sampling}, 
see e.g.~\cite{ChechikCohen19SSRP_SODA,ChMa19,GrandoniVWilliamsFasterRPandDSO_journal,Ren22Improved,%
	RodittyZwick12kSimpleShortestPaths,WY13}.
It is an interesting question whether these algorithms can be derandomized efficiently.
Currently there is no single approach to derandomize \autoref{lemma:sampling}
in the same $O(n)$ time it uses to go through all vertices.
Therefore, the literature focuses on the specific applications.
The goal is to replace the sampling step by a deterministic construction of the hitting set
that, while taking $\omega(n)$ time, does not (or only marginally) 
increase the asymptotic running time of the whole algorithm.
Recently, there was some progress on notable special cases. 
Karthik and Parter~\cite{KarthikParter21DeterministicRPC} gave a derandomization
for the algebraic version of the distance sensitivity oracle of Weimann and Yuster~\cite{WY13}
with a slightly higher running time
(for a detailed discussion see \autoref{lem:KarthikParter} below).
Bilò et al.~\cite{BCFS21SingleSourceDSO_ESA} derandomized
the SSRP algorithms of Grandoni and Vassilevska Williams~\cite{GrandoniVWilliamsFasterRPandDSO_journal} as well as
Chechik and Cohen~\cite{ChechikCohen19SSRP_SODA}.
Their derandomization succeeds in the same time bounds as the original randomized algorithm,
but the technique only works for undirected graphs.
Here, we develop a framework for directed graphs.
We first apply it to our own FDOs
and then show its versatility by also derandomizing the DSO of Ren~\cite{Ren22Improved}
and the SSRP algorithm of Chechik and Magen~\cite{ChMa19}.

We build on the work of Alon, Chechik, and Cohen~\cite{AlonChechikCohen19CombinatorialRP}.
We first review some technical details of their result and then describe our additions.
For now, we assume the base graph $G$ to be unweighted
and only later (in \autoref{sec:derandomizing_existing})
incorporate positive integer edge weights.
For concreteness, consider the task in \autoref{lemma:large_diam}
of finding a set $B \subseteq V$, the pivots,
such that for all $s,t \in V$ and edge $e \in E$ with replacement distance $d(s,t,e)$ at least $L = \varepsilon \diam(G)/2$,
there exists some replacement path $P(s,t,e)$ that contains a pivot $x \in B$.
Other fault-tolerant algorithms pose similar requirements on $B$.
The technique in~\cite{AlonChechikCohen19CombinatorialRP}
consists of computing a small set of \emph{critical} paths, 
much smaller than the set of all $O(mn^2)$ replacement paths.
Once we have those, a hitting set can be computed with the folklore greedy algorithm,
called \textsf{GreedyPivotSelection} in~\cite{AlonChechikCohen19CombinatorialRP},
that always selects a vertex that is contained in the most unhit paths.\footnote{%
	To achieve the performance of \autoref{lemma:greedy-correctness},
	one has to truncate all paths by selecting $L$ vertices arbitrarily from each $P \in \mathcal{P}$.
	This is non-issue for us as, by construction, all our paths will have length $\Theta(L)$. 
}
Alternatively, one can use the blocker set algorithm of King \cite{King99FullyDynamicAPSP}.

\begin{lemma}[Alon, Chechik, and Cohen~\cite{AlonChechikCohen19CombinatorialRP}]
\label{lemma:greedy-correctness}
Let $1 \le L \le n$ and $1 \le q = \emph{\textsf{poly}}(n)$ be two integers.
Let $P_1, \ldots, P_q \subseteq V$ be sets of vertices
that, for every $1 \le k \le q$, satisfy $|P_k| \ge L$.
The algorithm \emph{\textsf{GreedyPivotSelection}} computes in time
$\Otilde(qL + n^2/L)$ a set $B \subseteq V$
of $|B| = O((n \log q)/L) = \Otilde(n/L)$ pivots
such that, for every index $k$, it holds that $B \cap P_k \ne \emptyset$.
\end{lemma}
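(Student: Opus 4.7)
Following the footnote, I would first preprocess the input by truncating each $P_k$ to an arbitrary subset of exactly $L$ of its vertices. This makes the total incidence count $\sum_k |P_k|$ equal to $qL$ and does not weaken the conclusion, since any hitting set for the truncated sets also hits the originals. The algorithm \textsf{GreedyPivotSelection} then iterates: select a vertex contained in the largest number of still-uncovered sets, add it to $B$, mark those sets as covered, and continue until every set is covered.

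\textbf{Correctness and bound on $|B|$.} I would apply the standard averaging argument from greedy set cover. Let $R \subseteq \{1, \dots, q\}$ index the yet-uncovered sets at the start of some iteration. Double-counting incidences gives
\[
\sum_{v \in V} \left|\{k \in R : v \in P_k\}\right| \;=\; \sum_{k \in R} |P_k| \;\ge\; L \, |R|,
\]
so by the pigeonhole principle some vertex lies in at least $L|R|/n$ uncovered sets. Adding the greedy choice to $B$ shrinks $|R|$ by a factor of at most $(1 - L/n)$. After $k$ iterations, the number of uncovered sets is at most $q (1 - L/n)^k \le q \exp(-kL/n)$. Choosing $k = \lceil (n \ln q)/L \rceil + 1$ drives this quantity strictly below $1$, so the algorithm terminates with every set hit and $|B| = O((n \log q)/L) = \Otilde(n/L)$.

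\textbf{Running time.} The plan is to maintain, for each vertex $v$, a counter $c(v)$ equal to the number of currently uncovered sets that contain $v$, together with the bipartite vertex-set incidence lists (total size $O(qL)$). Initialization takes $O(qL)$ time. Each of the $O((n \log q)/L)$ iterations then performs two steps: (i) scan the $n$ counters in $O(n)$ time to find a maximum, and (ii) for every set $P_k$ newly covered by the chosen vertex, traverse its $L$ vertices and decrement their counters. Each of the $q$ sets becomes newly covered at most once throughout the whole execution, so the total cost of step~(ii) is $O(qL)$. The total cost of step~(i) is $O(n) \cdot O((n \log q)/L) = \Otilde(n^2/L)$. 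Summing yields the claimed bound $\Otilde(qL + n^2/L)$.

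\textbf{Main obstacle.} The analysis itself is essentially textbook; the delicate point is the amortization that produces the additive split $qL + n^2/L$ rather than a multiplicative bound. The $qL$ term is paid for by the input, because each incidence is touched only when its set transitions from uncovered to covered, which happens at most once per set. The $n^2/L$ term comes from the max-finding scans, which must be performed once per iteration regardless of set activity. I would resist the temptation to replace the linear scan by a priority queue: every decrement in step~(ii) would trigger a heap update, re-introducing a logarithmic factor inside the $qL$ term and complicating the charging, while not improving the $n^2/L$ term asymptotically. The naive linear scan is both simpler and already optimal up to the polylogarithmic factor absorbed into $\Otilde$.
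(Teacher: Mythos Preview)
The paper does not prove this lemma: it is quoted verbatim as a result of Alon, Chechik, and Cohen~\cite{AlonChechikCohen19CombinatorialRP} and used as a black box, with only the footnote remark about truncating each $P_k$ to $L$ vertices. Your proposal is correct and is precisely the standard greedy set-cover analysis one would expect to find in the original reference; the truncation step, the $(1-L/n)$-shrinkage via averaging, and the amortized counter-maintenance for the $\Otilde(qL+n^2/L)$ bound are all sound.
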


The crucial part is to quickly find the paths $P_k$
such that hitting them is sufficient to hit all long replacement path.
Of course, this could be done by computing all-pairs shortest paths in each graph $G{-}e$
in total time $\Otilde(m^2n)$ using Dijkstra's algorithm
(or $\Otilde(mn^{2.5302})$ if one is willing to use fast rectangular matrix multiplication~\cite{LeGall12RectangularMatrixMultiplication,Zwick02DirectedAPSP}).
However, this is much more than the
$\Otilde(mn+n^4/(\varepsilon^3 \diam(G)^3))$ time bound we had in \autoref{lemma:large_diam}.
For the applications in~\cite{AlonChechikCohen19CombinatorialRP}, 
a single set of paths and therefore a single hitting set was sufficient.
Bilò et al.~\cite{BCFS21SingleSourceDSO_ESA} (with slightly different requirements on the set $B$)
were able to make do with three sets,
exploiting the undirectedness of the underlying graph.

We extend this to directed graphs using a hierarchical approach to find the critical paths.
Observe how the length parameter $L$ in \autoref{lemma:greedy-correctness} serves two roles.
The longer the paths, the longer it takes to compute $B$,
but the \emph{fewer} vertices suffice to intersect all paths.
Additionally, we have to compute the set of critical paths
which takes (at least) linear time in their length.
So $L$ has to fall just in the right range for the computation to be fast.
To achieve this, we use an exponentially growing sequence of lengths $L_1, L_2, \dots, L_{O(\log n)}$
and, instead of a single set, compute a sequence $B_1, B_2, \dots$ of exponentially shrinking sets
such that, in the $i$-th stage, $B_i$ hits, again for all $s,t \in V$ and $e \in E$,
some replacement path of length at least $L_i$.
However, this poses some new difficulties because now the collection of critical paths
has to be computed step by step.
Imagine in the $i$-th stage, we have already obtained the all the subsets $\mathcal{P}_j$, $j < i$, 
of paths with respective lengths $L_j$.
The key observation is that the hitting sets $B_j$ from the previous rounds
carry valuable information that can be harnessed to find the new set $\mathcal{P}_{i}$
faster, this then offsets the run time penalty 
incurred by the greater length of the new paths.
Our approach further relies on the existence of deterministic \emph{path-reporting}
distance sensitivity oracles that return the requested paths in constant time per edge/vertex.

\subparagraph{The HDPH Algorithm.}
We now describe the Hierarchical Double Pivots Hitting Sets (HDPH) algorithm that makes these ideas concrete.
It can be seen as a ``reference implementation'' of the framework.
For a specific application, one still has to adapt the details.
The algorithm is more general than what is needed for diameter oracles
in \Cref{theorem:large_diam_n^5/6,theorem:large_diam_upper_bound}.
For example, it also pertains to vertex failures.
Later, in \Cref{sec:derandomizing_existing,app:omissions_derad_existing},
we show an example how to modify the algorithm for other problems.

Let $C \ge 3/2$ be a constant.
The aim of the HDPH algorithm is to compute a sequence of sets $B_1, \dots, B_{\lceil \log_C n \rceil} \subseteq V$ of size $|B_i| = \Otilde(n/C^i)$ such that for all vertices
$s,t \in V$ and failure $f \in E \cup V$ with $d(s,t,f) \in (C^i, C^{i+1}]$ there exists a replacement path $P(s,t,f)$ that contains a pivot $z \in B_i$.
It assumes access to the ``APSP data'' of the original graph $G$, that is,
the distance $d(s,t)$ for all $s,t$ and a corresponding shortest path $P(s,t)$.
Also, it requires a deterministic path-reporting distance sensitivity oracle with constant query time (both for the distance and each reported edge) as a black box.

The HDPH algorithm is sketched in \Cref{alg:heirarchical-pivots}. In lines 1 and 2
it initializes $B_i = V$ for $i \le 2$. 
In lines 3-12, for $3 \le i \le \lceil \log_C n \rceil$, we iteratively compute the hitting sets $B_i$ by using the hitting sets from the previous $3$ iterations to obtain a set of shortest and replacement paths $\mathcal{P}_i$ of length $\Theta(n/C^i)$ that one needs to hit, and then use the greedy algorithm \textsf{GreedyPivotSelection} to compute the set of pivots $B_i$ which hits this set of paths $\mathcal{P}_i$. 
The paths are defined as follows. First, in line 4 we add to $\mathcal{P}_i$ shortest paths $P(x,y)$ whose length is in the range $(C^i, C^{i+1}]$ such that $x,y \in B_{i-3} \cup B_{i-2} \cup B_{i-1}$ are pivots from the last 3 iterations. Then, in lines 5-11, for every pair of pivots $x,y \in B_{i-3} \cup B_{i-2} \cup B_{i-1}$ whose shortest path $P(x,y)$ is of length at most $C^{i+1}$, and for every edge or every $f \in E(P(x,y)) \cup V(P(x,y))$ we query the DSO with $(x,y,f)$ to compute the distance $d(x,y,f)$. If $d(x,y,f) \in (C^{i-6}, C^{i+1}]$ then we use the DSO to also report a replacement path $P(x,y,f)$ and add it to $\mathcal{P}_i$.

\begin{algorithm}[t]
\label[alg]{alg:heirarchical-pivots}
\vspace*{.25em}
\caption{Hierarchical Double Pivots Hitting Sets (HDPH) Algorithm}
    \DontPrintSemicolon
    \KwIn{APSP data and a deterministic path-reporting DSO with $O(1)$ query time.}
    \KwOut{The hitting sets $B_0, \ldots, B_{\lceil \log_C n \rceil}$.}
    \For {$i \in [0, 2]$}
    {$B_i \gets V$ \;}
    \For {$i \in [3, \lceil \log_C n \rceil]$}
    { Let $\mathcal{P}_i = \{ P(x,y) \ | \ x,y \in B_{i-3} \cup B_{i-2} \cup B_{i-1} \text{ such that } d(x,y) \in (C^{i-6}, C^{i+1}]\}$ \; 
    \For {$x,y \in B_{i-3} \cup B_{i-2} \cup B_{i-1}$}
    {\If {$d(x,y) \le C^{i+1}$}
    {
    \For {$f \in E(P(x,y)) \cup V(P(x,y))$}
    {query the DSO for $d(x,y,f)$ \;
    \If {$d(x,y,f) \in (C^{i-6}, C^{i+1}]$}
    {query the DSO for $P(x,y,f)$ \;
    $\mathcal{P}_i \gets \mathcal{P}_i \cup \{P(x,y,f)\}$ \;
    }
    }
    }
    }
    $B_i \gets \textsf{GreedyPivotSelection}(\mathcal{P}_i)$ \;
    }
    \Return $B_0, \ldots, B_{\lceil \log_C n \rceil}$ \;
\vspace*{.25em}
\end{algorithm}

The next lemma proves the properties of the resulting hitting sets and the run time.

\begin{restatable}{lemma}{derandomizeFDO}
\label{lemma:derandomize-FDO}
Given the APSP data and a deterministic path-reporting DSO with $O(1)$ query time, the HDPH algorithm  deterministically computes, in $\Otilde(n^2)$ time, all the hitting sets $B_i$, with $0 \le i \le \lceil \log_{C} n\rceil$. For every
$0 \le i \le \lceil \log_C n\rceil$, it holds that $|B_i| = \Otilde(n / C^i)$. 
For every pair of vertices $s,t \in V$ and for every failing edge or vertex $f \in E \cup V$ such that $d(s,t,f) \in (C^i, C^{i+1}]$ there exists a pivot $z \in B_i$ such that $d(s,t,f) = d(s,z,f) + d(z,t,f)$. Finally, for every pair of vertices $s,t \in V$ such that $d(s,t) \in (C^i, C^{i+1}]$,
there exists a pivot $z \in B_i$ such that $d(s,t) = d(s,z) + d(z,t)$.
\end{restatable}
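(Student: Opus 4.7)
The plan is to verify the size bound, running time, and correctness in turn, with correctness by induction on $i$. The size of $B_i$ is immediate from Lemma~\ref{lemma:greedy-correctness}: every path inserted into $\mathcal{P}_i$ in lines~4 and~11 has length strictly greater than $C^{i-6}$, so \textsf{GreedyPivotSelection} with $L = C^{i-6}$ returns $|B_i| = \Otilde(n/L) = \Otilde(n/C^i)$, absorbing the constant factor $C^6$. For the running time of iteration $i \ge 3$, the inductive size bound gives $|B_{i-3} \cup B_{i-2} \cup B_{i-1}| = \Otilde(n/C^{i-3})$ and hence $\Otilde(n^2/C^{2i})$ pivot pairs; each pair is processed in $\Otilde(C^{2(i+1)})$ time (reading $P(x,y)$, performing $O(C^{i+1})$ constant-time DSO queries along it, and reporting each triggered replacement path in $O(C^{i+1})$ time). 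This yields $\Otilde(n^2)$ per iteration, plus another $\Otilde(n^2)$ for the greedy call by Lemma~\ref{lemma:greedy-correctness}, summing to the claimed $\Otilde(n^2)$ total across the $O(\log n)$ levels.

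For correctness, I would prove the shortest- and replacement-path statements jointly by induction on $i$, treating the shortest-path case as the degenerate ``no failure'' case. The base case $i \le 2$ is trivial because $B_i = V$. For the inductive step, fix $(s,t,f)$ with $d(s,t,f) \in (C^i, C^{i+1}]$ and any shortest $s$-$t$ path $P$ in $G-f$. Select $u, v \in V(P)$ with $d(s,u,f)$ and $d(v,t,f)$ around $C^{i-2}$, apply the inductive hypothesis to $(s,u,f)$ and $(v,t,f)$ to obtain pivots $x, y \in B_{i-3} \cup B_{i-2} \cup B_{i-1}$ on some shortest $s$-$u$ and $v$-$t$ paths in $G-f$, and replace the prefix $P[s..u]$ and suffix $P[v..t]$ by these paths. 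The total length stays equal to $d(s,t,f)$, so the result is again a shortest $s$-$t$ path in $G-f$ and we may assume $x, y \in V(P)$. The elementary inequality $C^i - 2C^{i-2} > C^{i-6}$, which holds for every $C \ge 3/2$, then gives $d(x,y,f) > C^{i-6}$, while $d(x,y,f) \le |P| \le C^{i+1}$.

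The last step is to verify that the algorithm actually processes the pair $(x,y)$ together with the failure $f$. If $f \in E(P(x,y)) \cup V(P(x,y))$ for the stored $G$-shortest path $P(x,y)$, lines~7--11 query the DSO and, since $d(x,y,f) \in (C^{i-6}, C^{i+1}]$, insert $P(x,y,f)$ into $\mathcal{P}_i$; the greedy step then returns $z \in B_i$ on $P(x,y,f)$, so $d(x,y,f) = d(x,z,f) + d(z,y,f)$. Otherwise $P(x,y)$ avoids $f$ entirely, so $d(x,y) = d(x,y,f) \in (C^{i-6}, C^{i+1}]$ and $P(x,y)$ is inserted via line~4; any $z \in V(P(x,y)) \cap B_i$ satisfies $d(x,z) = d(x,z,f)$ and $d(z,y) = d(z,y,f)$ because no subpath of $P(x,y)$ uses $f$. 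Combining either case with $x, y \in V(P)$ gives $d(s,t,f) = d(s,z,f) + d(z,t,f)$, closing the induction. I expect the main obstacle to be precisely this tight control of $d(x,y,f)$ inside $(C^{i-6}, C^{i+1}]$: the greedy lemma requires a lower bound comparable to the upper one to keep $|B_i| = \Otilde(n/C^i)$, while the inductive hypothesis restricts the placement of pivots to a window governed by $B_{i-3} \cup B_{i-2} \cup B_{i-1}$, and it is this balancing act that dictates the six-level gap in $(C^{i-6}, C^{i+1}]$ and the hypothesis $C \ge 3/2$.
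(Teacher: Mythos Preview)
Your proposal is correct and follows essentially the same approach as the paper: induction on $i$ for size, runtime, and correctness, with the key step of placing prefix/suffix pivots $x,y$ via the inductive hypothesis so that $d(x,y,f)$ lands in $(C^{i-6},C^{i+1}]$ (using the inequality $1-2/C^2 > C^{-6}$ for $C \ge 3/2$), then splitting on whether $f$ lies on $P(x,y)$. Your handling of the case $f \notin E(P(x,y)) \cup V(P(x,y))$ is in fact slightly cleaner than the paper's ``we may assume $P(x,y,f) = P(x,y)$'' phrasing, since you argue directly that subpaths of $P(x,y)$ witness the replacement distances.
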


\begin{proof}
We first prove by induction that for every $i \in [0,\lceil \log_C n\rceil]$ it holds that $|B_i| = \Otilde(n / C^i)$. 
The claim trivially holds for $i \le 2$ as $B_0 = B_1 = B_2 = V$. 
For the inductive step, we assume that $|B_{j}| = \Otilde(n/C^{j})$ for every $j<i$. 
We show that the set of paths $\mathcal{P}_i$ contains $\Otilde(n^2/C^{i})$ paths, each of length $\Theta(C^i)$, and thus the result of the greedy algorithm $B_i \gets \textsf{GreedyPivotSelection}(\mathcal{P}_i)$ contains, by \autoref{lemma:greedy-correctness}, at most $\Otilde(n/C^i)$ vertices. Moreover, the runtime of the \textsf{GreedyPivotSelection} procedure is $\Otilde(n^2)$. 

For every $s,t \in V$, let $P(s,t)$ denote the shortest $s$-$t$-path in the APSP data.
There are two places where paths are added to $\mathcal{P}_i$.
In line 4, the algorithm adds shortest paths between vertices $x,y \in B_{i-3} \cup B_{i-2} \cup B_{i-1}$ whenever $d(x,y) = |P(x,y)| \in (C^{i-6}, C^{i+1}]$, and by the induction hypothesis there are $\Otilde(n^2/C^{2(i-1)}) = \Otilde(n^2/C^i)$ such pairs of vertices 
(since $|B_{j}| = \Otilde(n/C^{j})$ for every $j < i$).
Thus, the claim holds for the paths in line 4.
In line 11, the algorithm adds paths $P(x,y,f)$ to $\mathcal{P}_i$ only for pairs $x,y \in B_{i-3} \cup B_{i-2} \cup B_{i-1}$ and edges or vertices $f \in E(P(x,y)) \cup V(P(x,y))$ with $d(x,y) \le C^{i+1}$, there are $\Otilde(C^{i+1} \cdot (n^2/C^{2(i-1)})) = \Otilde(n^2/C^i)$ such triples $(x,y,f)$. The only paths added there are such that $d(x,y,f) \in (C^{i-6}, C^{i+1}]$ (due to the condition in line 9) and thus the length of $P(x,y,f)$ is $\Theta(C^i)$.
So the claim holds here as well.

Next, we prove that the runtime of the algorithm is $\Otilde(n^2)$.
We show that a single iteration of the for loop in lines 4-16 takes $\Otilde(n^2)$ time, and as there are $O(\log n)$ iterations for $i \in [3, \lceil \log_C n \rceil]$.
The number of pairs $x,y \in B_{i-3} \cup B_{i-2} \cup B_{i-1}$ is $\Otilde(n^2/C^{2(i-1)})$. The inner loop in lines 7-11 is executed only if $d(x,y) \le C^{i+1}$, therefore the number of edges $e \in P(x,y)$ is bounded by $C^{i+1}$ and hence the loop is executed at most $O(C^{i})$ times. Each iteration of this loop uses the black-box DSO to compute $d(x,y,f)$ in $O(1)$, and only if $d(x,y,f) \in (C^{i-6}, C^{i+1}]$ then we use the DSO to actually obtain the path $P(x,y,f)$ in $O(|P(x,y,f)|) = O(C^i)$ time.
This gives $\Otilde(n^2)$ for the second-most outer loop. 
We have already seen that computing $\textsf{GreedyPivotSelection}(\mathcal{P}_i)$ in line 12 takes $\Otilde(n^2)$ time as well.

We claim that for all $s,t \in V$ and every edge or vertex $f \in E \cup V$ such that $d(s,t,f) \in (C^i, C^{i+1}]$, there exists a pivot $z \in B_i$ such that $d(s,t,f) = d(s,z,f) + d(z,t,f)$.
That means, there is some $s$-$t$-replacement path that contains $z$.
This is clear for $i \le 2$.
Let $3 \le i \le \lceil \log_C n \rceil$ and suppose the claim holds for every $j < i$.  
Let $P(s,t,f) = (v_0 = s, v_1, \ldots, v_k = t)$ be an replacement path with
$k = d(s,t,f) \in (C^i, C^{i+1}]$.
We define the prefix
$P_1 = P(s,t,f)[s.. v_{\lceil k/C^3 \rceil}]$
and suffix
$P_2 = P(s,t,f)[v_{\lceil (1- 1/C^3)k \rceil} ..t]$.
Both subpaths have length in $(C^{i-3}, C^{i-2}]$.
It follows that there are pivots $x_1, x_2 \in B_{i-3}$ 
with $x_1 \in V(P_1)$, $x_2 \in V(P_2)$.
(Strictly speaking, we are merely guaranteed \emph{some} $s$-$v_{\lceil k/C^3 \rceil}$-replacement path that contains $x_1$, but we can choose $ P(s,t,f)$ so that its prefix is that path;
same with $P_2$.)

Let $P(x_1, x_2, f)$ be the replacement paths returned by the DSO on query $(x_1,x_2,f)$.
We claim that it is added to $\mathcal{P}_i$.
Observe that $d(x_1, x_2, f) \ge d(s,t,f) - |P_1| - |P_2| \ge (1-\frac{2}{C^2}) C^i > C^{i-6}$,
where we used the assumption $C \ge 3/2$ and thus $1-\frac{2}{C^2} > C^{-6}$.
Also, we have $d(x_1, x_2) \le d(x_1, x_2, f) \le d(s,t,f) \le C^{i+1}$.
If $d(x_1, x_2,f) = d(x_1,x_2)$, we may assume $P(x_1,x_2,f) = P(x_1,x_2)$, 
whence it was added in line 4.
Otherwise, the failure $f \in V(P(x_1, x_2))$ is on the path.
Since $x_1, x_2 \in B_{i-3} \cup B_{i-2} \cup B_{i-1}$,
$d(x_1, x_2) \le C^{i+1}$,   and
$d(x_1, x_2, f) \in (C^{i-6}, C^{i+1}]$ the path $P(x_1, x_2, f)$ is indeed added to $\mathcal{P}_i$ in line~11.
Due to $B_i \gets \textsf{GreedyPivotSelection}(\mathcal{P}_i)$,
there exists a vertex $z \in B_i$ such that
$z$ is on the path $P(x_1, x_2, f) \subseteq P(s,t,f)$ and thus
$d(s,t,f) = d(s,z,f) + d(z,t,f)$. 

The proof that for all $s,t \in V$ with $d(s,t) \in (C^i, C^{i+1}]$, there exists a pivot $z \in B_i$ such that $d(s,t) = d(s,z) + d(z,t)$ follows the same argument but is somewhat simpler
because the subpaths $P_1$ and $P_2$ are guaranteed to be added in line 4.
\end{proof}

\subparagraph{Derandomizing \Cref{theorem:large_diam_n^5/6,theorem:large_diam_upper_bound}.}
Recall that the oracle in \autoref{lemma:large_diam} has preprocessing time
$\Otilde(mn+n^4/(\varepsilon^3 \diam(G)^3))$.
For its derandomization, and that of
\Cref{theorem:large_diam_n^5/6,theorem:large_diam_upper_bound},
it is enough to choose $C = 2$,
compute APSP \emph{only} in the original graph $G$,
and preprocess the DSO of Bernstein and Karger\footnote{%
	Bernstein and Karger~\cite{BeKa09} derandomized their own DSO
	using a technique by King~\cite{King99FullyDynamicAPSP}.
}~\cite{BeKa09},
which takes $\Otilde(mn)$ time.
Let $i^*$ be the largest integer $i$ such that $2^{i} < L = \varepsilon \diam(G)/2$.
The set $B_{i^*}$ then hits,
for all $s,t \in V$ and $e \in E$ with $d(s,t,e) = \Theta(L)$, 
some replacement path $P(s,t,e)$, and it has the desired cardinality $\Otilde(n/L)$.

\section{Derandomizing Existing Sensitivity Oracles and Algorithms}
\label{sec:derandomizing_existing}

We now show how the HDPH algorithm can be adapted to derandomize existing sensitivity oracles.
In addition to our own technique, we also extensively use a recent breakthrough 
by Karthik and Parter~\cite{KarthikParter21DeterministicRPC}
in the derandomization of fault-tolerant algorithms.
We combine both tools and apply them to the distance sensitivity oracle of Ren~\cite{Ren22Improved}
and the SSRP algorithm of Chechik and Magen~\cite{ChMa19}
In the main part, we concentrate on the DSO because we think that it is a good illustration
of the combination of our work and that of Karthik and Parter~\cite{KarthikParter21DeterministicRPC}.
The treatment of the SSRP algorithm can be found in \autoref{app:omissions_derad_existing}.

\subsection{The Distance Sensitivity Oracle of Ren}
\label{subsec:derand_exist_DSO}

We start with the oracle of Ren~\cite{Ren22Improved}.
Recall that, for any two vertices $s,t \in V$ and edge $e \in E$, 
the replacement distance $d(s,t,e)$ is the length of a shortest $s$-$t$-path in $G-e$.
A distance sensitivity oracle (DSO) is a data structure 
that answers query $(s,t,e)$ with $d(s,t,e)$.
Ren~\cite{Ren22Improved} presented an algebraic DSO with a randomized preprocessing time
of $O(Mn^{2.7233})$ on graphs with positive integer edge weights in the range $[1,M]$
and $\Otilde(Mn^{(\omega + 3)/2}) = O(Mn^{2.6865})$ time on undirected graphs.
Notably, this was the first DSO with both constant query time and subcubic preprocessing, 
improving over previous work~\cite{AlonChechikCohen19CombinatorialRP,BeKa09,GrandoniVWilliamsFasterRPandDSO_journal,WY13}.
We derandomize it with a slight increase in running time
and obtain a deterministic DSO in time $O(Mn^{2.8068})$ on directed graphs
and $\Otilde(Mn^{(\omega + 6)/3}) = O(Mn^{2.7910})$ on undirected graphs.

The construction starts with a \textsf{Core} oracle
that only reports very small distances, 
this is then grown iteratively to cover longer paths
until the distance between all pairs of vertices are correctly determined.
More formally, for a positive real $r$, let an $r$\emph{-truncated DSO}
report, upon query $(s,t,e)$, the value $d(s,t,e)$
if it is at most $r$, and $+\infty$ otherwise.
The \textsf{Core} is an \mbox{$n^{\alpha}$-truncated} DSO
for some carefully chosen exponent $\alpha \in (0,1)$. 
Each iteration invokes the procedure \textsf{Extend} to turn an \mbox{$r$-truncated} DSO
into an $(3/2) \nwspace r$-truncated DSO.
Note that we can assume $M = \Otilde(n^{(3-\omega)/2})$ 
as otherwise the deterministic oracle in~\cite{BeKa09} with an $\Otilde(mn)$ preprocessing time
already achieves $\Otilde(Mn^{(\omega + 3)/2})$, even on directed weighted graphs.
Hence, $\log_{3/2}(Mn) = O(\log n)$ rounds of growing suffice to built the full oracle.

The iterative approach has the advantage that $r$-truncated DSOs for small $r$
can be computed fast.
A bridging-set idea, see~\cite{Zwick02DirectedAPSP}, is used for the extension.
This significantly increases the query time as the oracle has to cycle through the 
whole bridging set to compute the distance.
Ren~\cite{Ren22Improved} uses a clever observation, there attributed to Bernstein and Karger~\cite{BeKa09},
to reduce the query time of the extended DSO back to a constant,
called the \textsf{Fast} procedure.

Randomness is employed at two points.
First, the \textsf{Core} uses a series random subgraphs of $G$.
Secondly, \textsf{Extend} randomly samples a set of pivots to hit all replacement paths
of length between $r$ and $(3/2) \nwspace r$.
The subsequent reduction in query time is deterministic.\footnote{
	The relevant \cite[Section~3]{Ren22Improved} is phrased as randomized,
	but based on the derandomizable algorithm in~\cite{BeKa09}.
}
The \textsf{Core} can be derandomized using a recent result by Karthik and Parter~\cite{KarthikParter21DeterministicRPC}.
To derandomize \textsf{Extend}, we adapt our technique introduced above.
The key differences are that we now have to take care of the edge weights,
that is, the number of vertices of a path may be much smaller than its length.
Also, due to the iterative approach of not only the derandomization
but the actual construction via truncated DSOs, we cannot assume 
to have access to all relevant paths right from the beginning.
Instead, we have to make sure that all intermediary oracles are \emph{path-reporting}
and that for the construction of the current hitting set we only use paths of length at most $r$.
The deterministic \textsf{Core} oracle hinges on the following lemma.

\begin{lemma}[Karthik and Parter~\cite{KarthikParter21DeterministicRPC}]
\label{lem:KarthikParter}
	Given a (possibly weighted) graph $G$ on $n$ vertices and a positive real $r = n^\alpha$
	for some $\alpha \in (0,1)$,
	there is a deterministic algorithm computing $k = O(r^2)$
	spanning subgraphs $G_1, \dots, G_k$ of $G$ in time $\Otilde(kn^2)$ 
	such that for any pair of vertices $s,t \in V$, edge $e \in E$,
	and replacement path $P(s,t,e)$ on at most $r$ edges, there exists an index $i$ such that $G_i$
	does not contain the edge $e$ but all edges of $P(s,t,e)$.
\end{lemma}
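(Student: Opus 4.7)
The plan is to derandomize the standard randomized construction of a replacement-path covering used, for instance, by Weimann and Yuster~\cite{WY13}. In the randomized version, one forms each $G_i$ by including every edge of $G$ independently with probability $p = 1-\frac{1}{r+1}$. For any fixed pair $(e,P)$ with $|E(P)|\le r$ and $e\notin E(P)$, the event ``$E(P)\subseteq E(G_i)$ and $e\notin E(G_i)$'' has probability $p^{|E(P)|}(1-p) = \Omega(1/r)$, so $k = O(r\log(nm))$ independent trials cover all $O(mn^2)$ pairs with high probability via a union bound. The goal of the proof is to replace the random trials by a deterministic construction of comparable size.

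The main obstacle is that a naive derandomization via $(r+1)$-wise independent sample spaces would require $n^{\Omega(r)}$ points, which is wildly too large once $r = n^{\alpha}$. The idea I would pursue is to decouple ``separating the relevant edges'' from ``choosing which bucket to drop''. Concretely, I would apply an explicit \emph{$(r{+}1)$-perfect hash family} (equivalently, a splitter in the sense of Naor, Schulman, and Srinivasan): an explicitly computable family $\mathcal{H}$ of functions $h\colon E \to [q]$ with $q = \Theta(r)$ and $|\mathcal{H}| = \widetilde{O}(r)$ such that every subset $S\subseteq E$ with $|S|\le r+1$ is mapped injectively by at least one $h\in\mathcal{H}$. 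For each $(h,j)\in\mathcal{H}\times[q]$, define the spanning subgraph $G_{h,j}$ whose edge set is $\{e'\in E : h(e')\neq j\}$, and output the collection $\{G_{h,j}\}$ of size $k = |\mathcal{H}|\cdot q = O(r^2)$ (up to polylogarithmic factors absorbed by $\widetilde{O}$).

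Correctness follows by a short case analysis: given a query $(e,P)$ with $|E(P)|\le r$, choose $h\in\mathcal{H}$ that separates $E(P)\cup\{e\}$ into distinct buckets, and set $j = h(e)$; then $G_{h,j}$ excludes $e$ while retaining every edge of $P$, as required. For the running time, each subgraph $G_{h,j}$ is built in $O(m)\le O(n^2)$ time by a single scan over $E$, giving total preprocessing $\widetilde{O}(k n^2)$, matching the stated bound.

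The hardest part, and the one deserving the most care, is obtaining an explicit splitter family of the right parameters within the claimed runtime; once that combinatorial object is in hand, the rest of the argument is a direct bookkeeping exercise. I would therefore spend most of the proof either citing a suitable explicit construction of $k$-perfect hash families (or of related ``universe splitters'') or, if none fits exactly, building one by iterating coarser splitters of $E$ into buckets of geometrically shrinking size, each layer reducing the collision probability among any fixed $r{+}1$ edges. The final parameters $q=\Theta(r)$ and $|\mathcal{H}| = \widetilde{O}(r)$ should drop out of balancing those layers.
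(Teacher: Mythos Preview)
This lemma is quoted in the paper as a black-box result of Karthik and Parter~\cite{KarthikParter21DeterministicRPC}; the paper itself gives no proof. So there is nothing to compare against on the paper's side, and I will simply assess whether your sketch would work.

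Your overall scheme---hash $E$ into $q$ buckets and, for each (function, bucket) pair, delete that bucket---is exactly the right template, but the combinatorial object you invoke does not exist with the stated parameters. You ask for a family $\mathcal{H}$ of $\widetilde{O}(r)$ maps $E\to[q]$ with $q=\Theta(r)$ that is \emph{$(r{+}1)$-perfect}, i.e., injective on every $(r{+}1)$-subset. A uniformly random map to $[\Theta(r)]$ is injective on a fixed $(r{+}1)$-set with probability only $\prod_{i=0}^{r}(1-i/q)=e^{-\Omega(r)}$, and standard lower bounds on perfect hash families (Fredman--Koml\'os, K\"orner) confirm that any such family must have size $e^{\Omega(r)}$. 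Since $r=n^{\alpha}$, this is super-polynomial; neither Naor--Schulman--Srinivasan splitters nor your ``iterated coarser splitters'' can circumvent this barrier, which is information-theoretic.

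What Karthik and Parter actually do keeps your hash-and-delete template but relaxes the requirement: one only needs $h(e)\notin h(E(P))$, i.e., the failing edge lands in its own bucket, while the path edges may collide among themselves. This one-versus-many separation \emph{is} achievable with $|\mathcal{H}|=O(r)$ and $q=O(r)$ via error-correcting codes. Encode each edge by a Reed--Solomon codeword in $\mathbb{F}_q^{L}$ with $q=L=\Theta(r)$ and relative distance exceeding $1-1/r$ (feasible because $r=n^\alpha$ lets the message dimension be $O(1/\alpha)$); let $h_i(e)$ be the $i$-th symbol. Any two codewords agree on fewer than $L/r$ coordinates, so $e$ shares a coordinate value with some edge of $P$ on fewer than $r\cdot L/r=L$ coordinates in total, leaving an $i$ with $h_i(e)\notin h_i(E(P))$. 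Taking $G_{i,\sigma}=G\setminus\{e':h_i(e')=\sigma\}$ then yields $L\cdot q=O(r^2)$ subgraphs, as claimed.
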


\noindent
This derandomizes a construction by Weimann and Yuster~\cite{WY13} with the crucial
difference that the latter is only required to produce subgraphs such that
for all pairs of vertices $s,t$ and edges $e$ that admit possibly multiple replacement paths on at most $r$ edges \emph{at least one} (instead of all) of them survives in one of the graphs $G_i$ in which $e$ was removed.
This relaxed condition is actually enough to build an $r$-truncated DSO and allows one to make do with only $\Otilde(r)$ random subgraphs, while we have $O(r^2)$ deterministic graphs.
See also the discussion in Section 1.3 of~\cite{KarthikParter21DeterministicRPC}.
This is the sole reason for the increased running time 
compared to the original result of Ren~\cite{Ren22Improved}.
 
Given a graph $G$ with integer edge weights in the range $[1,M]$,
we invoke \autoref{lem:KarthikParter} to obtain the subgraphs $G_i$.
Recall that $r = n^{\alpha}$ and let $\omega(1\,{-}\,\alpha)$ be the infimum over all $w$ such that
rectangular integer matrices with dimensions $n \times n^{1-\alpha}$ and $n^{1-\alpha} \times n$
can be multiplied using $O(n^w)$ arithmetic operations,
$\omega = \omega(1)$ is the usual square matrix multiplication coefficient.
Using a variant of Zwick's algorithm~\cite{Zwick02DirectedAPSP},\footnote{%
	The algorithm in~\cite{Zwick02DirectedAPSP} is also phrased as randomized,
	in the same work it is explained how to derandomize it,
	increasing the running time only by polylogarithmic factors.
	The same holds for~\cite{ShZw99}.
}
we compute APSP restricted to paths 
on at most $r$ edges in time $\Otilde(M n^{\omega(1-\alpha)}r)$ per subgraph.
If $G$ is undirected, then this can be done faster,
namely, in $\Otilde(Mn^\omega)$ per graph with the algorithm of Shoshan and Zwick~\cite{ShZw99}.
Both algorithms in~\cite{ShZw99,Zwick02DirectedAPSP} can be adjusted to also compute
the actual paths, represented as predecessor trees, which increases the running time
only by logarithmic factors.

To answer a query $(s,t,e)$ we cycle through 
the $G_i$ and, in case the edge $e$ is missing,
retrieve the distance $d_{G_i}(s,t)$.
By \autoref{lem:KarthikParter}, the minimum over all retrieved distances
is the correct replacement distance $d(s,t,e)$.
If this minimum is larger than $r$ or no distance has been retrieved at all 
(as the paths take more than $r$ edges), we return $+\infty$.
Since the edge weights are positive integers, every path of \emph{length} at most $r$
uses at most $r$ edges, so we indeed obtain an $r$-truncated DSO.
If an actual replacement path is requested, we return a shortest $s$-$t$-path
in one of the $G_i$ that attain the minimum.
The resulting oracle has query time $\Otilde(r^2)$ and a
$\Otilde(n^2r^2 + Mn^{\omega(1-\alpha)}r^3) = \Otilde(Mn^{\omega(1-\alpha)}r^3)$
preprocessing time on directed graphs
(using $\omega(1-\alpha) \ge 2$). 
On undirected graphs, this improves to
$\Otilde(n^2 r^2 + Mn^{\omega}r^2) = \Otilde(Mn^{\omega}r^2)$.

As a technical subtlety, 
the \textsf{Fast} procedure needed to reduce the query time
requires \emph{unique} shortest paths\footnote{%
	By \emph{unique shortest paths}, we mean a collection $\mathcal{P}$
	containing one shortest path for each pair of vertices such that,
	for all $s,t \in V$, if $P(s,t) \in \mathcal{P}$ is the shortest path from $s$ to $t$,
	then for every vertex $u$ on $P(s,t)$, 
	the path $P(s,u) \in \mathcal{P}$ is the prefix $P(s,t)[s..u]$
	and $P(u,t) \in \mathcal{P}$ is the suffix $P(s,t)[u..t]$.
}
of the original graph $G$.
They can be computed in time $\Otilde(M^{1/2} \nwspace n^{(\omega+3)/2})$~\cite{DP09MaxMin,Ren22Improved}.
We will see later that this is not the bottleneck of the preprocessing.

\begin{lemma}[Ren~\cite{Ren22Improved}]
\label{lem:Rens_observation}
	From a directed graph $G$ with integer edge weights in $[1,M]$,
	unique shortest paths, and an $r$-truncated DSO with preprocessing time $P$
	and query time $Q$, one can built in deterministic time $P + \Otilde(n^2) \cdot Q$
	a $r$-truncated DSO for $G$ with $O(1)$ query time.
\end{lemma}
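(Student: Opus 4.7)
The plan is to precompute a deterministic hash table (using the Hagerup--Miltersen--Pagh dictionary already invoked earlier in the paper) that, together with the APSP distance matrix, answers each query in $O(1)$ time. The crux is to exploit the uniqueness of shortest paths in $G$ so that both the size of the table and the number of calls to the given DSO are bounded by $\Otilde(n^2)$.

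First, I would extract the APSP matrix $\{d(s,t)\}_{s,t \in V}$ and the shortest-path out-trees $T_s$ directly from the unique shortest paths supplied as input, in $O(n^2)$ time. By uniqueness, $d(s,t,e)=d(s,t)$ whenever the failed edge $e$ does not lie on the unique shortest $s$-$t$-path $P(s,t)$, since $P(s,t)$ then survives in $G-e$. Hence the only triples $(s,t,e)$ that need a dedicated entry in the hash table are those with $e \in E(P(s,t))$ and $d(s,t,e) \le r$; all remaining queries will be served directly from the APSP matrix (capped at $+\infty$ when the base distance exceeds $r$).

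Second, I would populate the hash-table entries by querying the given $r$-truncated DSO. The Bernstein--Karger observation used in Ren's \textsf{Fast} procedure batches the computation: once we fix a source $s$ and a failed tree edge $e=(u,v)$ of $T_s$, only descendants of $v$ in $T_s$ can have $e$ on their shortest path from $s$, and the entire family of replacement distances $d(s,t,e)$ along that subtree can be harvested from the DSO using $\Otilde(1)$ queries per source--tree-edge pair, by exploiting a bridging/pivot vertex whose optimal choice is monotone along the subtree. Aggregated over all $n$ sources and the at most $n-1$ tree edges of each $T_s$, this amounts to $\Otilde(n^2)$ DSO calls, each taking $Q$ time, for a total overhead of $\Otilde(n^2)\cdot Q$. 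The collected entries are then inserted into the deterministic dictionary of \cite{HagerupMiltersenPagh01DeterministicDictionaries} in $\Otilde(n^2)$ further time, so that on query $(s,t,e)$ one first probes the dictionary and returns the stored value if present, and otherwise returns $d(s,t)$ from the APSP matrix (or $+\infty$ if it exceeds $r$). The preprocessing time is $P + \Otilde(n^2)\cdot Q$ and the query time is $O(1)$, matching the statement.

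The main obstacle I anticipate is justifying the $\Otilde(n^2)$ query bound: a naive strategy pays one DSO call per interesting triple, which is up to $\Theta(n^2 r)$ calls. Overcoming this requires the batched tree traversal described above, and arguing correctness amounts to showing that a handful of queries per source--tree-edge pair indeed pin down the replacement distances for every affected descendant, using the hierarchical structure of the shortest-path tree $T_s$ together with the uniqueness of shortest paths to rule out spurious alternative detours.
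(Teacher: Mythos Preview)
The paper does not actually prove this lemma; it is stated with attribution to Ren~\cite{Ren22Improved} and used as a black box, so there is no ``paper's own proof'' to compare against. That said, your proposal does not stand on its own.

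The gap is in your counting. You propose to store a dictionary entry for every triple $(s,t,e)$ with $e \in E(P(s,t))$ and $d(s,t,e)\le r$, but the number of such triples is $\sum_{s,t} |E(P(s,t))|$, which can be $\Theta(n^2 r)$ (or $\Theta(Mn^3)$ once $r$ has grown to full range), not $\Otilde(n^2)$. Your claim that ``the entire family of replacement distances $d(s,t,e)$ along that subtree can be harvested from the DSO using $\Otilde(1)$ queries per source--tree-edge pair'' cannot be right as stated: a single DSO query returns a single number, so $\Otilde(1)$ queries per $(s,e)$ pair cannot produce the up to $\Theta(n)$ distinct values $d(s,t,e)$ for all descendants $t$. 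The ``monotone bridging pivot'' remark is a hand-wave; there is no such monotonicity that collapses all these values into $O(1)$ queries, and even if the queries could be batched, the \emph{storage} would still be $\Theta(n^2 r)$.

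What the actual \textsf{Fast} procedure does is fundamentally different from per-triple tabulation. It stores, for every pair $(s,t)$, only $O(\log n)$ carefully chosen values indexed by a level $i\in[0,\log|P(s,t)|]$ (essentially replacement distances for edges at canonical positions $2^i$ along $P(s,t)$, together with certain ``exclude-interval'' maxima). This yields $\Otilde(n^2)$ stored values in total, each computable with $O(1)$ calls to the slow DSO. At query time, the position of $e$ on $P(s,t)$ is located (this is where uniqueness of shortest paths is used), the appropriate level $i$ is read off, and the answer is reconstructed from $O(1)$ of the stored values. Your first step---reducing to $e\in E(P(s,t))$ via uniqueness---is correct and is indeed the starting point of Ren's construction, but the data structure you describe afterwards is the wrong one.
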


\noindent
Without access to unique paths, the running time increases to 
$P + \Otilde(Mn^2) \cdot Q$, see~\cite{Ren20}.
If the oracle with query time $Q$ (for the distance)
is path-reporting (in $O(1)$ time per edge),
then the new oracle is path-reporting with $O(1)$ query time
(for distances and edges)~\cite{Ren22Improved}.

We now turn to the main part, where we derandomize the \textsf{Extend} procedure
that turns an \mbox{$r$-truncated} DSOs into $(3/2) \nwspace r$-truncated DSOs.
We adapt our technique to the iterative manner of construction
and to the integer weights on the edges.
In each stage, we only have access to a truncated DSO.
Still, we show how to deterministically compute
a sequence $B_1, B_2, \dots$ of smaller and smaller sets,
where $B_i$ is used to derandomize the $i$-th application of \textsf{Extend}.
Again, the construction of $B_i$ depends on the previous sets of pivots, namely, on $B_{i-2}$. 
We first describe how to obtain the $B_i$ satisfying certain useful properties
and afterwards verify that these properties indeed suffice to make \textsf{Extend}
deterministic.

\begin{lemma}
\label{lem:hierarchical_pivots}
	Let $r_1 \ge 1$ be a real number and define $r_{i+1} = (3/2) \nwspace r_{i}$.
	For a graph $G$ with integer edge weights in $[1,M]$,
	let $\{B_i\}_{i \ge 1}$ be a family of subsets of $V$,
	such that, for each $i$,
	\begin{inparaenum}
		\item[\textsf{\emph{(i)}}] $|B_i| =  \Otilde(Mn/r_i)$;
		\item[\textsf{\emph{(ii)}}] for every pair of $s,t \in V$ and $e \in E$ with $r_i/2 -M \le d(s,t,e) \le r_i$,
			there exists a replacement path $P(s,t,e)$ that contains a vertex of $B_i$.
	\end{inparaenum}
	With access to the shortest paths, 
	a path-reporting $r_i$-truncated DSO with $O(1)$ query time, and the sets $B_j$ with $j < i$,
	one can compute each $B_i$ deterministically in time $\Otilde(M^2 \nwspace n^2 + n^2 r_1^2)$.
\end{lemma}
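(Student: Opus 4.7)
The plan is to imitate the HDPH algorithm, but with vertex counts replaced by weight thresholds and all replacement-path queries routed through the $r_i$-truncated DSO. Because an edge can weigh up to $M$, a replacement path of weight $L$ spans only $\Omega(L/M)$ vertices, and feeding this effective length into \autoref{lemma:greedy-correctness} is exactly what produces the factor $M$ in $|B_i| = \Otilde(Mn/r_i)$. Fix a constant $c$ with $(3/2)^{-c} \le 1/8$, e.g.\ $c = 6$. The construction distinguishes a base regime $i \le c$, where we iterate over all ordered pairs $s,t \in V$ with $d(s,t) \le r_i$, and an inductive regime $i > c$, where we iterate only over pairs $x, y \in B_{i-c}$ with $d(x,y) \le r_i$. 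For each such pair, we walk through the edges of the stored shortest path, query the truncated DSO on every edge, and insert the returned replacement path into $\mathcal{P}_i$ whenever its weight lies in a ``middle'' range of roughly $[r_i/4 - 2M,\, r_i]$. Finally, $B_i := \textsf{GreedyPivotSelection}(\mathcal{P}_i)$.

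For correctness, fix $s,t,e$ with $d(s,t,e) \in [r_i/2 - M,\, r_i]$ and let $P = P(s,t,e)$ be a replacement path. Let $P_1$ be the maximal prefix of $P$ of weight at most $r_{i-c}$; since each edge weighs at most $M$, the weight of $P_1$ lies in $(r_{i-c} - M,\, r_{i-c}]$, which sits inside the hit range $[r_{i-c}/2 - M,\, r_{i-c}]$ of $B_{i-c}$. By the induction hypothesis, some $x_1 \in B_{i-c}$ lies on an $s$-to-endpoint$(P_1)$ replacement path, and the canonical-path trick used in the proof of \autoref{lemma:derandomize-FDO} lets us assume $x_1 \in V(P_1)$; symmetrically, we obtain $x_2 \in B_{i-c}$ on the analogous suffix $P_2$. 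Applying the same trick to the middle, we may further assume that the $x_1$-to-$x_2$ segment of $P$ coincides with the replacement path returned by the DSO on query $(x_1, x_2, e)$. This segment has weight at least $(r_i/2 - M) - 2 r_{i-c} \ge r_i/4 - M$ (using $r_{i-c} \le r_i/8$) and at most $r_i$, so it was added to $\mathcal{P}_i$ while $(x_1, x_2)$ was processed, and the greedy selection places a vertex of $B_i$ on it, hence on $P$.

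For the size and runtime, every path in $\mathcal{P}_i$ has weight at least $r_i/4 - O(M)$ and thus $\Omega(r_i/M)$ vertices, so \autoref{lemma:greedy-correctness} yields $|B_i| = \Otilde(Mn/r_i)$ together with an $\Otilde(n^2)$ greedy cost. In the inductive regime the outer loop visits $|B_{i-c}|^2 = \Otilde(M^2 n^2 / r_i^2)$ pairs, each performing $O(r_i)$ distance queries plus at most $O(r_i)$ replacement-path retrievals of cost $O(r_i)$, totaling $\Otilde(M^2 n^2)$; in the base regime the pair count rises to $n^2$ but $r_i = O(r_1)$, contributing $\Otilde(n^2 r_1^2)$. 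The main obstacle is the weighted bookkeeping: we must carry the additive $M$ slack through the induction while keeping the prefix/suffix weights inside the hit range of $B_{i-c}$ and the middle subpath of weight $\Theta(r_i)$, which is exactly why we step back $c$ stages here rather than the three steps that sufficed in the unweighted HDPH algorithm.
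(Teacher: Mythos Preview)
Your approach is essentially the paper's: inductively build $\mathcal{P}_i$ from a previous hitting set, then apply \textsf{GreedyPivotSelection}. The only substantive difference is cosmetic---the paper steps back two stages (using $B_{i-2}$, with $r_{i-2}=(4/9)r_i$ and threshold $r_i/18$) rather than your six, so your closing remark that the weighted setting forces a larger step-back than the unweighted HDPH is not borne out.

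There is, however, one genuine omission in your construction. You only enqueue replacement paths $P(x,y,e')$ for edges $e'$ on the \emph{stored shortest path} $P(x,y)$; you never add $P(x,y)$ itself to $\mathcal{P}_i$. In the correctness argument you then assert that the $x_1$--$x_2$ segment ``was added to $\mathcal{P}_i$ while $(x_1,x_2)$ was processed.'' That is justified only when the failing edge $e$ lies on $P(x_1,x_2)$. If $e\notin E(P(x_1,x_2))$, then $d(x_1,x_2,e)=d(x_1,x_2)$ and the relevant middle path is $P(x_1,x_2)$ itself, which your procedure never inserted---so $B_i$ is under no obligation to hit it. The paper handles this explicitly: it adds $P(x,y)$ to $\mathcal{P}$ whenever its length is at least $r_i/18$, and its correctness proof splits on whether $e$ lies on $P(x,y)$ or not. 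Adding the same step (insert $P(x,y)$ whenever its weight falls in your middle range) closes the gap at no extra asymptotic cost; without it the argument is incomplete in both the base and inductive regimes.
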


\begin{proof}
	The proof is by induction over $i$.
	For the construction of $B_i$, we use the previous set $B_{i-2}$.
	We set $B_{-1} = B_{0} = V$ to unify notation.
	Following the outline of the derandomization technique,
	we first assemble a set $\mathcal{P}$ of paths 
	and then greedily compute a hitting set.
	
	For each pair of vertices $x,y \in B_{i-2}$, we check whether
	the $x$-$y$-distance in the base graph $G$ is at most $r_i$
	and, if so, retrieve a shortest path $P(x,y)$.
	If $P(x,y)$ additionally has length at least $r_i/18$, we add it to $\mathcal{P}$.
	For each edge $e$ on $P(x,y)$ (regardless of the path being added to $\mathcal{P}$),
	we query the $r_i$-truncated DSO whether
	the replacement distance is  $d(x,y,e) \in [\frac{r_i}{18},r_i]$.
	If so, we request a corresponding replacement path $P(x,y,e)$
	to add it to $\mathcal{P}$.

	Due to the positive weights, those paths have at most $r_i$ edges 
	and can be obtained in time $O(r_i)$.	
	Assembling $\mathcal{P}$ thus takes time $O(|B_{i-2}|^2 \nwspace r_i^2)$.
	If $i \le 2$, this is $O(n^2 \nwspace r_1^2)$ since $r_2 = (3/2) \nwspace r_1$.
	For $i \ge 3$, we get
	$\Otilde((M n/r_{i-2})^2 \cdot r_i^2) = \Otilde(M^2 \nwspace n^2)$ instead.

	We deterministically compute a hitting set $B_i$ for $\mathcal{P}$.	
	Since $\mathcal{P}$ contains at most $|B_{i-2}|^2 \cdot r_i$ paths
	with at least $r_i/(18M)$ edges each, whence $\Omega(r_i/M)$ vertices,
	the set $B_i$ has $\Otilde(n/(r_i/M)) = \Otilde(Mn/r_i)$ vertices
	and is computable in time $\Otilde(|\mathcal{P}| \cdot (r_i/M))$.
	As before, for $i \le 2$, this is $\Otilde(n^2 r_1^2/M)$;
	and $\Otilde(M n^2)$ otherwise.
	We get a running time of $\Otilde(M^2 \nwspace n^2 + n^2 r_1^2)$.
	
	It is left to prove that $B_i$ indeed hits at least one replacement path
	for all $s,t \in V$ and $e \in E$ that satisfy 
	$d(s,t,e) \in [\tfrac{r_i}{2} - M, \nwspace r_i]$.
	Let $P(s,t,e)$ be such a path
	and define $u$ to be the first vertex on $P(s,t,e)$ (starting from $s$) 
	such that $d(s,u,e) \ge (2/9) \nwspace r_i - M$.
	If $i \ge 3$, then $r_{i-2} = (4/9) \nwspace r_i$,
	whence $d(s,u,e) \in [\frac{r_{i-2}}{2} -M, \nwspace \frac{r_{i-2}}{2})$
	and the induction hypothesis implies that there is \emph{some} replacement path $P'$
	from $s$ to $u$ avoiding the edge $e$ such that $B_{i-2}$ contains one vertex of $P'$.
	Otherwise, if $i \le 2$, the same fact simply follows from $B_{i-2} = V$.

	The path $P'$ is not necessarily equal to the prefix of $P(s,t,e)[s..u]$
	(but they have the same length $d(s,u,e)$).
	Replacing $P(s,t,e)[s..u]$ with $P'$ gives a new replacement path
	that now has a pivot $x \in B_{i-2}$ on its prefix.
	Slightly abusing notation, we use $P(s,t,e)$ to denote also the updated path.
	Let $v$ be the last vertex on $P(s,t,e)$ with $d(v,t,e) \ge (2/9) \nwspace r_i - M$.
	By the same argument, we can assume that the suffix 
	of $P(s,t,e)[v..t]$ contains a pivot $y \in B_{i-2}$.
	In the remainder, we show 
	that there is some replacement path $P(x,y,e)$ that is hit by a vertex in $B_i$.
	If so, replacing the middle part $P(s,t,e)[x..y]$ with $P(x,y,e)$
	finally proves the existence of a replacement path from $s$ to $t$ avoiding $e$
	and containing a vertex of $B_i$.
	
	By the choice of the pivots $x,y$ and the assumption $d(s,t,e) \in [\frac{r_i}{2}-M, r_i]$,
	the replacement distance $d(x,y,e)$ satisfies
	$$r_i \ge d(s,t,e) \ge d(x,y,e) \ge d(s,t,e) - d(s,u,e) - d(v,t,e) \ge
	    d(s,t,e) - 2 \left(\frac{2 \nwspace r_i}{9} - M\right) \ge \frac{r_i}{18}.$$
	First, assume that the shortest path $P(x,y)$ in the base graph $G$ 
	does not contain the edge $e$.
	Then, $P(x,y)$ can serve as the replacement path.
	It has length $d(x,y) = d(x,y,e)$ between $r_i/18$ and $r_i$,
	and we added it to $\mathcal{P}$.
	Otherwise, it holds that $e \in P(x,y)$.
	Observe that $d(x,y) \le d(x,y,e) \le r_i$ remains true.
	Therefore, we have queried the $r_i$-truncated DSO with the triple $(x,y,e)$.
	Due to $d(x,y,e) \ge r_i/18$, 
	we received a replacement path $P(x,y,e)$, which we added to $\mathcal{P}$.
	In both cases, some replacement path is hit by $B_i$, as desired.
\end{proof}

At first glance, it looks like the quadratic dependence on $M$ is too high
to be used in the derandomization.
However, recall that we can assume $M = \Otilde(n^{(3-\omega)/2})$.
Over the $O(\log n)$ rounds with $i \ge 3$ and with access to the appropriately truncated DSOs,
we can compute the sets $B_3, B_4, \dots$ in
total time $\Otilde(M^2 \nwspace n^2) = \Otilde(M n^{(7-\omega)/2}) = \Otilde(M n^{2.5})$
even if $\omega = 2$.

The next lemma is the last tool we need to construct the deterministic DSO.

\begin{lemma}
\label{lem:bridging_set}
	Let $G$ be a graph with integer edge weights in the range $[1,M]$, 
	$r \ge 1$ a real number,
	and $B \subseteq V$ a set of $\Otilde(Mn/r)$ vertices such that 
	for every pair of $s,t \in V$ and $e \in E$ with $r/2 -M \le d(s,t,e) \le r$,
	there exists a replacement path $P(s,t,e)$ that contains a vertex of $B$.
	Given an $r$-truncated DSO for $G$ with $O(1)$ query time and the set $B$,
	one can, without further preprocessing,
	construct an $(3/2) \nwspace r$-truncated DSO with query time $\Otilde(Mn/r)$.
	Moreover, if the $r$-truncated DSO is path-reporting, so is the $(3/2) \nwspace r$-truncated one.
\end{lemma}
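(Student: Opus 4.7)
The idea is a classical bridging-set argument with $B$ as the bridging set. On a query $(s,t,e)$, the new oracle first invokes the given $r$-truncated DSO on $(s,t,e)$; if the answer is at most $r$, it is returned directly. Otherwise, the oracle iterates over every $b\in B$, queries the $r$-truncated DSO for $d(s,b,e)$ and $d(b,t,e)$ (treating each out-of-range answer as $+\infty$), and returns $\min_{b\in B}\bigl(d(s,b,e)+d(b,t,e)\bigr)$, output as $+\infty$ whenever this minimum exceeds $(3/2)r$. Beyond storing $B$ alongside the input DSO, no further preprocessing is needed, and each query performs $2|B|+O(1) = \Otilde(Mn/r)$ constant-time calls to the underlying oracle.

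The cases $d(s,t,e)\le r$ and $d(s,t,e) > (3/2)r$ are immediate. For the essential case $D := d(s,t,e)\in (r,(3/2)r]$, I would exhibit some $b\in B$ that lies on a replacement $s$-$t$-path and satisfies both $d(s,b,e)\le r$ and $d(b,t,e) \le r$, so that the $r$-truncated DSO returns the exact values and they sum to $D$. Fix any replacement path $P$ of length $D$ in $G-e$, and let $u_1$ be the first vertex on $P$ with $d(s,u_1,e) \ge D-r$ and $u_2$ the last vertex on $P$ with $d(s,u_2,e) \le r$. Since edge weights lie in $[1,M]$, the positions of consecutive vertices on $P$ differ by at most $M$; hence $d(s,u_1,e) \in [D-r,\,D-r+M)$, $d(s,u_2,e) \in (r-M,\,r]$, and a short calculation gives that $u_1$ precedes $u_2$ on $P$ and the inner replacement distance $d(u_1,u_2,e) = d(s,u_2,e) - d(s,u_1,e)$ lands inside the window $[r/2-M,\,r]$ covered by the hypothesis on $B$.

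Applying the hypothesis to the sub-query $(u_1,u_2,e)$ then produces a shortest $u_1$-$u_2$-path $Q$ in $G-e$ passing through some $b\in B$. Concatenating $P[s..u_1]$, $Q$, and $P[u_2..t]$ yields a new replacement $s$-$t$-path of length $D$ through $b$. Along it, $d(s,b,e) = d(s,u_1,e) + d(u_1,b,e) \in [d(s,u_1,e),\,d(s,u_2,e)] \subseteq [D-r,\,r]$, so that both $d(s,b,e) \le r$ and $d(b,t,e) = D - d(s,b,e) \le r$. The path-reporting version is a direct corollary: for the minimising $b$ we retrieve the two subpaths from the underlying path-reporting oracle and concatenate, inheriting the constant per-edge reporting time.

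The main technical obstacle is exactly this splitting step. The hypothesis only hits replacement paths of length in $[r/2-M,\,r]$, so the choice of $u_1,u_2$ must simultaneously ensure that (i) the inner distance $d(u_1,u_2,e)$ lands inside this window and (ii) every vertex of the middle subpath (and in particular the resulting $b$) sits at a position in $[D-r,\,r]$, which is the only range where both halves fit under the $r$-truncation. The choice above is engineered to meet both requirements using only the weight bound $M$; the degenerate regime in which $M$ is comparable to $r$ is harmless because $|B| = \Otilde(Mn/r)$ then forces $B$ to contain essentially all of $V$, and the bridging becomes trivial.
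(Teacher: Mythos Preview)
Your proof is essentially the same as the paper's: the same bridging-set query algorithm, the same two breakpoints (your $u_1,u_2$ are exactly the paper's $u$ and $v$, just parameterised from the $s$-side instead of the $t$-side), and the same swap-in of a $B$-hit replacement path on the middle segment to place a pivot within reach of the $r$-truncated oracle on both halves. One minor arithmetic point, shared with the paper's own write-up: from $d(s,u_1,e)\in[D-r,\,D-r+M)$ and $d(s,u_2,e)\in(r-M,\,r]$ you only get $d(u_1,u_2,e) > r/2 - 2M$, not $\ge r/2 - M$, so the stated window is off by one $M$; this is cosmetic and does not affect the overall construction.
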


\begin{proof}
	For any query $(s,t,e)$, let $D(s,t,e)$ denote the 
	returned value by the $r$-truncated DSO.
	If $D(s,t,e) \neq +\infty$, we also take this as the answer of the 
	\mbox{$(3/2) \nwspace r$-truncated DSO}.
	Otherwise, define $\ell = \min_{z \in B} \{ D(s,z,e) + D(z,t,e) \}$.
	If $\ell \le (3/2) \nwspace r$, we return $\ell$, and $+\infty$ else.
	Path queries are handled in the same fashion.
	In the case of $D(s,t,e) \neq +\infty$, we pass on the path $P(s,t,e)$
	returned by the $r$-truncated DSO.
	If $\ell \le (3/2) \nwspace r$, we return the concatenation of $P(s,z,e)$ and $P(z,t,e)$
	for some pivot $z \in B$ that attains the minimum $\ell$.
	The query time is $O(|B|) = \Otilde(Mn/r)$ for the distance, 
	after which the path can be returned in $O(1)$ per edge.
	
	It is clear that the query algorithm is correct whenever $d(s,t,e) \le r$
	as those queries are entirely handled by the given truncated DSO.
	Moreover, even if $d(s,t,e) > r$, then $\ell$ is an upper bound for $d(s,t,e)$
	because all sums $D(s,z,e) + D(z,t,e)$ correspond to some
	path from $s$ to $t$ avoiding $e$, but not necessarily a shortest path.
	
	Let $P = P(s,t,e)$ be a replacement path of length between $r$ and $(3/2) \nwspace r$,
	$u$ the first vertex on $P$ (seen from $s$) with $d(u,t,e) \le r$,
	and $v$ the last vertex on $P$ with $d(s,v,e) \le r$.
	Note that $v$ lies between $u$ and $t$ on the path,
	whence  $d(u,v,e) \le r$.
	We further have
	\begin{equation*}
		d(u,v,e) \ge d(s,t,e) - d(s,u,e) - d(v,t,e) = d(u,t,e) + d(s,v,e) - d(s,t,e) 
		\ge 2r - \frac{3}{2}r = \frac{r}{2}.
	\end{equation*}
	
	By the properties of $B$, there exists some replacement path from $u$ to $v$ avoiding $e$
	that contains a pivot $z \in B$.
	With the usual argument of swapping parts of the path, we can assume $z$ lies on the middle
	section of $P(s,t,e)$ between $u$ and $v$.
	By construction, we have $\max\{d(s,z,e), d(z,t,e)\} \le r$ so both distances 
	(and corresponding paths) are correctly determined by the $r$-truncated DSO.
	In summary, we get $\ell \le d(s,z,e) + d(z,t,e) = d(s,t,e)$
	and the returned value $\ell$ is indeed the correct replacement distance.
\end{proof}

We are left to prove the final running time of the construction.
Let $r = n^\alpha$ be the cut-off point for the distances at which we start the iterative growing.
We build the \textsf{Core} DSO using the $O(r^2)$ subgraphs, compute unique shortest paths in $G$,
followed by $O(\log n)$ iterations of \textsf{Extend} and \textsf{Fast} invocations,
including the computation of the $B_i$.
First, suppose the graph $G$ is undirected.
The total time is then
\begin{gather*}
	\Otilde(Mn^\omega r^2) + \Otilde(M^{1/2} n^{(\omega + 3)/2}) + \Otilde(M n^{2.5} + n^2 r^2) 
		+ \Otilde(n^2) \cdot \sum_{i=1}^{O(\log n)} \Otilde\!\left( \frac{Mn}{(3/2)^i \nwspace r} \right)\\
		= \Otilde\!\left(\! M^{1/2} \nwspace n^{(\omega + 3)/2} + M n^{2.5} + Mn^\omega r^2 + \frac{Mn^3}{r}\right)\\
		= \Otilde\!\left(M^{1/2} \nwspace n^{(\omega + 3)/2} + M n^{\max\{2.5, \ \omega+2\alpha, \ 3-\alpha \}} \right).
\end{gather*}

\noindent
This is minimum for $\alpha = 1 - (\omega/3)$, where we get a running time of $\Otilde(Mn^{(\omega+6)/3})$.

For directed graphs, determining the best $\alpha$ is a bit more involved.
Recall that $O(n^{\omega(1-\alpha)})$ is the time needed to multiply 
a $n \times n^{1-\alpha}$ matrix with an $n^{1-\alpha} \times n$ matrix.
Computing the \textsf{Core} oracle takes time 
$\Otilde(M n^{\omega(1-\alpha)} \nwspace r^3) = \Otilde(M n^{\omega(1-\alpha)+3\alpha})$.
With a similar calculation as above, we obtain a total preprocessing time of
$\Otilde\!\left(M^{1/2} \nwspace n^{(\omega + 3)/2} + M n^{\max\{2.5, \ \omega(1-\alpha)+3\alpha, \ 3-\alpha \}} \right)$.
This is minimized if $\alpha$ solves the equation $\omega(1\,{-}\,\alpha) = 3-4\alpha$.
Le Gall and Urrutia~\cite{LeGall18RectangularMatrixMultiplication}
gave the current-best estimates for the values of the function $\omega$.
This shows that $1\,{-}\,\alpha$ is between $0.8$ and $0.85$,
and we have $\omega(0.8) \le 2.222256$ as well as $\omega(0.85) \le 2.258317$.
We exploit the fact that $\omega$ is convex~\cite{LottiRomani83RectangularMatrixMultiplication},
giving
\begin{equation*}
	\omega(1-\alpha) \le \frac{(\alpha-0.15) \nwspace \omega(0.8) + (0.2-\alpha) \nwspace \omega(0.85)}{0.85-0.8}
		\le 2.3665 - 0.72122 \nwspace \alpha
\end{equation*}

\noindent
Equating the latter with $3-4\alpha$ yields the estimate $\alpha \le 0.193212$,
which in turn implies a preprocessing time of $\Otilde(Mn^{2.806788})$.

\bibliographystyle{plainurl} 
\bibliography{merger_bib}

\appendix

\section{Omitted Parts of \autoref{sec:FDOs_digraphs}}
\label{app:omitted_proofs}

This appendix contains the omissions that had to be made in \autoref{sec:FDOs_digraphs},
namely the proofs of the space lower bounds for FDOs on directed graphs
as well as the discussion of the eccentricity oracle for multiple failures on DAGs.

\subsection{Space Lower Bounds}
\label{subapp:proofs_omitted_Sec_3}

\lemmalowerbounddigraph*

\begin{proof}
	We encode certain binary matrices in the fault-tolerant diameters 
	of a directed graph $G$, that is, in the values $\diam(G{-}e)$ for edges $e \in E$.
	To simplify the writing, we assume $\sqrt{m}$ is an integer;
	otherwise, every occurrence $\sqrt{m}$ below can be replaced by $\lfloor \sqrt{m} \rfloor$
	without affecting the result.
	Let $X$ be a binary $\sqrt{m} \times \sqrt{m}$ matrix such that for all indices $1 \le i, j \le \sqrt{m}$,
	we have $X[i,1] = X[1,j] = 1$, meaning that the first row and the first column
	are the all-ones vector.
	The class $\G$ contains a graph for each such matrix.
	
	We now describe the construction of $G = G(X)$.
	Assume for now that $D$ is odd, in the end we show the easy adaption for even $D$.
	Define $t=(D{-}1)/2$; by our assumptions, we have $t \ge 1$.
	The vertex set $V$ is partitioned into five sets $V_A,V_B,V_C,V_D,V_R$ 
	with respective cardinalities $t\sqrt{m}$, $\sqrt{m}$, $\sqrt{m}$, $t\sqrt{m}$,
	and $n- (D{+}1)\sqrt{m}$.
	The graph has a total of $(2t{+}2)\sqrt{m} + (n-(D{+}1)\sqrt{m}) = n$ vertices
	(where we used $t = (D{-}1)/2$ and $D < n/\sqrt{m}$).
	
	\begin{itemize}
		\item The vertices in $V_A$ (resp. $V_D$) are denoted by $a_{k,i}$ (resp.  $d_{k,i}$) 
			for all pairs of indices with $1\leq k\leq t$ and $1\leq i\leq \sqrt{m}$. 
		\item The vertices in $V_B$ (resp. $V_C$) are denoted by $b_i$ (resp. $c_i$) 
			for $1\leq i \leq \sqrt{m}$.
		\item The remaining vertices in $V_R$ are denoted by 
			$r_\ell$ for $1 \le \ell \le n-(D{+}1)\sqrt{m}$. 
	\end{itemize}

	\noindent
 	Those vertices are joined by the following edges. 
 	\autoref{fig:lower_bound_digraphs_extensive} provides an overview.
 	
	\begin{itemize}
		\item For each $1\leq i,j\leq N$, the edges $(a_{t,i},c_j)$ and $(b_i, d_{1,j})$ 
			(red edges in \autoref{fig:lower_bound_digraphs_extensive})
			are present in $G$ if and only if $X[i,j]=1$.
			Note that by our assumption on $X$, for each $i,j$, 
			$(a_{t,i}, c_1)$, $(b_i, d_{1,1})$, $(a_{t,1}, c_j)$, 
			and $(b_1, d_{1,j})$ are \emph{always} edges of $G$.
		\item Each vertex $a_{k,i}$, has an out-edge, namely, $(a_{k,i},a_{k+1,i})$ for $k<t$,
			or $(a_{t,i},b_i)$ (if $k=t$).
		\item Each $d_{k,j}$ has an in-edge $(d_{k-1,j},d_{k,j})$ for $k>1$, or
			$(c_j,d_{1,j})$.
		\item From set $V_B$ to $V_C$, graph $G$ has a complete bipartite subgraph,
			that is, $(b_i,c_j)$ is an edge for all $1 \le i,j \le \sqrt{m}$.
		\item Each $x \in V_C\cup V_D$ is connected to each one in 
			$\{a_{1,1}, \dots, a_{1,\sqrt{m}}\} \subseteq V_A$.
		\item Each $r_\ell \in V_R$ participates in the three edges
			$(r_{\ell},c_1)$, $(c_1,r_\ell)$, $(r_{\ell},a_{1,1})$. 
	\end{itemize}

	\begin{figure}[t]
	\centering
	\includegraphics[scale=.88]{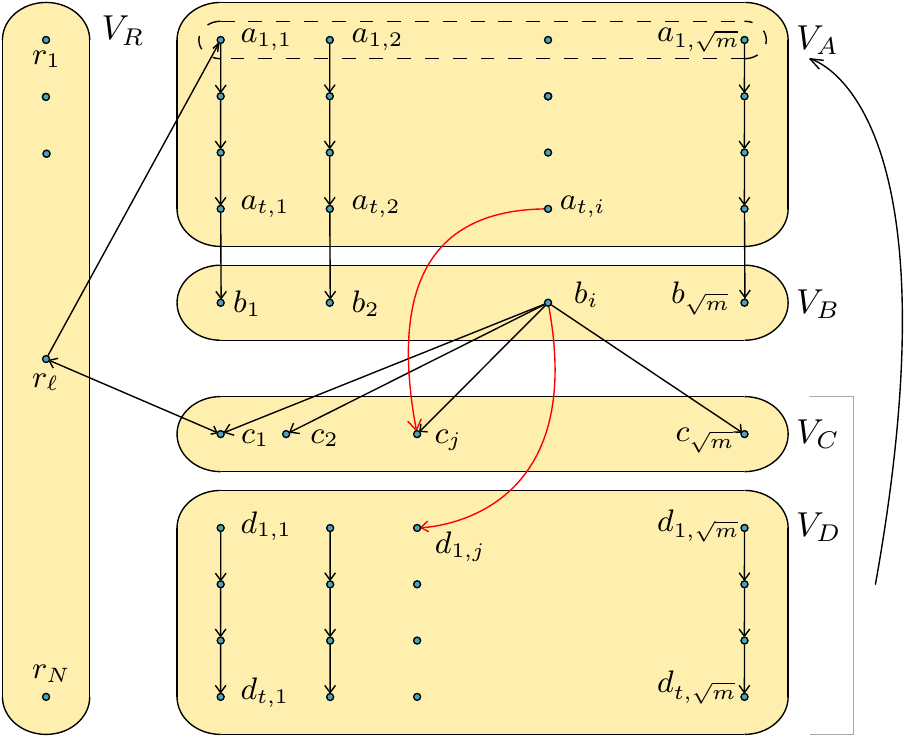}
	\caption{An overview of the space lower bound for graphs with diameter $D$ (\autoref{lemma:lower_bound_digraph}).
		We have $t = (D\,{-}\,1)/2$ and $N = n - (D\,{+}\,1) \sqrt{m}$
		to ensure $n$ vertices and $\Theta(m)$ edges.
		The red edges $(a_{t,i}, c_j)$ and $(b_i,d_{1,j})$ are present if and only if 
		the matrix entry $X[i,j] = 1$.
		If the edge $(b_i,c_j)$ fails, the red edges keep the diameter at $D$,
		in their absence the diameter increases to $3(D \,{-}\, 1)/2$.}
	\label{fig:lower_bound_digraphs_extensive}
	\end{figure}

	It is easy to see that $G$ has $\Theta(n + (\sqrt{m})^2) = \Theta(m)$ edges.
	We argue that its diameter is $2t + 1 = D$.
	First consider the subgraph induced by $V{\setminus}V_R$.
	It has $2t+2$ layers of vertices connected in a circular manner,
	so we can get from $V_A \cup V_B$ to any vertex in $V_C \cup V_D$
	using a path of length at most $2t+1$; 
	the opposite direction is even shorter.
	Moving between any two vertices within $V_A \cup V_B$ can be done by first reaching a vertex in
	$V_C$ in at most $t$ steps, with the last step being $(a_{t,i}, c_1)$ or $(b_i,c_j)$ (for any $j$).
	After that, one can reenter $V_A \cup V_B$ and reach the target
	in at most $t+1$ additional steps;
	the reasoning for moving within $V_C \cup V_D$ is similar.
	Hence, the diameter of the subgraph $G[V{\setminus}V_R]$ is at most $2t+1$.
	Any vertex $r_\ell \in V_R$ is connected to the two ``hub nodes'' $c_1$ and $a_{1,1}$
	allowing it to reach all other vertices fast, 
	while at the same time preventing $r_\ell$ to be used as a shortcut.
	In more detail, any other $r_{\ell'}$, with $\ell' \neq \ell$, can be reached via the path 
	$(r_\ell, a_{1,1}, \dots a_{t,1}, c_1, r_{\ell'})$ in $t+2 \le 2t+1$ steps.
	A very similar route $(r_\ell, \dots a_{t,1}, c_j, \dots, d_{k,j})$ of length at most $2t+1$
	leads to any vertex in $V_C \cup V_D$.
	Finally, the vertices in $V_A \cup V_B$ are reached
	using a path of the form $(r_\ell, c_1, a_{1,i}, \dots, a_{t,i}, b_i)$ of length at most $t+2$.
	The same arguments show that any vertex in $V{\setminus}V_R$ has distance at most
	$t+1$ to $c_1$, one more edge gives the path to any $r_\ell$.

	We have seen $\diam(G) \le D = 2t+1$,
	the fact that the diameter is also not smaller than $D$ 
	is witnessed by the distance $d(a_{1,i},b_j)$
	for any $i \neq j$ (using $m \ge 4$ and thus $\sqrt{m} \ge 2$).
	One can reach $b_j$ only via $a_{1,j}$, taking $t$ steps,
	and a shortest path from $a_{1,i}$ to $a_{1,j}$
	is $(a_{1,i}, \dots, a_{t,i}, c_{1}, a_{1,j})$
	of length $t+1$.
	 
	Now consider the edge $e_{ij}=(b_i,c_j)$ for $j \neq 1$.
	Suppose first that $X[i,j] = 1$ holds, whence the edges
	$(a_{t,i},c_j)$ and $(b_i, d_{1,j})$ are present.
	The only shortest path in $G$ that crucially depends on $e_{ij}$ is $(b_i,c_j)$ itself. 
	To see this, let $P$ be any other path that uses $e_{ij}$.
	It either also goes through vertex $a_{t,i}$
	(resp.\ $d_{1,j}$)  and the subpath $P[a_{t,i} .. c_j]$ (resp.\ $P[b_i .. d_{1,j}]$)
	can be shortened to the single edge $(a_{t,i},c_j)$ (to edge $(b_i, d_{1,j})$), 
	so $P$ is not a shortest path;
	or it has $(b_i, c_j, a_{1,\ell})$ as a subpath for some $1 \le \ell \le \sqrt{m}$, 
	which can be substituted by $(b_i,c_{j'},a_{1,\ell})$ for any $j' \neq j$. 
	In the graph $G{-}e_{ij}$, the distance from $b_i$ to $c_j$ is at most $t+2 \le D$
	using the path $(b_i, d_{1,j}, a_{1,1}, \dots, a_{t,1}, c_j)$.
	Both things together show that failing the edge $e_{ij}$
	leaves the fault-tolerant diameter at $D$ in case $X[i,j] = 1$.

	We show that $\diam(G- e_{ij})$ is at least $3t+1 = (3D{-}1)/2$ if $X[i,j]=0$,
	that is, if neither $(a_{t,i},c_j)$ nor $(b_i, d_{1,j})$ are present.
	Indeed, we then have $d(a_{1,i},d_{t,j},e_{ij})\geq 3t+1$.
	The only way to reach $d_{t,j}$ in $G-e_{ij}$ is via vertex $c_j$ 
	for which we have to go through $a_{t,i'}$ and therefore through $a_{1,i'}$
	for some $i'$ with $X[i',j] = 1$ (that is, $i' \neq i$;
	we can safely assume $i' = 1$).
	It requires $2t$ steps from $a_{1,1}$ to get to $d_{t,j}$,
	and the $a_{1,i}$-$a_{1,1}$-path has length $t+1$ as above.

	Any data structure that can distinguish whether the diameter stays at $D$
	or rises to $(3D{-}1)/2$ must differ by at least $1$ bit for any two graphs in $\G$.
	Since there are $2^{(\sqrt{m}-1)^2}$ admissible matrices $X$/graphs in $\G$,
	the $\Omega(m)$ bound follows.
	
	Only small changes are needed to make this work for even values of $D$.
	Set $t = (D/2) - 1$.
	Instead of connecting $V_C \cup V_D$ directly with each vertex in 
	$\{a_{1,1}, \dots, a_{1,\sqrt{m}}\}$, we introduce an intermediate vertex $v$
	(taken from the reservoir $V_R$ to keep the total number of vertices at $n$).
	We add the edges $(x,v)$ and $(v,a_{1,i})$ for each $x \in V_C \cup V_D$
	and $1 \le i \le \sqrt{m}$.
	The same arguments as above, replacing subpaths of the form $(c_1, a_{1,j})$
	by $(c_1,v,a_{1,j})$ show that the diameter of $G$ is $2t+2 = D$,
	remains there if the edge $e_{ij}$ fails but $(a_{t,i},c_j)$ and $(b_i,d_{1,j})$
	are present, and rises to $3t+2 = (3D/2)-1$ otherwise.
\end{proof}

\begin{figure}[t]
\centering
\includegraphics[scale=.52]{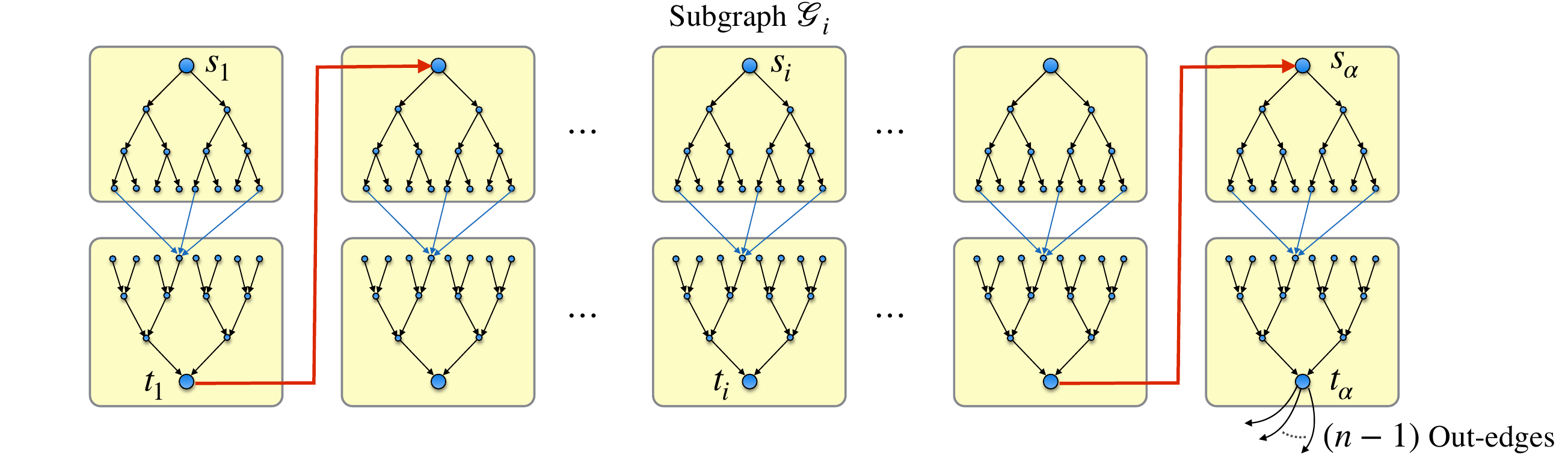}
\caption{Illustration of space lower bound for fault-tolerant diameter oracles 
	with finite stretch (\autoref{lemma:connectivity_lower_bound}).
	All binary trees have height $f$ and are directed downwards.
	For the upper trees, this means away from the root $s_i$,
	and towards the root $t_i$ in the lower ones.
	Edges between the subtrees in $G_i$ depend on the binary matrix $X_i$.
	The out-edges of $t_\alpha$ wrap around to \emph{all} other vertices. 
}
\label{figure:diam_fin_str}
\end{figure}

\connectivitylowerbound*

\begin{proof}
Let $N=2^f$, $K=2^{f+2}-2$. 
W.l.o.g.\ we assume $n$ is such that $\alpha=n/K$ is integral. 
We give an incompressibility argument by encoding $\alpha$ binary matrices of dimension $N$ in the fault-tolerant diameters of an $n$-vertex directed graph $G$ prone to $2f$ failures. 

Consider any set of $\alpha$ non-null binary matrices $X_1,\ldots,X_\alpha$ of size $N\times N$.
For each $X_i$ construct a subgraph $G_i$ on $K$ vertices as follows.
Take two complete binary trees $T_{i,\ell},T_{i,r}$ of height $f$ rooted at nodes, say, $s_{i},t_{i}$.
Let $\ell_{i,1},\ldots,\ell_{i,2^f}$ be the leaves of tree $T_{i,\ell}$,
and $r_{i,1},\ldots,r_{i,2^f}$ be the leaves of $T_{i,r}$.
We assume edges in tree $T_{i,\ell}$ are directed away from $s_i$.
Conversely, edges in $T_{i,r}$ are pointed towards $t_i$.
Next add edge $(\ell_{i,j_1},r_{i,j_2})$ to $G_i$ if and only if $X_i[j_1,j_2]=1$.
Since $X_i$ is non-null, $t_i$ is reachable from $s_i$ in $G_i$.
The main graph $G$ on vertex set $V = \bigcup_{i=1}^\alpha V(G_i)$ is constructed
by adding a directed edge from $t_{i-1}$ to $s_i$, for each $2 \le i \le \alpha$,
and adding an edge from $t_\alpha$ to each vertex $v \in V$.
See \autoref{figure:diam_fin_str} for an illustration.

We will show next how to extract the matrices $X_i$ by querying the data structure 
with subsets $F\subseteq E$ of cardinality $2f$
on whether or not the diameter graph $\G-F$ is finite.
Consider an index $1 \le i \le \alpha$ and a pair $(j_1,j_2)\in [1,N]\times [1,N]$.
Let $P_\ell$ be the path from the root $s_i$ to the leave $\ell_{i,j_1}$ in the tree $T_{i,\ell}$
and $P_r$ be the path from $r_{i,j_2}$ to $t_i$ in $T_{i,r}$.
Let $F_\ell$ be the set of all edges $(u,v)\in E(T_{i,\ell})$ such that $u\in V(P_{\ell})$ and $v$ is the child of $u$ \emph{not} on $P_{\ell}$.
Define the set $F_r$ analogously with respect to path $P_r$ in $T_{i,r}$.
Set $F=F_\ell\cup F_r$, it contains $2f$ edges of $G$ in total.

Let $P \circ Q$ denote the concatenation of paths $P$ and $Q$.
If $X_i[j_1,j_2]=1$ holds, then in the subgraph $G_i-F$, the unique path from $s_i$ to $t_i$ 
is $P_\ell\circ (\ell_{i,j_1},r_{i,j_2})\circ P_r$.
This in turn implies that $G-F$ is strongly connected (that is, it has finite diameter).
Moreover, if $X_i[j_1,j_2]=0$, then there is no path from $s_i$ to $t_i$ in the subgraph $G_i-F$,
whence $\diam(G-F) = \infty$.

This shows that by just querying whether or not the diameter of graph is finite on occurrence of $2f$ failures, we can extract all the $\alpha$ non-null binary matrices of size $N\times N$.
There are $(2^{N^2}-1)^\alpha$ many such combinations,
with $N = 2^f$ and $\alpha = \Omega(n/2^f)$ this gives the lower bound of $\Omega(2^f n)$ bits.
\end{proof}

\subsection{Eccentricity Oracles for Multiple Failures on DAGs.}
\label{subapp:ecc_oracle}

We study here the problem of single-source eccentricity approximation on directed acyclic graphs. Our focus in on the single-source setting as at most one vertex has a finite eccentricity in a DAG. The objective is to preprocess an $n$ vertex edge-weighted DAG $G=(V,E,s \in V)$ to compute an oracle of polynomial-in-$n$ size that for any given query set $F$ of $f$ edge failures, reports the eccentricity of $s$ in $G-F$ upto a  stretch factor at most $f$, in $O(f)$ time.

Let $T$ be a directed shortest-path tree rooted at $s$, and $\ecc_G(s)$ be $\max_{v\in V}d(s,v)$. For $(x,y)\in E$, let $wt^*(x,y)=d(s,x)+wt(x,y)-d(s,y)$. Observe that for each $e$ in $G$, if $e$ is a tree edge then $wt^*(e)=0$, and if $e$ is a non-tree edge then $wt^*(e)\geq 0$. For each $v\in V$, let $L_{in}(v)$ be the set of first $(f+1)$ non-tree in-edges of $v$ in $G$ having least weight with respect to function $wt^*$, sorted in increasing order of weights.

In the next lemma, we state a property of the $wt^*$ function.

\begin{lemma}~\label{lemma:new-weight}
For any path $P$ in $G$ from $s$ to a node $v\in T$, we have: $wt^*(P) = wt(P) - d(s,v)$.
\end{lemma}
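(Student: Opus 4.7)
The plan is to proceed by a direct telescoping argument on the edges of $P$. Write $P$ as the sequence of vertices $s = v_0, v_1, \ldots, v_k = v$ and expand $wt^*(P) = \sum_{i=0}^{k-1} wt^*(v_i, v_{i+1})$ using the definition $wt^*(x,y) = d(s,x) + wt(x,y) - d(s,y)$.

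Substituting gives
\begin{equation*}
	wt^*(P) = \sum_{i=0}^{k-1} \bigl[ d(s,v_i) + wt(v_i, v_{i+1}) - d(s, v_{i+1}) \bigr].
\end{equation*}
The $wt(v_i, v_{i+1})$ contributions sum to $wt(P)$, while the distance terms form a telescoping sum that collapses to $d(s, v_0) - d(s, v_k) = d(s,s) - d(s,v) = -d(s,v)$. Combining these yields the claimed identity $wt^*(P) = wt(P) - d(s,v)$.

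There is no real obstacle here; the statement is essentially the observation that $wt^*$ is obtained from $wt$ by subtracting the potential function $d(s, \cdot)$ (a standard reweighting trick à la Johnson), so any path weight under $wt^*$ differs from the path weight under $wt$ by the net change in potential between the endpoints. The only thing to verify is that the expansion above is applied to a path that genuinely starts at $s$, so that $d(s, v_0) = 0$; this is given in the hypothesis. Note also that the lemma does not require $P$ to be a shortest path, nor does it use that $v \in T$ in any essential way beyond the fact that $d(s,v)$ is finite, which is guaranteed here since $v$ lies in the shortest-path tree rooted at $s$.
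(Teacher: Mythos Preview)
Your proof is correct and is essentially identical to the paper's own argument: both write $P=(s=v_0,\ldots,v_\ell=v)$, expand $wt^*(P)$ edge by edge using the definition of $wt^*$, and observe that the $d(s,\cdot)$ terms telescope to $-d(s,v)$. Your additional remark identifying this as a Johnson-style potential reweighting is accurate and a nice piece of context.
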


\begin{proof}
Let $P$ be equal to $(s=v_0,\ldots,v_\ell=v)$. So $wt^*(P) =~\textstyle\sum_{i=1}^{\ell} \big( wt(v_{i-1},v_i) + d(s,v_{i-1}) - d(s,v_i) \big)$ which is identical to $\textstyle\sum_{i=1}^{\ell} wt(v_{i-1},v_i) - d(s,v)=~ wt(P) - d(s,v)$.
\end{proof}

Now consider a query set $F$ of $f$ failing edges in $G$. Our query algorithm works as follows: First split $F$ into sets $F_0$ and $F_1$ such that $F_0$ contains the tree edges, and $F_1$ contains non-tree edges. Next, for each $(x,y)\in F_0$, compute an edge  $(\tilde x,y)$ of least weight in $L_{in}(y)\setminus F$, if it exists. If $(\tilde x,y)$ exists then set $\phi(x,y)$ to $wt^*(\tilde x,y)$, otherwise set $\phi(x,y)$ to infinity. The total time to compute all the edges $(\tilde x,y)$ is $O(|F_0|+|F_1|)$ since the lists $L_{in}$ are sorted, and we require to scan a list until we find an edge not lying in $F_1$. As no two lists contains the same failing edge, the total time complexity of scanning the lists is $O(|F|)=O(f)$.

Finally, estimate $\widehat{\ecc}_{G-F}(s)$ as $$\ecc_G(s)+\sum_{(x,y)\in F_{0}}\phi(x,y).~$$

\begin{lemma}
$\ecc_G(s)+\sum_{(x,y)\in F_{0}}\phi(x,y)$ lies in the range $[\ecc_{G-F}(s), ~(f+1)\cdot \ecc_{G-F}(s)]$.
\end{lemma}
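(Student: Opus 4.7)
The plan is to prove the two inequalities separately. For the upper bound $\widehat{\ecc}_{G-F}(s) \leq (f{+}1)\,\ecc_{G-F}(s)$, I will reduce it to showing that each summand $\phi(x,y)$ with $(x,y) \in F_0$ satisfies $\phi(x,y) \leq \ecc_{G-F}(s)$; combined with $|F_0| \leq f$ and the trivial inequality $\ecc_G(s) \leq \ecc_{G-F}(s)$, this is enough. For the lower bound $\widehat{\ecc}_{G-F}(s) \geq \ecc_{G-F}(s)$, I will exhibit, for every vertex $v$, an explicit $s$-$v$-path in $G{-}F$ whose weight does not exceed $\widehat{\ecc}_{G-F}(s)$. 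Throughout, we may assume $\widehat{\ecc}_{G-F}(s) < \infty$; otherwise some $\phi(x,y)$ is infinite, which forces all non-tree in-edges of $y$ to lie in $F_1$, making $y$ unreachable in $G{-}F$ and $\ecc_{G-F}(s) = \infty$, and both claims become trivial.

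The upper bound argument fixes an edge $(x,y) \in F_0$ and considers a shortest $s$-$y$-path $P$ in $G{-}F$, whose last edge $(u,y)$ avoids $F$. Since the unique tree in-edge of $y$ is the failing edge $(x,y)$, the edge $(u,y)$ must be non-tree. Applying \autoref{lemma:new-weight} to $P[s..u]$ followed by $(u,y)$, together with $d_G(s,u) \leq d_{G-F}(s,u)$, yields $wt^*(u,y) \leq d_{G-F}(s,y) \leq \ecc_{G-F}(s)$. It remains to show $\phi(x,y) \leq wt^*(u,y)$: if $(u,y)$ lies in $L_{in}(y)$ this is immediate from the choice of $\tilde{x}$ as the $wt^*$-minimizer in $L_{in}(y) \setminus F$; otherwise $L_{in}(y)$ contains $f{+}1$ non-tree in-edges each of $wt^*$-value at most $wt^*(u,y)$, and since $|F_1| \leq f$ at least one of them survives in $L_{in}(y) \setminus F$ and dominates $\phi(x,y)$.

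For the lower bound, I would recursively build a walk $P_v$ in $G{-}F$ by setting $P_s = \emptyset$ and, for $v \neq s$ with tree parent $u$, putting $P_v = P_u \cdot (u,v)$ whenever $(u,v) \notin F_0$ and $P_v = P_{\tilde{u}} \cdot (\tilde{u},v)$ otherwise. By construction, every tree edge on $P_v$ avoids $F_0$ and every non-tree edge is an alternate from $L_{in} \setminus F$, so $P_v$ lives in $G{-}F$. Each recursive step moves the current endpoint strictly earlier in a fixed topological order of $G$, which both guarantees termination at $s$ and, crucially, implies that the sequence of vertices $v = v_0, v_1, v_2, \ldots$ visited by the recursion contains no repetitions. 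Invoking \autoref{lemma:new-weight} on $P_v$ gives
\[
wt(P_v) \;=\; d_G(s,v) \;+\; \sum_{\substack{(a,b) \in P_v \\ (a,b) \text{ non-tree}}} wt^*(a,b),
\]
and each non-tree term in the sum equals $\phi(u_i, v_i)$ for some visited $v_i$ that is the $y$-endpoint of a failing tree edge in $F_0$. Because the $v_i$'s are pairwise distinct, the sum is bounded by $\sum_{(x,y) \in F_0}\phi(x,y)$, so $wt(P_v) \leq \ecc_G(s) + \sum_{F_0}\phi(x,y) = \widehat{\ecc}_{G-F}(s)$, as desired.

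The main obstacle lies in this final accounting step of the lower bound: without care, the recursion might seem to revisit the same failing tree edge along several branches and collect its $\phi$-contribution repeatedly, producing an extra factor of $f$ that would break the required inequality $wt(P_v) \leq \widehat{\ecc}_{G-F}(s)$. The strict topological-order decrease of the recursion is precisely what rules this out, since it assigns each non-tree edge of $P_v$ to a distinct $y$-endpoint of a failing tree edge, keeping the detour budget within the tight additive cost $\sum_{F_0}\phi(x,y)$.
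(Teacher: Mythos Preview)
Your proof is correct and follows essentially the same approach as the paper. For the upper bound both arguments bound each individual $\phi(x,y)$ by $\ecc_{G-F}(s)$; for the lower bound the paper processes the failing tree edges forward in the topological order of their heads and inductively enlarges a reachable set, whereas you trace a path backward from $v$, but in both presentations the crux is that acyclicity prevents any detour cost $\phi(x,y)$ from being charged more than once.
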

\begin{proof}
Let us suppose $F_0=((x_1,y_1),\ldots,(x_k,y_k))$, where $y_i$ appears before $y_{i+1}$ in the topological ordering of $G$, for $i<k$. Let $x_0=s$, and for $i\in[0,k]$, $W_i$ be the set of all vertices in graph $T-F$ reachable from at least one vertex in the set $\{x_0,\ldots, x_i\}$. So, $W_k$ must contain all the vertices lying in $T$.

We first show that for each $i\geq 0$, and $w\in W_i$, there is a path from $s$ to $w$ in $G-F$ whose weight with respect to $wt^*$ is at most $\sum_{j\leq i}\phi(x_j,y_j)$. The claim trivially holds for index $i=0$, as $wt^*$ of tree edges is zero. We will now prove the claim for index $i$, assuming that it holds for index $i-1$. Consider the edge $(x_i,y_i)$. Let $(r_i,y_i)$ be edge in $L_{in}(y_i)\setminus F$ satisfying $wt^*(r_i,y_i)=\phi(x_i,y_i)$. Due to acyclicity of $G$ it follows that $r_i$ lies in the set $W_{i-1}$. By our assumption that claim is true for index $i-1$, there exists a path, say $P$, from $s$ to $r_i\in W_{i-1}$ satisfying $wt^*(P)\leq \phi(x_1,y_1)+\cdots+\phi(x_{i-1},y_{i-1})$. So, $wt^*(P\circ (x_i,y_i))\leq \phi(x_1,y_1)+\cdots+\phi(x_{i},y_{i})$. Now note that all vertices in $W_i\setminus W_{i-1}$ are reachable from $y_i$ by edges whose $wt^*$ is zero. This implies that our claims holds for index $i$.

So, for each $v\in T$, there is a path from $s$ to $v$ in $G-F$, say $P_v$, satisfying $wt^*(P_v)$ at most $\sum_{i\leq k}\phi(x_i,y_i)$. By \autoref{lemma:new-weight}, $wt(P_v)$, for $v\in T$, is at most $\ecc_G(s)+wt^*(P_v)\leq \ecc_G(s)+\sum_{i\leq k}\phi(x_i,y_i)$. Therefore, we have $\ecc_G(s)+\sum_{(x,y)\in F_{0}}\phi(x,y)$ is lower bounded by $\ecc_{G-F}(s)$.

Now we show that for each edge $(x,y)\in F_0$, $\phi(x,y)\leq \ecc_{G-F}(s)$. Recall that $\phi(x,y)=wt^*(\tilde x,y)=d(s,\tilde x)+wt(\tilde x,y)-d(s,y)$, and $(\tilde x,y)$ is the edge in set $L_{in}(y)\setminus F$ (and hence also in $\inedges(y)\setminus F$) having minimum weight according to function $wt^*$. Since $G$ is acyclic, we have that $\phi(x,y)+d(s,y)$ is $s$ to $y$ distance in graph $G-((x,y)\cup F_1)$. Thus, $\phi(x,y)\leq \ecc_{G-((x,y)\cup F_1)} (s)\leq \ecc_{G-F}(s)$, for each $(x,y)\in F_0$.  As $|F_0|\leq f$ and $\ecc_{G}(s) \leq \ecc_{G-F}(s)$, we can conclude that the estimate is at most $(f+1)\cdot \ecc_{G-F}(s)$.
\end{proof}

We thus obtain the following result for DAGs.

\eccDAGupperbound*

\section{Derandomizing the SSRP Algorithm of Chechik \& Magen}
\label{app:omissions_derad_existing}

We present another adaption of our technique to derandomize the $\Otilde(m\sqrt{n}+n^2)$-time
Single-Source Replacement Path algorithm for directed graphs by Chechik and Magen~\cite{ChMa19}.
Previous works~\cite{BCFS21SingleSourceDSO_ESA} have already derandomized an SSRP algorithm
for \emph{undirected} graphs by Chechik and Cohen~\cite{ChechikCohen19SSRP_SODA}.
While the two SSRP algorithms share some similarities,
their derandomization needs different ideas.
For simplicitly, assume we are tasked with the same objective as in \autoref{lemma:large_diam},
that is, hitting all replacement paths of a given minimum length
(the actual requirements differ slightly from that and are discussed below).
The replacement paths in undirected graphs are structurally much simpler.
Afek et al.~\cite{Afek02RestorationbyPathConcatenation_journal} showed that, for a single failure,
on any replacement path $P = P(s,t,e)$ there is a vertex $q$ 
such that the subpaths $P[s..q]$ and $P[q..t]$ are shortest paths in the original graph $G$.
The exact vertex $q$ is hard to compute in general, but Bilò et al.~\cite{BCFS21SingleSourceDSO_ESA} 
devised a method to efficiently hit all (sufficiently long) paths of the form $P[s..q]$
and then use the resulting hitting set to also find vertices on the the second parts $P[q..t]$.
However, this approach breaks down in directed graphs because the Afek et al.~\cite{Afek02RestorationbyPathConcatenation_journal} result fails in this case.
Instead, we use the hierarchies built by the HDPH algorithm to derandomize
the SSRP computation also for directed graphs.

Let $T_s$ be a shortest-paths tree of rooted in $s$,
let $t$ be a {\em balanced separator} of $T_s$ such that 
separates $T_s$ into two edge disjoint
sub-trees $S$ and $T$ such that $n/3 \le |V(S)|, |V(T)| \le 2n/3$, and let 
$P$ be the path from $s$ to $t$ in $T_s$. 
A replacement path $P(s,x,e)$ is called {\em jumping} if it uses some vertex $u$ such that $u \in P$ and $u$ is after the edge failure $e$ in the path $P$. A replacement path that is
not jumping will be called {\em departing}.
If the edge $e$ is among the last $2\sqrt{n}$ edges of $P$, then the replacement paths in $G-e$ are found by computing Dijkstra in $G - e$ in total time of $\Otilde(m\sqrt{n}+n^2)$.
In the rest of this section, assume that $e$ is not among the last $2\sqrt{n}$ edges of $P$.

The only randomization used in the SSRP algorithm of Chechik and Magen is choosing a random hitting set that hits a specific subpath of the replacement path in Case 1.2 or Case 4.3 of their algorithm.
Roughly speaking, Case 1.2 handles departing replacement paths $P(s,x,e)$ such that $x \in V(T)$, and Case 4.3 handles jumping replacement paths $P(s,x,e)$, with $x \in V(S)$, that passes through some vertices of $T$. In both cases, let $v_i$ be the last vertex of $P$ before the edge failure that is also  contained in $P(s,x,e)$ and let $v_j$ be the first vertex of $T$ that is also contained in $P(s,x,e)$. By the choice of the failing edge $e$, the subpath of $P(s,x,e)$ from $v_i$ to $v_j$ has a length of at least $2\sqrt{n}$. More precisely, if the failing edge $e$ is not among the last $z$ edges of $P$, then the subpath of $P(s,x,e)$ from $v_i$ to $v_j$ has length of at least $z$.
Both Cases 1.2 and 4.3 are solved by randomly selecting hitting sets that hit the subpath of $P(s,x,e)$ from $v_i$ to $v_j$. In the following we discuss how Chechik and Magen's (randomized) algorithm deals with both cases and then explain how to derandomize the algorithm without asymptotically affecting the time complexity by more than a polylogarithmic factor.

Chechik and Magen's algorithm considers a logarithmic number of sub-paths $\{P_k\}$, where $P_k$ is the sub-path of $P$ induced by
the vertices $\{v \in V(P) : 2^{k+1} \sqrt{n} \ge d(v,t) \ge 2^k \sqrt{n} \}$. $P_0$ is defined as the sub-path of $P$ induced by the last $2\sqrt{n}$ vertices of $P$. For the sake of simplifying, we assume that $\sqrt{n}$ is a power of 2, i.e., $\sqrt{n}=2^h$, for some integer $h$. This implies that $n=2^{2h}$ is also a power of 2. Note that the set of paths $\{P_k\}$ is an edge disjoint partition of $P$, and that $|P_k| = O(2^k\sqrt{n})$. 
For every index $k \ne 0$, Chechik and Magen's algorithm samples a random set $B_k$ of size $\Otilde(\frac{\sqrt{n}}{2^k})$ using a standard sampling technique, i.e., by adding at random every vertex to $B_k$ with probability $\frac{C \cdot \log n}{2^k\sqrt{n}}$ for large enough constant $C>0$.
As already observed, when $e \in P_k$ for $k \ge 0$, the sub-path of the replacement path $P(s,x,e)$ that we want to hit has a length of at least $2^k\sqrt{n}$. As a consequence, such a sub-path is hit w.h.p. by the set $B_k$. 

We introduce the following notation. 
Let $k \in [0, \log n]$,
denote by $P = P(s,t) = (s = v_0, \ldots, v_\ell = t)$, and let $G_P = G - E(P)$. We say that a pair of vertices $u \in V(P), v \in V(T)$ is called \emph{$k$-relevant} if the following conditions hold.


\begin{definition}\label{def:k-relevant-pair}
A pair of vertices $u \in V(P), v \in V(T)$ is called \emph{$k$-relevant} if $d_{G_P}(u,v) > 2^{k+1}$
and for every vertex $u'$ that appears before $u$ along $P$ it holds that $d_{G_P}(u',v) > d_{G_P}(u,v)$.
\end{definition}

We observe that, for every failing edge $e$ at a distance of at least $2^k$ from $t$,  every replacement path $P(s,x,e)$ that belongs to Case 1.2 or Case 4.3 contains a $k$-relevant path.
Thus, it is sufficient to deterministically compute a set of pivots $B_k$
such that, for every pair of $k$-relevant vertices $u \in V(P), v \in V(T)$, there exists a pivot $b \in B_k$ with 
$d_{G_P}(u,v) = d_{G_P}(u,b) + d_{G_P}(b,v)$. We compute $B_k$ for every $k \in [0,\log n ]$ as described and we set $B_k=B_{h+k}$ for every non-negative integer $k$ such that $2^{k+h}=2^k\sqrt{n} \leq n$.
In \autoref{sec:ssrp-hitting-set-short}, we describe how to compute all the set $B_0,\dots,B_{1/2 \log n}$ so as we can set $B_0=B_{1/2 \log n}$.
In \autoref{sec:ssrp-hitting-set-long}, we describe how to compute such a set of pivots $B_k=B_{k-h}$ given the set of pivots $B_{k-1}=B_{k-h-1}$ for every $k \in (1/2 \log n, \log n]$.

\subsection{Computing the Initial Hitting Sets $B_0$}
\label{sec:ssrp-hitting-set-short}

Consider the following assignment of weights $\omega$ to edges of $G$. We
assign weight $\varepsilon$ for every edge $e$ on the path $P$, and
weight $1$ for all the other edges where $\varepsilon$ is a small number such that $0 < \varepsilon < 1/n$. We
define a graph $G^w = (G,w)$ as the weighted graph $G$ with edge
weights $\omega$.
We define for every $0 \le i \le \ell$ the graph $G_i = G
\setminus \{v_{i+1}, \ldots, v_\ell\}$ and the path $\Pi_i = P \setminus
\{v_{i+1}, \ldots, v_\ell\}$. We define the graph $G^w_i = (G_i, w)$ as
the weighted graph $G_i$ with edge weights $\omega$.

The algorithm computes the graph $G^w$ by simply taking $G$ and setting all edge
weights of $P(s,t)$ to be $\varepsilon$ (for some small $\varepsilon$ such
that $\varepsilon < 1/n$) and all other edge weights to be 1.
The algorithm then removes the vertices of
$P$ from $G^w$ one after the other (starting from the vertex
that is closest to $t$). Loosely speaking, after each vertex is removed, the algorithm computes the distances from $s$ to all the other vertices in the
current graph. In each such iteration, the algorithm adds to
$V^w_{2^k}$ all vertices $v$ such that their distance from $s$ in the current graph is
between $2^k$ and $2^k+1$,
and it adds to 
$P^w_{2^k}$ the last $2^k$ edges of the shortest path from $s$ to every such vertex $v$ in the current graph.
Denote the set of vertices and paths obtained at the end of the algorithm by $V_{2^k} = V^w_{2^k}$ and $P_{2^k} = P^w_{2^k}$, respectively.
Unfortunately, we cannot afford running Dijkstra after the removal
of every vertex of $P$ as there might be $n$ vertices on
$P$. To overcome this issue, the algorithm only maintains the vertices that are at
distance at most $2^k +1$ from $s$.
In addition, we observe that
to compute the SSSP from $s$ in the graph after the removal of a
vertex $v_i$ we only need to spend time on nodes that their shortest path from $s$ uses the removed
vertex. It is not difficult to prove (for example, as done in \cite{AlonChechikCohen19CombinatorialRP}) that for these
nodes their distance from $s$ rounded down to the closest integer must increase by at least 1
as a result of the removal of the vertex. Hence, for every
node we spend time on it in at most $2^k+1$ iterations until its distance from $s$ is bigger than $2^k+1$. A similar runtime analysis as the decremental SSSP algorithm of King \cite{King99FullyDynamicAPSP} and Lemma 27 in  \cite{AlonChechikCohen19CombinatorialRP} shows that the runtime of this algorithm is $\Otilde(m2^k)$, as we scan the neighbours of a vertex $v \in G - V(P)$ at most $2^k$ times.
To obtain the set of pivots $B_k$ we use the greedy algorithm on the set of paths $P_{2^k}$, that is, we set $B_k  = \textsf{GreedyPivotSelection}(P_{2^k})$. 
As $P_{2^k}$ contains $O(n)$ paths of length $2^k$ then according to \autoref{lemma:greedy-correctness} it holds that $|B_k|$ is $\Otilde(n/2^k)$, and the greedy algorithm takes $\Otilde(n 2^k)$ 
time. We have proved the following lemma.

\begin{lemma} \label{lemma:ssrp-hitting-set-short}
Let $k \in [0, 1/2 \log n]$, 
the above algorithm computes in $\Otilde(m 2^k)=\Otilde(m\sqrt{n})$ time a hitting set $B_k$
such that for every pair of $k$-relevant pairs $(u,v)$ with $u \in V(P), v \in V(T)$, there exists a pivot $b \in B_k$ with 
$d_{G_P}(u,v) = d_{G_P}(u,b) + d_{G_P}(b,v)$ and $d_{G_P}(u,b) \leq 2^{k+1}$. 
\end{lemma}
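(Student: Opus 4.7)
The plan is to verify three separate properties of the construction: the runtime bound, the size bound on $B_k$, and the covering guarantee for $k$-relevant pairs. The weight assignment is the workhorse throughout: because $P$-edges cost $\varepsilon<1/n$ while all non-$P$ edges cost $1$, any shortest $s$-$z$ path in $G^w_i$ uses $P$ as long as possible and then leaves it along a $G_P$-shortest path to $z$. In particular, when such a distance falls in $[2^k,2^k+1]$, the off-$P$ portion contributes an integer in this range, hence exactly $2^k$ non-$P$ edges.

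For runtime and size, I would adapt the King-style amortization used in decremental SSSP. Removing $v_i$ only forces relaxations at vertices whose prior shortest $s$-path used $v_i$; since the $P$-edges contribute strictly less than $1$ to any distance and the available $P$-edges only shrink over time, each such relaxation must strictly increase the integer part of the affected vertex's distance. Once a vertex's distance surpasses the cut-off $2^k+1$, it is discarded for the rest of the execution (and can never re-enter since distances only grow), so each vertex is touched $O(2^k)$ times and each touch costs $O(\deg(v))$ work, yielding $\Otilde(m\cdot 2^k)$ overall. Invoking \autoref{lemma:greedy-correctness} with $L=2^k$ on the $O(n)$ suffix paths in $P_{2^k}$ then gives $|B_k|=\Otilde(n/2^k)$ and an additional $\Otilde(n\cdot 2^k)$ greedy runtime, which is absorbed.

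For correctness, fix a $k$-relevant pair $(u,v)$ with $u=v_i$, take a shortest $u$-$v$ path $Q$ in $G_P$, and let $w$ be the vertex on $Q$ at $G_P$-distance exactly $2^k$ from $u$ (which exists since $|Q|>2^{k+1}$). The key substep is an exchange argument establishing that $d_{G_P}(v_j,w)>2^k$ for every $j<i$: otherwise concatenating a $v_j$-$w$ path of length $\le 2^k$ with the suffix $Q[w..v]$ of length $d_{G_P}(u,v)-2^k$ would produce a $v_j$-$v$ walk in $G_P$ of length at most $d_{G_P}(u,v)$, contradicting the strict inequality $d_{G_P}(v_j,v)>d_{G_P}(u,v)$ in the $k$-relevance definition. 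Consequently, every shortest $s$-$w$ path in $G^w_i$ must exit $P$ at $u$ and use exactly $2^k$ non-$P$ edges, giving $d_{G^w_i}(s,w)=i\varepsilon+2^k\in[2^k,2^k+1]$. Thus $w$ enters $V_{2^k}$ and the final $2^k$ edges of its shortest $s$-path, which form a shortest $u$-$w$ path in $G_P$, enter $P_{2^k}$. The greedy step then supplies a pivot $b\in B_k$ on this path; since $b$ lies on a shortest $u$-$w$ path and $Q[w..v]$ is a shortest $w$-$v$ path in $G_P$, we obtain $d_{G_P}(u,b)+d_{G_P}(b,v)=d_{G_P}(u,v)$ and $d_{G_P}(u,b)\le 2^k\le 2^{k+1}$, as required.

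The main obstacle is the exchange argument: it crucially leverages the strict inequality in the $k$-relevance definition together with the optimality of $Q$, both of which hold by assumption. A secondary delicate point is verifying that $i\varepsilon<1$ keeps $d_{G^w_i}(s,w)$ strictly inside the window $[2^k,2^k+1]$, which follows from $\varepsilon<1/n$ and $i\le n$, and that the amortized potential is compatible with the cut-off at $2^k+1$, which is immediate since deletions only increase distances so discarded vertices are never revisited.
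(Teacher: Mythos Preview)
Your proposal is correct and takes essentially the same approach as the paper: the King-style amortization for the $\Otilde(m\,2^k)$ runtime, and \textsf{GreedyPivotSelection} on the $O(n)$ recorded length-$2^k$ suffixes for the size bound. Your exchange argument forcing the departure point in $G^w_i$ to be $u=v_i$ (so the recorded suffix for $w$ is a shortest $u$-$w$ path in $G_P$) spells out the hitting property that the paper's text leaves implicit.
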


\begin{corollary} \label{cor:ssrp-hitting-set-B_0}
We can compute in $\Otilde(m\sqrt{n})$ time a hitting set $B_0$ such that for every pair of $(1/2 \log n)$-relevant pairs $(u,v)$ with $u \in V(P), v \in V(T)$, there exists a pivot $b \in B_0$ with 
$d_{G_P}(u,v) = d_{G_P}(u,b) + d_{G_P}(b,v)$ and $d_{G_P}(u,b) \leq 2\sqrt{n}$. 
\end{corollary}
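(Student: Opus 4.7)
The plan is to obtain the corollary as an immediate specialization of \Cref{lemma:ssrp-hitting-set-short}. Concretely, I would instantiate that lemma at the largest admissible index $k=\tfrac{1}{2}\log n$ (so that $2^{k}=\sqrt{n}$ lies in the stated range $[0,\tfrac{1}{2}\log n]$) and then simply \emph{define} $B_0:=B_{(1/2)\log n}$, as anticipated by the paragraph preceding the corollary.

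For this choice, the preprocessing step of \Cref{lemma:ssrp-hitting-set-short} runs in time $\Otilde(m\cdot 2^{k})=\Otilde(m\sqrt{n})$, matching the claimed bound. The resulting set $B_{(1/2)\log n}$ has the property that for every $k$-relevant pair $(u,v)\in V(P)\times V(T)$ there is a pivot $b\in B_{(1/2)\log n}$ with $d_{G_P}(u,v)=d_{G_P}(u,b)+d_{G_P}(b,v)$ and $d_{G_P}(u,b)\le 2^{k+1}=2\sqrt{n}$. Since a $(1/2\log n)$-relevant pair is exactly a $k$-relevant pair for this $k$ (by Definition~\ref{def:k-relevant-pair}), the two conditions required by the corollary follow verbatim.

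The only minor subtlety is to make sure $\tfrac{1}{2}\log n$ is a legal integer index for \Cref{lemma:ssrp-hitting-set-short}; as already done in \Cref{sec:ssrp-hitting-set-short}, I would assume $\sqrt{n}=2^{h}$ (otherwise round up to $\lceil\tfrac{1}{2}\log n\rceil$, which at worst doubles the distance bound in the conclusion and leaves the running time at $\Otilde(m\sqrt{n})$). There is no real obstacle here: the whole content of the corollary is to name $B_{(1/2)\log n}$ as $B_0$ so that the subsequent inductive construction in \Cref{sec:ssrp-hitting-set-long} has a clean base case indexed at $k=0$ relative to the rescaled hierarchy $B_k=B_{k+h}$ introduced just above this corollary.
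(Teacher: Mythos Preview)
Your proposal is correct and matches the paper's approach exactly: the corollary is obtained by instantiating \Cref{lemma:ssrp-hitting-set-short} at $k=\tfrac{1}{2}\log n$ (using the standing assumption $\sqrt{n}=2^h$) and renaming the resulting set $B_{(1/2)\log n}$ as $B_0$. The paper states this derivation in the text preceding the corollary rather than as a separate proof, so there is nothing more to add.
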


\subsection{Computing the set of pivots $B_k$, given the set of pivots $B_{k-1}$, for large $k$} \label{sec:ssrp-hitting-set-long}
In this section we describe how to compute, for every $k \in (1/2 \log n, \log n]$, the set of pivots $B_k$, given the set of pivots that was computed in the previous iteration $B_{k-1}$. We recall that $B_{k-h}=B_k$.
In other words, we describe how to obtain the current hitting set using the previous ones.
Recall that the property that we require here from the set of pivots $B_k$, for every $k \in (1/2 \log n, \log n]$, is that for every pair of $k$-relevant vertices $u \in V(P), v \in V(T)$ there exists a pivot $b \in B_k$ such that $d_{G_P}(u,v) = d_{G_P}(u,b) + d_{G_P}(b,v)$. 

We use the following algorithm to compute the set of pivots $B_k$ from the set of pivots $B_{k-1}$. 
Initialize ${\cal P}_k = \emptyset$.
For every vertex $u \in B_{k-1}$ compute a BFS tree $\T_u$ rooted in $u$ in the graph $G_P$. 
Then, for every $v \in V$ such that $d_{\T_u}(u,v) = 2^{k}$ add the path $P_{\T_u}(u,v)$ to ${\cal P}_k$. 
Finally, use the greedy algorithm to compute $B_k \gets \textsf{GreedyPivotSelection}(\mathcal{P}_k)$.
We prove the correctness and analyze the runtime of the above procedure in the following lemma.

\begin{lemma}\label{lemma:inductive_step_pivots_chechik_magen}
Let $k \in (1/2 \log n, \log n]$, assume that $B_{k-1}$ is a set of $\Otilde(n/2^{k-1})$ vertices such that for every $(k-1)$-relevant pair $(u,v)$ with $u \in V(P)$ and $v \in V(T)$, there exists a pivot $b \in B_{k-1}$ with $d_{G_P}(u,v) = d_{G_P}(u,b) + d_{G_P}(b,v)$ such that $d_{G_P}(u,b) \leq 2^{k}$. 

Then the above procedure computes in $\Otilde(\frac{nm}{2^k} + n^2) = \Otilde(m\sqrt{n} + n^2)$ time a set of $\Otilde(n/2^k)$ pivots $B_k$   
such that for every $k$-relevant pair $(u,v)$ with $u \in V(P)$ and $v \in V(T)$, there exists a pivot $b \in B_k$ with 
$d_{G_P}(u,v) = d_{G_P}(u,b) + d_{G_P}(b,v)$ and such that $d_{G_P}(u,b) \leq 2^{k+1}$. 
\end{lemma}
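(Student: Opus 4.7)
The proof splits into three parts: running time, size of $B_k$, and correctness. The plan is to drive everything through the inductive hypothesis on $B_{k-1}$, using BFS trees from each previous pivot to materialise the paths that $\textsf{GreedyPivotSelection}$ must hit.

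For the running time, I would first bound the cost of the BFS stage. Since $|B_{k-1}| = \Otilde(n/2^{k-1})$ and each BFS in $G_P$ runs in $O(m+n)$ time, the total cost is $\Otilde((m+n)\,n/2^k) = \Otilde(nm/2^k + n^2/2^k)$; using $k > \tfrac{1}{2}\log n$, i.e.\ $2^k > \sqrt{n}$, this collapses to $\Otilde(m\sqrt n + n^2)$. Each tree contributes at most one path per vertex $v$ at BFS-distance exactly $2^k$ from the root, so $|\mathcal{P}_k| = \Otilde(n \cdot n/2^{k-1}) = \Otilde(n^2/2^k)$ paths, each containing $2^k$ vertices. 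Invoking \autoref{lemma:greedy-correctness} with $q = \Otilde(n^2/2^k)$ and $L = 2^k$ then yields $B_k$ in $\Otilde(qL + n^2/L) = \Otilde(n^2)$ time and with $|B_k| = \Otilde(n/L) = \Otilde(n/2^k)$, absorbed in the overall bound.

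For correctness, let $(u,v)$ with $u \in V(P)$ and $v \in V(T)$ be $k$-relevant, so $d_{G_P}(u,v) > 2^{k+1}$. Because $2^{k+1} > 2^{k}$ and the dominance condition from \autoref{def:k-relevant-pair} depends only on $v$ and $u$ (not on the threshold), $(u,v)$ is also $(k\,{-}\,1)$-relevant. By the induction hypothesis, there is a pivot $b' \in B_{k-1}$ with $d_{G_P}(u,v) = d_{G_P}(u,b') + d_{G_P}(b',v)$ and $d_{G_P}(u,b') \leq 2^k$. Then
$$d_{G_P}(b',v) = d_{G_P}(u,v) - d_{G_P}(u,b') > 2^{k+1} - 2^k = 2^k,$$
so the unique BFS path from $b'$ to $v$ in $\mathcal{T}_{b'}$ passes through some vertex $w$ with $d_{\mathcal{T}_{b'}}(b',w) = 2^k$, and the prefix $P_{\mathcal{T}_{b'}}(b',w)$ is added to $\mathcal{P}_k$. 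By the greedy step, $B_k$ contains some vertex $b$ lying on that prefix, hence on a shortest $b'$-to-$v$ path in $G_P$; in particular $d_{G_P}(b',v) = d_{G_P}(b',b) + d_{G_P}(b,v)$ and $d_{G_P}(b',b) \leq 2^k$.

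It remains to glue the two halves. The triangle inequality gives $d_{G_P}(u,b) \leq d_{G_P}(u,b') + d_{G_P}(b',b) \leq 2^{k+1}$, and $d_{G_P}(u,b) + d_{G_P}(b,v) \geq d_{G_P}(u,v) = d_{G_P}(u,b') + d_{G_P}(b',b) + d_{G_P}(b,v)$ forces $d_{G_P}(u,b) = d_{G_P}(u,b') + d_{G_P}(b',b)$, whence $d_{G_P}(u,v) = d_{G_P}(u,b) + d_{G_P}(b,v)$, as required. The main subtlety I anticipate is precisely this splicing argument: one must verify that the bounds $d_{G_P}(u,b') \leq 2^k$ and the BFS constraint $d_{G_P}(b',b) = 2^k$ compose to the desired $d_{G_P}(u,b) \leq 2^{k+1}$ without any shortcutting that would invalidate the additive decomposition, and that the strict inequality $d_{G_P}(b',v) > 2^k$ is exactly what guarantees the existence of $w$ at distance exactly $2^k$ in $\mathcal{T}_{b'}$. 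Everything else is a bookkeeping of the greedy-pivot bounds from \autoref{lemma:greedy-correctness}.
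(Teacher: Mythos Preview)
Your proposal is correct and follows essentially the same approach as the paper: both argue that a $k$-relevant pair is also $(k{-}1)$-relevant, apply the inductive hypothesis to get a pivot $b'\in B_{k-1}$ with $d_{G_P}(b',v)\ge 2^k$, use the BFS tree from $b'$ to place a path of length $2^k$ in $\mathcal{P}_k$, and then show the greedy pivot $b\in B_k$ lies on a shortest $u$--$v$ path within the required distance; your running-time and size analysis via \autoref{lemma:greedy-correctness} is likewise identical. Your splicing argument (forcing $d_{G_P}(u,b)=d_{G_P}(u,b')+d_{G_P}(b',b)$ via the double triangle inequality) is in fact more explicit than the paper's, which simply asserts the additive decomposition.
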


\begin{proof}
We first prove that $B_k = \Otilde(n/2^k)$ and that the runtime of the above procedure is $\Otilde(\frac{nm}{2^k} + n^2)$. 
Note that $\mathcal{P}_k$ contains $\Otilde(|B_{k-1}| \cdot n) = \Otilde(n^2/2^k)$ paths, each path of length exactly $2^{k-1}$, thus by \autoref{lemma:greedy-correctness} it follows that $B_k = \Otilde(n/2^k)$
and the runtime of the $\textsf{GreedyPivotSelection}(\mathcal{P}_k)$ algorithm is $\Otilde(n^2)$. The above procedure also computes a BFS tree rooted in every vertex $b \in B_{k-1}$, this takes additional $\Otilde(mn/2^k)$ time, so the total runtime of the above procedure is $\Otilde(mn/2^k + n^2)$.

Next, we prove the correctness of the above procedure. We assume that 
for every pair of $(k-1)$-relevant pairs of vertices $u \in V(P), v \in V(T)$, there exists a pivot $b \in B_{k-1}$ with 
$d_{G_P}(u,v) = d_{G_P}(u,b) + d_{G_P}(b,v)$ and such that $d_{G_P}(u,b) \leq 2^{k}$,
and we prove that for every pair of $k$-relevant vertices $u \in V(P), v \in V(T)$ there exists a pivot $b \in B_k$ with 
$d_{G_P}(u,v) = d_{G_P}(u,b) + d_{G_P}(b,v)$ and such that $d_{G_P}(u,b) \leq 2^{k+1}$. 
Let $(u,v) \in V(P)\times V(T)$ be a $k$-relevant pair, and let $v'$ be the $2^{k}-\mathrm{th}$ vertex along $P_{G_P}(u,v)$.

Clearly, $(u,v)$ is also a $(k-1)$-relevant pair. Therefore, by induction, there exists a pivot $b \in B_{k-1}$ with 
$d_{G_P}(u,v) = d_{G_P}(u,b) + d_{G_P}(b,v)$ and such that $d_{G_P}(u,b) \leq 2^{k}$. As $d_{G_P}(b,v) \geq d_{G_P}(u,v)-d_{G_P}(u,b) \geq 2^{k+1}-2^k=2^k$, we have that $B_k$ hits w.h.p. the $(u,v)$ $k$-relevant path as well with a pivot $b' \in B_k$ such that $d_{G_P}(b,b') \leq 2^k$. Therefore, $d_{G_P}(u,v)=d_{G_P}(u,b')+d_{G_P}(b',v)$ and $d_{G_P}(u,b')=d_{G_P}(u,b)+d_{G_P}(b',b) \leq 2^k+2^k \leq 2^{k+1}$.
\end{proof}

\autoref{cor:ssrp-hitting-set-B_0} implies that we can compute $B_0$ in $\Otilde(m\sqrt{n}+n^2)$ time by a deterministic algorithm. As $B_0$ satisfies the hypothesis of \autoref{lemma:inductive_step_pivots_chechik_magen}, by induction on $k$, we have that $B_k$ can be computed out of $B_{k-1}$ in $\Otilde(m\sqrt{n}+n^2)$ time. This completes the derandomization of Chechik and Magen's algorithm. In summary, we proved the following theorem.

\singlesourcereplacementpaths*

\end{document}